\newtheorem{theorem}{Theorem}
\newtheorem{lemma}[theorem]{Lemma}
\newtheorem{proposition}[theorem]{Proposition}
\newtheorem{corollary}[theorem]{Corollary}
\theoremstyle{definition}
\newtheorem{definition}[theorem]{Definition}
\newtheorem{remark}[theorem]{Remark}
\newtheorem{condition}[theorem]{Condition}
\numberwithin{theorem}{section}
\crefname{chapter}{Chapter}{Chapters}
\crefname{section}{Section}{Sections}
\crefname{subsection}{Section}{Sections}
\crefname{theorem}{Theorem}{Theorems}
\crefname{lemma}{Lemma}{Lemmas}
\crefname{proposition}{Proposition}{Propositions}
\crefname{corollary}{Corollary}{Corollaries}
\crefname{definition}{Definition}{Definitions}
\crefname{rem}{Remark}{Remarks}
\crefname{cond}{Condition}{Conditions}
\crefname{equation}{}{}
\crefname{enumi}{}{}
\newcommand{\wto}{\rightharpoonup}
\newcommand{\f}{f^\alpha}
\newcommand{\g}{g^\alpha}
\renewcommand{\a}{a^\alpha}
\newcommand{\e}{e_\alpha}
\renewcommand{\v}{\widehat{v}_\alpha}
\newcommand{\vo}{v_\alpha^0}
\newcommand{\gamm}{\gamma_\alpha}
\newcommand{\w}{w_\alpha}
\newcommand{\psia}{\psi^\alpha}
\newcommand{\taua}{\tau^\alpha}
\newcommand{\R}{\mathbb{R}}
\newcommand{\Y}{\mathcal Y_{\mathrm{pd}}^\alpha}
\newcommand{\Na}{\mathcal N_\alpha}
\newcommand{\suma}{\sum_{\alpha=1}^N}
\let\originalleft\left
\let\originalright\right
\renewcommand{\left}{\mathopen{}\mathclose\bgroup\originalleft}
\renewcommand{\right}{\aftergroup\egroup\originalright}
\DeclareMathOperator{\supp}{supp}
\let\div\relax
\DeclareMathOperator{\div}{div}
\DeclareMathOperator{\curl}{curl}
\newcommand{\inte}{\mathrm{int}}
\DeclareMathOperator{\dist}{dist}
\DeclareMathOperator{\signum}{sign}
\newcommand{\sign}[1]{\signum\left(#1\right)}
\newcommand{\kin}{{\alpha\mathrm{kin}}}
\newcommand{\loc}{\mathrm{loc}}
\begin{document}

\title{Optimal Control of a Hot Plasma}
\author{Jörg Weber\\ \textit{University of Bayreuth, 95440 Bayreuth, Bavaria, Germany}\\ \texttt{Joerg.Weber@uni-bayreuth.de}}
\date{}
\maketitle
\begin{abstract}
	The time evolution of a collisionless plasma is modeled by the relativistic Vlasov-Maxwell system which couples the Vlasov equation (the transport equation) with the Maxwell equations of electrodynamics. We consider the case that the plasma is located in a bounded container $\Omega\subset\R^3$, for example a fusion reactor. Furthermore, there are external currents, typically in the exterior of the container, that may serve as a control of the plasma if adjusted suitably. We model objects, that are placed in space, via given matrix-valued functions $\varepsilon$ (the permittivity) and $\mu$ (the permeability). A typical aim in fusion plasma physics is to keep the amount of particles hitting $\partial\Omega$ as small as possible (since they damage the reactor wall), while the control costs should not be too exhaustive (to ensure efficiency). This leads to a minimizing problem with a PDE constraint. This problem is analyzed in detail. In particular, we prove existence of minimizers and establish an approach to derive first order optimality conditions.
	
	\vspace*{3mm}
	
	\noindent\textbf{Keywords}$\;$ relativistic Vlasov-Maxwell system, optimal control with PDE constraints, nonlinear partial differential equations
	
	\vspace*{3mm}
	
	\noindent\textbf{MSC Classification:}$\;$ 35Q61, 35Q83, 49J20, 82D10
\end{abstract}

\numberwithin{equation}{section}
\section{Introduction}
If a plasma is sufficiently rarefied or hot, collisions among the plasma particles can be neglected and the time evolution of this plasma can be modeled by the relativistic Vlasov-Maxwell system. In our set-up, this system reads as follows:
\begin{subequations}\renewcommand{\theequation}{VM.\arabic{equation}}\phantomsection\makeatletter\def\@currentlabel{VM}\label{eq:WholeSystem}\makeatother
	\begin{align}
	\partial_t\f+\v\cdot\partial_x\f+\e\left(E+\v\times H\right)\cdot\partial_v\f&=0&\ \mathrm{on}\ \left[0,T\right]\times\Omega\times\R^3,\label{eq:WholeVl}\\
	\f_-&=\a\left(Kf_+\right)&\ \mathrm{on}\ \gamma_{T}^-,\label{eq:WholeBoun}\\
	\f\left(0\right)&=\mathring\f&\ \mathrm{on}\ \Omega\times\R^3,\label{eq:WholeInitf}\\
	\varepsilon\partial_tE-\curl_xH&=-4\pi j&\ \mathrm{on}\ \left[0,T\right]\times\R^3,\label{eq:WholeMax1}\\
	\mu\partial_tH+\curl_xE&=0&\ \mathrm{on}\ \left[0,T\right]\times\R^3,\label{eq:WholeMax2}\\
	\left(E,H\right)\left(0\right)&=\left(\mathring E,\mathring H\right)&\ \mathrm{on}\ \R^3.\label{eq:WholeInitEH}
	\end{align}
\end{subequations}
This set of equations describes the time evolution of a collisionless plasma which consists of $N$ particle species and is located in some bounded domain $\Omega\subset\R^3$. Equations \cref{eq:WholeVl,eq:WholeBoun,eq:WholeInitf} are to hold for each $\alpha=1,\dots,N$, where \cref{eq:WholeVl} is the Vlasov equation for the density $\f=\f\left(t,x,v\right)$ of the $\alpha$-th particle species. These densities depend on time $t\in\left[0,T\right]$, where $T>0$ is some given final time, on position $x\in\Omega$ and momentum $v\in\R^3$, from which the relativistic velocity is computed via
\begin{align*}
\v=\frac{v}{\sqrt{m_\alpha^2+\left|v\right|^2}}.
\end{align*}
The quantities $m_\alpha$ and $\e$ are the rest mass and charge of a particle of the $\alpha$-th species.

Equation \cref{eq:WholeInitf} is the initial condition for $\f$ and \cref{eq:WholeBoun} describes the boundary condition on $\partial\Omega$. Here, $\f_\pm$ can be understood as the restrictions of $\f$ to
\begin{align*}
\gamma_{T}^\pm&\coloneqq\left\{\left(t,x,v\right)\in\left[0,T\right]\times\partial\Omega\times\R^3\mid v\cdot n\left(x\right)\gtrless 0\right\},
\end{align*}
$K$ typically describes reflection on $\partial\Omega$ via
\begin{align*}
\left(Kh\right)\left(t,x,v\right)=h\left(t,x,v-2\left(v\cdot n\left(x\right)\right)\right),
\end{align*}
and $\a=\a\left(t,x,v\right)$ is a factor; pure reflection corresponds to $\a=1$ and absorption to $0\leq\a<1$. Above, $n\left(x\right)$ denotes the outer unit normal of $\partial\Omega$ at $x\in\partial\Omega$.

Equations \cref{eq:WholeMax1,eq:WholeMax2} are the time-evolutionary Maxwell equations for the electromagnetic fields $E=E\left(t,x\right)$, $H=H\left(t,x\right)$ with initial condition \cref{eq:WholeInitEH}. The source term
\begin{align*}
j\coloneqq j^\inte+u\coloneqq\suma\e\int_{\R^3}\v\f\,dv+u
\end{align*}
is the sum of the internal and external currents $j^\inte$ and $u$. Throughout this paper, we assume that $u$ is supported in some bounded set $\Gamma\subset\R^3$ -- for example, $\Gamma$ is the set where the electric coils containing the external currents are placed in space. Note that \cref{eq:WholeMax1,eq:WholeMax2} are imposed on whole space -- in contrast to \cref{eq:WholeVl} -- to allow interaction of the exterior and interior of $\Omega$. Hence, we do not prescribe perfect conductor boundary conditions on $\partial\Omega$ for the electromagnetic fields, but rather model objects that are placed somewhere in space, for example the reactor wall, electric coils, or (almost perfect) superconductors, by $\varepsilon$ (the permittivity) and $\mu$ (the permeability), which are given functions of the space coordinate, take values in the set of symmetric, positive definite matrices of dimension three, and do not depend on time. Thus, we make the following simplifying assumption on the constitutive equations: $D=\varepsilon E$, $B=\mu H$. With this assumption we can model linear, possibly anisotropic materials that stay fixed in time and whose material parameters are independent of the applied electromagnetic fields.

To ensure that the speed of light is constant in $\Omega$ and hence ensure that $\v$ is independent of $x$, we have to assume that $\varepsilon\mu$ is constant in $\Omega$. Throughout this work we use modified Gaussian units such that $\varepsilon=\mu=1$ on $\Omega$ -- thus, the speed of light is $1$ in $\Omega$ -- and all rest masses $m_\alpha$ of a particle of the respective species are at least $1$. Clearly, $\left|\v\right|<1$, that is, the velocity of a particle is bounded by the speed of light (in $\Omega$).

For a more detailed introduction we refer to \cite{Web19}. There, results on existence of weak solutions to \eqref{eq:WholeSystem}, which will be stated later, as well as a proof of the redundancy of the divergence part of Maxwell's equations, which we therefore neglect in this work, in a weak solution concept can be found.

In this paper, we analyze a minimizing problem where a certain objective function shall be driven to a minimum over a certain set of functions satisfying \eqref{eq:WholeSystem} in a weak sense. More precisely, the objective function is
\begin{align*}
\frac{1}{q}\suma \w\left\|\f_+\right\|_{L^q\left(\gamma_{T}^+,d\gamm\right)}^q+\frac{1}{r}\left\|u\right\|_{\mathcal U}^r.
\end{align*}
Here, $2<q<\infty$, $\w>0$, $\mathcal U=W^{1,r}\left(\left[0,T\right]\times\Gamma;\R^3\right)$ with $\frac{4}{3}<r<\infty$, and
\begin{align*}
d\gamm=\left|\v\cdot n\left(x\right)\right|\,dvdS_xdt
\end{align*}
is a surface measure on $\left[0,\infty\right[\times\partial\Omega\times\R^3$, which arises also canonically in the weak formulation of \eqref{eq:WholeSystem}. Thus, the objective function penalizes hits of the particles on $\partial\Omega$ (such hits usually damage the reactor wall) and exhaustive control costs (to ensure efficiency of a reactor). In addition to \eqref{eq:WholeSystem}, it is necessary to impose two inequality constraints, namely \cref{eq:constrfinfty,eq:constrener}.

The paper is organized as follows: In \cref{sec:preliminaries}, we state the weak formulation of \eqref{eq:WholeSystem} and the results of \cite{Web19} which will be needed later. In \cref{sec:problem}, we discuss the minimizing problem in detail. After that, we firstly prove existence of a minimizer in \cref{sec:ExistenceMinimizer}, see \cref{thm:ExistenceMinimizer}. Secondly, we establish an approach to derive first order optimality conditions for a minimizer in \cref{sec:weakrevisited,sec:firstorder}. To this end, the one main idea is to write the weak form of \eqref{eq:WholeSystem} equivalently as an identity
\begin{align*}
\mathcal G\left(\left(\f,\f_+\right)_\alpha,E,H,u\right)=0\mbox{ in }\Lambda^*,
\end{align*}
where $\mathcal G$ is differentiable and $\Lambda$ is a uniformly convex, reflexive test function space; see \cref{sec:weakrevisited}. The other main idea is to introduce an approximate minimizing problem with a penalizing parameter $s>0$ which is driven to infinity later, see \cref{sec:firstorder}. In particular, we add the differentiable term \begin{align*}
\frac{s}{2}\left\|\mathcal G\left(\left(\f,\f_+\right)_\alpha,E,H,u\right)\right\|_{\Lambda^*}^2
\end{align*}
to the original objective function and abolish the constraint that \eqref{eq:WholeSystem} be solved. For this approximate problem, we prove existence of a minimizer and establish first order optimality conditions, see \cref{thm:ExistenceMinimizerPs,thm:OCs}. After that, we let $s\to\infty$ and prove that, along a suitable sequence, a minimizer of the original problem is obtained in the limit, and the convergence of the controls $u$ is even strong, see \cref{thm:passlimit}. Lastly, we briefly discuss in \cref{sec:finalremarks} how these results can also be verified in case of similar set-ups or different objective functions.

We should point out that the main problem we have to deal with is that existence of global-in-time solutions to \eqref{eq:WholeSystem} is only known in a weak solution concept. It is an open problem whether or not such weak solutions are unique for given $u$. Thus, standard approaches to derive first order optimality conditions via introducing a (preferably differentiable) control-to-state operator can not be applied.

Vlasov-Maxwell systems have been studied extensively. Proofs for global existence of classical -- i.e., $C^1$ -- solutions, in lower dimensional settings can be found in works of Glassey and Schaeffer \cite{GS90,GS97,GS98a,GS98b}. Global well-posedness of the Cauchy problem in three dimensions, however, is a famous open problem. Weak solutions were constructed by Di Perna and Lions \cite{DL89} and -- in case of the presence of boundary conditions, including perfect conductor boundary conditions for the electromagnetic fields -- by Guo \cite{Guo93}. For existence of weak solutions, a compactness result of \cite{DL89} is fundamental, since it handles the nonlinearity in the Vlasov equation, and will also be applied in this paper. For a more detailed overview we refer to Rein \cite{Rei04}.

Controllability of the relativistic Vlasov-Maxwell system in two dimensions has been studied by Glass and Han-Kwan \cite{GH15}. Knopf \cite{Kno18} and later Knopf and the author \cite{KW18} analyzed optimal control problems for the Vlasov-Poisson system, where Maxwell's equations are replaced by the electrostatic Poisson equation. Here, an external magnetic field was considered as a control. Studying control problems with the Vlasov-Poisson system as the governing PDE system enjoys the advantage of having existence and uniqueness of global-in-time classical solutions on hand, due to the results of Pfaffelmoser \cite{Pfa92} and Schaeffer \cite{Sch91}. Also, an optimal control problem for the two-dimensional Vlasov-Maxwell system was considered in \cite{Web18}.

\section{Weak formulation and existence results}\label{sec:preliminaries}
In the following, we state some results of \cite{Web19} which will be needed later.

The space of test functions for \cref{eq:WholeVl,eq:WholeBoun,eq:WholeInitf} is
\begin{align*}
\Psi_{T}&\coloneqq\left\{\psi\in C^\infty\left(\left[0,T\right]\times\overline\Omega\times\R^3\right)\mid\right.\supp\psi\subset\left[0,T\right[\times\overline\Omega\times\R^3\ \mathrm{compact},\phantom{\,\,\}}\nonumber\\
&\omit\hfill$\displaystyle\left.\dist\left(\supp\psi,\gamma_{T}^0\right)>0,\vphantom{\left(\left[0,T\right]\times\overline\Omega\times\R^3\right)}\dist\left(\supp\psi,\left\{0\right\}\times\partial\Omega\times\R^3\right)>0\right\}.$
\end{align*}
On the other hand,
\begin{align*}
\Theta_{T}\coloneqq\left\{\vartheta\in C^\infty\left(\left[0,T\right]\times \R^3;\R^3\right)\mid\supp\vartheta\subset\left[0,T\right[\times\R^3\ \mathrm{compact}\right\}
\end{align*}
is the space of test functions for \cref{eq:WholeMax1,eq:WholeMax2,eq:WholeInitEH}.

\begin{definition}\label{def:WeakSolWholeSys}
	We call a tuple $\left(\left(\f,\f_+\right)_\alpha,E,H,j\right)$ a weak solution of \eqref{eq:WholeSystem} on the time interval $\left[0,T\right]$ if (for all $\alpha$)
	\begin{enumerate}[label=(\roman*),leftmargin=*]
		\item $\f\in L_\loc^1\left(\left[0,T\right]\times\overline\Omega\times\R^3\right)$, $\f_+\in L_\loc^1\left(\gamma_{T}^+,d\gamm\right)$, $E,H,j\in L_\loc^1\left(\left[0,T\right]\times\R^3;\R^3\right)$;
		\item\label{def:WeakSolWholeSysii} for all $\psi\in\Psi_{T}$ it holds that
		\begin{align}\label{eq:Vlasovweak}
		0&=-\int_0^{T}\int_\Omega\int_{\R^3}\left(\partial_t\psi+\v\cdot\partial_x\psi+\e\left(E+\v\times H\right)\cdot\partial_v\psi\right)\f\,dvdxdt\nonumber\\
		&\phantom{=\;}+\int_{\gamma_{T}^+}\f_+\psi\,d\gamm-\int_{\gamma_{T}^-}\a\left(K\f_+\right)\psi\,d\gamm-\int_\Omega\int_{\R^3}\mathring\f\psi\left(0\right)\,dvdx
		\end{align}
		(in particular, especially the integral of $\left(E+\v\times H\right)\f\cdot\partial_v\psi$ is supposed to exist);
		\item\label{def:WeakSolWholeSysiii} for all $\vartheta\in\Theta_{T}$ it holds that
		\begin{subequations}\label{eq:Maxwellweak}
			\begin{align}
			0&=\int_0^{T}\int_{\R^3}\left(\varepsilon E\cdot\partial_t\vartheta-H\cdot\curl_x\vartheta-4\pi j\cdot\vartheta\right)\,dxdt+\int_{\R^3}\varepsilon\mathring{E}\cdot\vartheta\left(0\right)\,dx,\label{eq:Maxwellweak1}\\
			0&=\int_0^{T}\int_{\R^3}\left(\mu H\cdot\partial_t\vartheta+E\cdot\curl_x\vartheta\right)\,dxdt+\int_{\R^3}\mu\mathring{H}\cdot\vartheta\left(0\right)\,dx.
			\end{align}
		\end{subequations}
	\end{enumerate}
\end{definition}
Throughout this paper, we assume the following:
\begin{condition}\label{cond:data} Assume
	\begin{itemize}[leftmargin=*]
		\item $0\leq\mathring\f\in\left(L_\kin^1\cap L^\infty\right)\left(\Omega\times\R^3\right)$, $\mathring\f\not\equiv 0$ for all $\alpha=1,\dots,N$;
		\item $0\leq\a\in L^\infty\left(\gamma_{T}^-\right)$, $\a_0\coloneqq\|\a\|_{L^\infty\left(\gamma_{T}^-\right)}<1$ for all $\alpha=1,\dots,N$;
		\item $\mathring E,\mathring H\in L^2\left(\R^3;\R^3\right)$;
		\item $\varepsilon,\mu\in L^\infty\left(\R^3;\R^{3\times 3}\right)$ such that there are $\sigma,\sigma'>0$ satisfying $\sigma\leq\varepsilon,\mu\leq\sigma'$, and $\varepsilon=\mu=1$ on $\Omega$;
		\item $u\in L^1\left(\left[0,T\right];L^2\left(\Gamma;\R^3\right)\right)$;
		\item $\partial\Omega$ is of class $C^{1,\kappa}$ for some $\kappa>0$.
	\end{itemize}
\end{condition}
Here and in the following, we define for $\alpha=1,\dots,N$
\begin{align*}
L_\kin^1\left(A,da\right)\coloneqq\left\{u\in L^1\left(A,da\right)\mid\int_A\vo\left|u\right|\,da<\infty\right\},
\end{align*}
equipped with the corresponding weighted norm, where $A\subset\R^3\times\R^3$ or $A\subset\R\times\R^3\times\R^3$ is some Borel set equipped with a measure $a$ and the weight $\vo$ is given by
\begin{align*}
\vo\coloneqq\sqrt{m_\alpha^2+\left|v\right|^2}.
\end{align*}
By $m_\alpha\geq 1$ we have $\vo\geq 1$. If $a$ is the Lebesgue measure, we write $L_\kin^1\left(A\right)$. Furthermore, $\sigma\leq\varepsilon\leq\sigma'$ (and likewise for $\mu$) is shorthand for: For almost any $x\in\R^3$ there holds $\sigma\left|E\right|^2\leq\varepsilon E\cdot E\leq\sigma'\left|E\right|^2$ for any $E\in\R^3$.

Note that it is necessary for the following result and for later considerations to assume partially absorbing boundary conditions, i.e., $\a_0<1$.
\begin{proposition}\label{prop:Existence}
	Let \cref{cond:data} hold. Then (for all $\alpha$) there exist functions
	\begin{itemize}[leftmargin=*]
		\item $\f\in L^\infty\left(\left[0,T\right];\left(L_\kin^1\cap L^\infty\right)\left(\Omega\times\R^3\right)\right),\f_+\in\left(L_\kin^1\cap L^\infty\right)\left(\gamma_{T}^+,d\gamm\right)$, all nonnegative,
		\item $\left(E,H\right)\in L^\infty\left(\left[0,T\right];L^2\left(\R^3;\R^6\right)\right)$
	\end{itemize}
	such that $\left(\left(\f,\f_+\right)_\alpha,E,H,j\right)$ is a weak solution of \eqref{eq:WholeSystem} on the time interval $\left[0,T\right]$ in the sense of \cref{def:WeakSolWholeSys}, where
	\begin{align*}
	j&=j^\inte+u=\suma\e\int_{\R^3}\v\f\,dv+u,\quad j^\inte\in L^\infty\left(\left[0,T\right];\left(L^1\cap L^{\frac{4}{3}}\right)\left(\Omega;\R^3\right)\right).
	\end{align*}
	Furthermore, we have the following estimates for $1\leq p\leq\infty$ and $t\in\left[0,T\right]$:\\
	Estimates on $\f,\f_+$:
	\begin{align}
	\left\|\f\right\|_{L^\infty\left(\left[0,t\right];L^p\left(\Omega\times\R^3\right)\right)}&\leq\left\|\mathring \f\right\|_{L^p\left(\Omega\times\R^3\right)},\label{eq:estf1}\\
	\left\|\f_+\right\|_{L^p\left(\gamma_t^+,d\gamm\right)}&\leq\left(1-\a_0\right)^{-\frac{1}{p}}\left\|\mathring \f\right\|_{L^p\left(\Omega\times\R^3\right)},\label{eq:estf+}
	\end{align}
	Energy-like estimate:
	\begin{align}\label{eq:estEner}
	&\left(\suma\left(1-\a_0\right)\int_{\gamma_t^+}\vo \f_+\,d\gamm\vphantom{\left\|\suma\int_\Omega\int_{\R^3}\vo\f\left(\cdot\right)\,dvdx+\frac{\sigma}{8\pi}\left\|\left(E,H\right)\left(\cdot\right)\right\|_{L^2\left(\R^3;\R^6\right)}^2\right\|_{L^\infty\left(\left[0,T\right]\right)}}+\left\|\suma\int_\Omega\int_{\R^3}\vo\f\left(\cdot\right)\,dvdx+\frac{\sigma}{8\pi}\left\|\left(E,H\right)\left(\cdot\right)\right\|_{L^2\left(\R^3;\R^6\right)}^2\right\|_{L^\infty\left(\left[0,t\right]\right)}\right)^{\frac{1}{2}}\nonumber\\
	&\leq\left(\suma\int_\Omega\int_{\R^3}\vo\mathring\f\,dvdx+\frac{\sigma'}{8\pi}\left\|\left(\mathring E,\mathring H\right)\right\|_{L^2\left(\R^3;\R^6\right)}^2\right)^{\frac{1}{2}}+\sqrt{2\pi}\sigma^{-\frac{1}{2}}\left\|u\right\|_{L^1\left(\left[0,t\right];L^2\left(\Gamma;\R^3\right)\right)}.
	\end{align}
\end{proposition}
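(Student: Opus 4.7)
The plan is to construct the weak solution by a regularization-and-compactness scheme in the spirit of DiPerna–Lions \cite{DL89} and Guo \cite{Guo93}, adapted to the present boundary conditions and the $\varepsilon,\mu$-dependent Maxwell equations; a fully detailed implementation is carried out in \cite{Web19}.

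First I would set up an iterative approximation: mollify the initial data, the boundary factor $\a$, the external current $u$, and introduce a small parameter that both truncates velocities and smooths the Lorentz force $e_\alpha(E+\widehat v_\alpha\times H)$ (for instance by replacing $E,H$ by $\rho_\delta\ast E,\rho_\delta\ast H$). For given fields $(E^n,H^n)\in L^\infty([0,T];L^2(\R^3;\R^6))$, solve the linear Vlasov equation by characteristics to produce $\f^{n+1}$ with the prescribed initial value and with inflow trace $\a(K\f_+^{n+1})$ on $\gamma_T^-$ (this is well-posed because the reflection map is an involution on $\partial\Omega\times\R^3$ and $\a_0<1$ so a Banach fixed-point argument in the trace space works). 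Then feed the current $j^{n+1}=\suma\e\int\widehat v_\alpha\f^{n+1}\,dv+u$ into the Maxwell system \cref{eq:WholeMax1,eq:WholeMax2} for $(E^{n+1},H^{n+1})$; the symmetric positive definite $\varepsilon,\mu$ render this a linear symmetric hyperbolic system, uniquely solvable in the energy space.

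Next I would establish the a priori bounds \cref{eq:estf1,eq:estf+,eq:estEner} uniformly along the iteration. The $L^p$ estimate \eqref{eq:estf1} comes from the renormalization $\partial_t|\f|^p+\widehat v_\alpha\cdot\partial_x|\f|^p+\e(E+\widehat v_\alpha\times H)\cdot\partial_v|\f|^p=0$: integrating over $\Omega\times\R^3$, the Lorentz term is divergence-free in $v$ and drops out, while the boundary contribution yields $\int_{\gamma_T^+}|\f_+|^p\,d\gamma_\alpha-\int_{\gamma_T^-}|\a|^p|K\f_+|^p\,d\gamma_\alpha\geq(1-\a_0^p)\int_{\gamma_T^+}|\f_+|^p\,d\gamma_\alpha$ by the measure-preserving property of $K$, which gives both \eqref{eq:estf1} and \eqref{eq:estf+}. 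The energy-like estimate \eqref{eq:estEner} is obtained by multiplying \eqref{eq:WholeVl} by $\vo$, using $\partial_v\vo=\widehat v_\alpha$ so that the magnetic part of the Lorentz force is lost and the electric part produces $\e E\cdot\widehat v_\alpha\f$, summing over $\alpha$ and integrating to obtain a term $E\cdot j^\inte$; then testing the Maxwell system against $(E,H)$ produces the Poynting identity whose right-hand side contains $-4\pi j\cdot E=-4\pi E\cdot j^\inte-4\pi E\cdot u$, so the internal-current contributions cancel, and $|E\cdot u|$ is controlled on $\Gamma$ by Cauchy–Schwarz and $\|\cdot\|_{L^2}\leq\sigma^{-1/2}\|\cdot\|_{\varepsilon}$, yielding the quadratic Grönwall-type inequality from which \eqref{eq:estEner} follows.

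The decisive step is to pass to the limit $n\to\infty$ (and the regularization parameter to zero). Weak-$\ast$ compactness in the spaces of the uniform bounds gives candidates $(\f,E,H)$ and weak limit traces $\f_+$ in the appropriate weighted $L^p$; linearity immediately allows passage to the limit in \eqref{eq:Maxwellweak} and in the linear parts of \eqref{eq:Vlasovweak}. The main obstacle is the nonlinear term $(E+\widehat v_\alpha\times H)\f$: here I would invoke the DiPerna–Lions velocity averaging lemma, which, using the transport structure of the Vlasov equation together with the $L^\infty$-in-$t$ bounds on $\f$ and the $L^\infty(L^2)$ bounds on $(E,H)$, gives strong $L^1_{\mathrm{loc}}$ compactness of the $v$-moments $\int\widehat v_\alpha\f^{n}\,dv$ and more generally of $\int\f^{n}\varphi(v)\,dv$, enough to identify $\int(E+\widehat v_\alpha\times H)\f\cdot\partial_v\psi\,dv$ as the weak limit of the corresponding prelimit expressions. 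The preservation of the nonlinear reflection boundary condition $\f_-=\a(K\f_+)$ in the limit requires the trace part of the DiPerna–Lions / Mischler theory: the uniform bound on $\f_+$ in $L_\kin^1(\gamma_T^+,d\gamma_\alpha)$ together with velocity averaging near the boundary yields a weak trace that commutes with $K$ in the limit. Having established the limit equations and reproducing \eqref{eq:estf1}–\eqref{eq:estEner} by weak lower semicontinuity completes the proof.
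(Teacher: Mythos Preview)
The paper does not give an in-text proof of this proposition; it is quoted as a result of \cite{Web19} (see the opening sentence of \cref{sec:preliminaries}). Your sketch is consistent with that reference and with the standard DiPerna--Lions/Guo strategy the paper itself relies on elsewhere (cf.\ \cref{lma:MomentumAveraging} and its use in the proof of \cref{thm:ExistenceMinimizer}): iterate linear Vlasov and linear Maxwell steps, derive the $L^p$ and energy bounds \eqref{eq:estf1}--\eqref{eq:estEner} from renormalization and the Poynting identity with the $j^\inte$-cancellation you describe, and pass to the limit using velocity averaging for the nonlinear Lorentz term. Since you already defer the details to \cite{Web19}, your proposal matches the paper's treatment.
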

Note that above we used the somewhat sloppy notation
\begin{align*}
L^\infty\left(\left[0,T\right];L^\infty\left(\Omega\times\R^3\right)\right)\coloneqq L^\infty\left(\left[0,T\right]\times\Omega\times\R^3\right).
\end{align*}

\section{The minimizing problem}\label{sec:problem}
In a fusion reactor, the main goal is to keep the particles away from the boundary of their container $\Omega$ since particles hitting the boundary damage the material there due to the usually very hot temperature of the plasma. Therefore, it is reasonable to penalize these hits, which, for example, can be achieved by taking some $L^q$-norms of the $\f_+$ as a part of the objective function that shall be minimized in an optimal control problem. We want to control the hits on the boundary by suitably adjusting the external current $u$.

Apart from driving the amount of hits on the boundary to a minimum, one does not want too exhaustive control costs so that the fusion reactor may have a good efficiency. Thus, we have to add some norm of $u$ to the objective function. Thereby, we also gain a mathematical advantage since then the objective function is coercive in $u$, which means that along a minimizing sequence this $u$-norm is bounded so that we can hope for being able to extract a weakly convergent subsequence whose weak limit is a candidate for an optimal control.

Conversely, as there are no terms including $\f$, $E$, and $H$ in the objective function, we do not have coercivity in these state variables only because of the objective function. But there is still the PDE system \eqref{eq:WholeSystem} as a constraint. Having a look at \cref{eq:estf1,eq:estf+,eq:estEner} we see that these estimates yield uniform boundedness of $\f$, $E$, $H$ in various norms along a minimizing sequence. Unfortunately, we can only verify these estimates for weak solutions that are constructed as in \cite{Web19}. For general weak solutions of \eqref{eq:WholeSystem} in the sense of \cref{def:WeakSolWholeSys} these estimates may be violated as we do not know a way to prove these generally. Since in the classical context these estimates are easily heuristically established by exploiting an energy balance and the measure preserving nature of the characteristic flow of the Vlasov equation, it is reasonable to restrict ourselves to weak solutions that satisfy at least part of, maybe slightly weaker versions of \cref{eq:estf1,eq:estf+,eq:estEner}.

To put our hands on the fields, only \cref{eq:estEner} is helpful. Considering this estimate along a minimizing, weakly converging sequence and trying to pass to the limit in this estimate, we see that the right-hand side, including some norm of $u$, has to be weakly continuous. But if we endow the control space with the norm that appears in \cref{eq:estEner}, i.e., the $L^1\left(\left[0,T\right];L^2\left(\Gamma;\R^3\right)\right)$-norm, this weak continuity will not hold. Consequently, we consider a control space that is compactly embedded in $L^1\left(\left[0,T\right];L^2\left(\Gamma;\R^3\right)\right)$, so that the right-hand side of \cref{eq:estEner} converges even if the controls only converge weakly in this new stronger control space. This will be made clear in the proof of \cref{thm:ExistenceMinimizer}.

Altogether, we arrive at the following minimizing problem:
\begin{align}\tag{P}\label{prob:P}
\left.
\begin{aligned}
\min_{y\in\mathcal Y,u\in\mathcal U}\quad&\mathcal J\left(y,u\right)=\frac{1}{q}\suma \w\left\|\f_+\right\|_{L^q\left(\gamma_{T}^+,d\gamm\right)}^q+\frac{1}{r}\left\|u\right\|_{\mathcal U}^r,\\
\mathrm{s.t.}\quad&\left(\left(\f,\f_+\right)_\alpha,E,H,j^\inte+u\right)\ \mathrm{solves}\ \eqref{eq:WholeSystem},\\
&\cref{eq:constrfinfty}\ \mathrm{and}\ \cref{eq:constrener}\ \mathrm{hold}
\end{aligned}
\right\}
\end{align}
where the additional constraints are
\begin{align}
&0\leq\f\leq\left\|\mathring\f\right\|_{L^\infty\left(\Omega\times\R^3\right)}\ \mathrm{a.e.},\alpha=1,\dots,N,\label{eq:constrfinfty}\\
&\suma\int_0^{T}\int_\Omega\int_{\R^3}\vo\f\,dvdxdt+\frac{\sigma}{8\pi}\int_0^{T}\int_{\R^3}\left(\left|E\right|^2+\left|H\right|^2\right)\,dxdt\nonumber\\
&\leq 2T\suma\int_\Omega\int_{\R^3}\vo\mathring\f\,dvdx+\frac{T\sigma'}{4\pi}\left\|\left(\mathring E,\mathring H\right)\right\|_{L^2\left(\R^3;\R^6\right)}^2+2\pi T^2\sigma^{-1}\left\|u\right\|_{L^2\left(\left[0,T\right]\times\Gamma;\R^3\right)}^2\label{eq:constrener}\\
&\eqqcolon\mathcal I\left(u\right).\nonumber
\end{align}
\begin{remark}
	We explain the formulation of the minimizing problem in detail:
	\begin{itemize}[leftmargin=*]
		\item We consider the optimal control problem on a finite time interval, i.e., $T<\infty$.
		\item For ease of notation, we have abbreviated
		\begin{align*}
		y=\left(\left(\f,\f_+\right)_\alpha,E,H\right),\quad
		\mathcal Y=\left(\bigtimes_{\alpha=1}^N\mathcal Y_{\mathrm{pd}}^\alpha\times L^q\left(\gamma_{T}^+,d\gamm\right)\right)\times L^2\left(\left[0,T\right]\times\R^3;\R^3\right)^2,
		\end{align*}
		where $1<q<\infty$ is fixed and
		\begin{align*}
		\Y\coloneqq\left\{f\in\left(L_\kin^1\cap L^\infty\right)\left(\left[0,T\right]\times\Omega\times\R^3\right)\mid\partial_tf+\v\cdot\partial_xf\in L^2\left(\left[0,T\right]\times\Omega;H^{-1}\left(\R^3\right)\right)\right\}.
		\end{align*}
		In the following, we denote
		\begin{align*}
		\Na\left(f\right)\coloneqq\left\|\partial_tf+\v\cdot\partial_xf\right\|_{L^2\left(\left[0,T\right]\times\Omega;H^{-1}\left(\R^3\right)\right)}.
		\end{align*}
		The restriction in the definition of $\Y$ will not be important until \cref{sec:firstorder} and is motivated by \cref{lma:etafinL2H-1}, which is stated below.
		\item The $\w>0$ are weights. For example, if we have two sorts of particles, ions and electrons say, the weight corresponding to the ions should be larger than the one corresponding to the electrons since the heavy ions will cause more damage on the boundary of a fusion reactor if they hit it. Moreover, the weights also serve as an indicator of which of our two aims should rather be achieved, that is to say no hits on the boundary and low control costs. More precisely, the $\w$ should be large if one rather wants no hits on the boundary, and should be small if one rather wants small control costs.
		\item The control space is
		\begin{align*}
		\mathcal U=W^{1,r}\left(\left[0,T\right]\times\Gamma;\R^3\right)
		\end{align*}
		where $\frac{4}{3}<r<\infty$ is fixed. By Sobolev's embedding theorem, $\mathcal U$ is compactly embedded in $L^2\left(\left[0,T\right]\times\Gamma;\R^3\right)$. For this, the boundary of $\Gamma$ has to satisfy some regularity condition, for example the cone condition. From now on, we shall always assume that $\partial\Gamma$ is not \enquote{too bad}, that is to say we have the compact embedding stated above. We endow $\mathcal U$ with the norm
		\begin{align*}
		\left\|u\right\|_{\mathcal U}\coloneqq\left(\sum_{j=1}^3\int_0^{T}\int_\Gamma\left(\left|u_j\right|^r+\kappa_1\left|\partial_tu_j\right|^r+\kappa_2\sum_{i=1}^3
		\left|\partial_{x_i}u_j\right|^r\right)\,dxdt\right)^{\frac{1}{r}},
		\end{align*}
		which is equivalent to the standard $W^{1,r}\left(\left[0,T\right]\times\Gamma;\R^3\right)$-norm. Here, $\kappa_1,\kappa_2>0$ are parameters chosen according to how much one wants to penalize $u$ itself compared to its $t$-and $x$-derivatives.
		\item As usual,
		\begin{align*}
		j^\inte=\suma\e\int_{\R^3}\v\f\,dv.
		\end{align*}
		\item The constraint that \eqref{eq:WholeSystem} be solved is to be understood in the sense of \cref{def:WeakSolWholeSys}.
		\item The pointwise constraint of $\f$ is on the one hand natural since any classical solution of \cref{eq:WholeVl} with nonnegative initial datum satisfies this constraint -- and also the weak solutions of \cref{prop:Existence} do -- and on the other hand necessary for a limit process when proving existence of a minimizer, see \cref{sec:ExistenceMinimizer}.
		\item The same applies \textit{mutatis mutandis} for the energy constraint. Note that this inequality directly follows from the stronger inequality \cref{eq:estEner} after an integration in time and Hölder's inequality:
		\begin{align*}
		&\suma\int_0^{T}\int_\Omega\int_{\R^3}\vo\f\,dvdxdt+\frac{\sigma}{8\pi}\int_0^{T}\int_{\R^3}\left(\left|E\right|^2+\left|H\right|^2\right)\,dxdt\\
		&\leq\int_0^{T}\left\|\suma\int_\Omega\int_{\R^3}\vo\f\left(\cdot\right)\,dvdx+\frac{\sigma}{8\pi}\left\|\left(E,H\right)\left(\cdot\right)\right\|_{L^2\left(\R^3;\R^6\right)}^2\right\|_{L^\infty\left(\left[0,s\right]\right)}\,ds\\
		&\leq 2T\left(\suma\int_\Omega\int_{\R^3}\vo\mathring\f\,dvdx+\frac{\sigma'}{8\pi}\left\|\left(\mathring E,\mathring H\right)\right\|_{L^2\left(\R^3;\R^6\right)}^2\right)\\
		&\phantom{\leq\;}+4\pi\sigma^{-1}\int_0^{T}\left(\int_0^s\left\|u\left(t\right)\right\|_{L^2\left(\Gamma;\R^3\right)}\,dt\right)^2ds
		\end{align*}
		and
		\begin{align*}
		\int_0^{T}\left(\int_0^s\left\|u\left(t\right)\right\|_{L^2\left(\Gamma;\R^3\right)}\,dt\right)^2ds\leq\int_0^{T}s\int_0^s\left\|u\left(t\right)\right\|_{L^2\left(\Gamma;\R^3\right)}^2\,dtds\leq\frac{T^2}{2}\left\|u\right\|_{L^2\left(\left[0,T\right]\times\Gamma;\R^3\right)}^2.
		\end{align*}
		The main reason why we impose the weaker inequality \cref{eq:constrener} as a constraint is that no longer $L^\infty$-terms or square roots appear, which would cause some trouble with respect to differentiability.
	\end{itemize}
\end{remark}

\section{Existence of minimizers}\label{sec:ExistenceMinimizer}
The usual strategy to obtain a minimizer of an optimization problem is to consider a minimizing sequence. By structure of the objective function or the constraints, this sequence is bounded in some norm so that we can extract a weakly converging subsequence (of course, we have to work in a reflexive space for this). To pass to the limit in a nonlinear optimization problem, some compactness is needed. As for passing to the limit in a nonlinear PDE (system), usually the same tools have to be exploited that were established to be able to pass to the limit in an iteration scheme to prove existence of solutions to the PDE (system).

This general strategy also applies to our case. The crucial compactness result is the following momentum averaging lemma by Di Perna and Lions \cite{DL89}; see also \cite{Rei04} for a shortened proof:
\begin{lemma}\label{lma:MomentumAveraging}
	Let $r>0$ and $\zeta\in C_c^\infty\left(B_r\right)$. There exists a constant $C>0$ such that for any functions $h,g_0,g_1\in L^2\left(\R\times\R^3\times B_r\right)$ which satisfy the inhomogeneous transport equation
	\begin{align*}
	\partial_th+\v\cdot\partial_xh=g_0+\div_vg_1
	\end{align*}
	in the sense of distributions we have
	\begin{align*}
	\int_{B_r}\zeta\left(v\right)h\left(\cdot,\cdot,v\right)\,dv\in H^{\frac{1}{4}}\left(\R\times\R^3\right)
	\end{align*}
	with
	\begin{align*}
	\left\|\int_{B_r}\zeta\left(v\right)h\left(\cdot,\cdot,v\right)\,dv\right\|_{H^{\frac{1}{4}}\left(\R\times\R^3\right)}\leq C\left(\left\|h\right\|_{L^2\left(\R\times\R^3\times B_r\right)}+\left\|g_0\right\|_{L^2\left(\R\times\R^3\times B_r\right)}+\left\|g_1\right\|_{L^2\left(\R\times\R^3\times B_r\right)}\right).
	\end{align*}
\end{lemma}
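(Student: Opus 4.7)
The plan is to prove this via Fourier analysis in $(t,x)$, the classical velocity-averaging strategy. Let $\hat h(\tau,\xi,v)$ denote the Fourier transform of $h$ in the $(t,x)$-variables, and similarly for $\hat g_0,\hat g_1$. The transport equation becomes the algebraic identity $i(\tau+\v\cdot\xi)\hat h(\tau,\xi,v)=\hat g_0(\tau,\xi,v)+\div_v\hat g_1(\tau,\xi,v)$ in $\mathcal D'_v$, so that with $\rho(t,x)\coloneqq\int_{B_r}\zeta(v)h(t,x,v)\,dv$ one has $\hat\rho(\tau,\xi)=\int_{B_r}\zeta(v)\hat h(\tau,\xi,v)\,dv$. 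By Plancherel it then suffices to show that $(1+|\tau|+|\xi|)^{1/4}\hat\rho\in L^2(\R\times\R^3)$ with norm controlled by the $L^2$-norms of $h,g_0,g_1$.

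Next I would separate resonant from non-resonant velocities. For a parameter $\delta=\delta(\tau,\xi)>0$ to be optimized, introduce a smooth cutoff $\chi_\delta(v)$ equal to $1$ on $\{v\in B_r:|\tau+\v\cdot\xi|\le\delta\}$, supported in $\{|\tau+\v\cdot\xi|\le 2\delta\}$, and satisfying $|\nabla_v\chi_\delta|\lesssim|\xi|/\delta$. Split $\hat\rho=\hat\rho_1+\hat\rho_2$ with $\hat\rho_1=\int\zeta\chi_\delta\hat h\,dv$ and $\hat\rho_2=\int\zeta(1-\chi_\delta)\hat h\,dv$. The resonant part is handled by Cauchy-Schwarz:
\[
|\hat\rho_1(\tau,\xi)|^2\lesssim|A_\delta(\tau,\xi)|\,\|\hat h(\tau,\xi,\cdot)\|_{L^2(B_r)}^2,\qquad A_\delta\coloneqq\{v\in B_r:|\tau+\v\cdot\xi|\le 2\delta\}.
\]
Because $v\mapsto\v=v/\sqrt{m_\alpha^2+|v|^2}$ is a $C^\infty$-diffeomorphism of $B_r$ onto its image with Jacobian uniformly bounded above and below, a change of variables reduces $|A_\delta|$ to the measure of a straight slab in the velocity-image space, yielding $|A_\delta|\lesssim\delta/|\xi|$ uniformly in $\tau$.

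For the non-resonant part I would invert the symbol and move the $v$-derivative off $\hat g_1$ by integration by parts: on the support of $1-\chi_\delta$,
\begin{align*}
\hat\rho_2=\int\frac{\zeta(1-\chi_\delta)}{i(\tau+\v\cdot\xi)}\,\hat g_0\,dv-\int\div_v\!\left(\frac{\zeta(1-\chi_\delta)}{i(\tau+\v\cdot\xi)}\right)\cdot\hat g_1\,dv.
\end{align*}
Expanding the $v$-derivative yields a harmless term $\nabla_v\zeta/(\tau+\v\cdot\xi)$, a cutoff term $\zeta\nabla_v\chi_\delta/(\tau+\v\cdot\xi)$ localized to $|\tau+\v\cdot\xi|\simeq\delta$, and the critical term $\zeta(1-\chi_\delta)\,\xi\cdot D_v\v/(\tau+\v\cdot\xi)^2$. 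The same change of variables gives
\[
\left\|\frac{1-\chi_\delta}{\tau+\v\cdot\xi}\right\|_{L^2_v}^2\lesssim\frac{1}{|\xi|\delta},\qquad\left\|\frac{|\xi|(1-\chi_\delta)}{(\tau+\v\cdot\xi)^2}\right\|_{L^2_v}^2\lesssim\frac{|\xi|}{\delta^3},
\]
with an analogous bound for the cutoff contribution. Cauchy-Schwarz in $v$ then produces, pointwise in $(\tau,\xi)$,
\[
|\hat\rho(\tau,\xi)|^2\lesssim\frac{\delta}{|\xi|}\|\hat h\|_{L^2_v}^2+\frac{1}{|\xi|\delta}\|\hat g_0\|_{L^2_v}^2+\frac{|\xi|}{\delta^3}\|\hat g_1\|_{L^2_v}^2.
\]

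Finally I would optimize $\delta$ within each frequency regime. When $|\xi|\gtrsim|\tau|$, the choice $\delta\sim|\xi|^{1/2}$ balances the first and third terms (the middle one is then automatically better), giving $|\hat\rho|^2\lesssim|\xi|^{-1/2}\bigl(\|\hat h\|_{L^2_v}^2+\|\hat g_0\|_{L^2_v}^2+\|\hat g_1\|_{L^2_v}^2\bigr)$; in the regime $|\tau|\gtrsim|\xi|$ the symbol is already non-degenerate, and direct inversion (effectively $\delta\sim|\tau|$) yields analogous $|\tau|^{-1/2}$ decay. Multiplying by $(1+|\tau|+|\xi|)^{1/2}$ and integrating in $(\tau,\xi)$ then delivers, via Plancherel and Fubini, the desired $H^{1/4}$-bound. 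I expect the main obstacle to be the divergence contribution: the extra factor of $(\tau+\v\cdot\xi)$ in the denominator produced by the integration by parts is precisely what caps the regularity gain at $\frac{1}{4}$ (rather than the $\frac{1}{2}$ one would get if $g_1\equiv 0$), so the balancing of the three competing scales $\delta$, $|\xi|$, $|\tau|$ must be carried out with some care.
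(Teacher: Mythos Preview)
The paper does not actually prove this lemma; it merely cites Di~Perna--Lions \cite{DL89} and Rein \cite{Rei04} for the proof. Your Fourier-analytic plan is precisely the classical velocity-averaging argument found in those references, so in that sense your approach agrees with the paper's (outsourced) proof.

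Your outline is essentially correct. A few minor points worth tightening: (i) the low-frequency region $|\tau|+|\xi|\lesssim 1$ should be disposed of separately by the trivial $L^2$-bound on $\rho$ coming from Cauchy--Schwarz in $v$; (ii) in the regime $|\tau|\gg|\xi|$ you should make explicit that $|\v|<1$ on $B_r$ so that $|\tau+\v\cdot\xi|\ge|\tau|-|\xi|\gtrsim|\tau|$, which is what makes the symbol uniformly non-degenerate there; (iii) the cutoff term $\zeta\nabla_v\chi_\delta/(\tau+\v\cdot\xi)$ indeed contributes at the same order $|\xi|/\delta^3$ as the ``critical'' term, so it must be included in the final balancing rather than dismissed as harmless. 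With these clarifications the optimization $\delta\sim|\xi|^{1/2}$ goes through and yields the stated $H^{1/4}$-gain.
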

Here and in the following, $B_r\subset\R^3$ denotes the open ball about the origin with radius $r>0$.

We proceed with the following lemma, that was already mentioned above:
\begin{lemma}\label{lma:etafinL2H-1}
	Let $\f\in L^\infty\left(\left[0,T\right]\times\Omega\times\R^3\right)$, $\f_+\in L_\loc^1\left(\gamma_{T}^+,d\gamm\right)$ such that \cref{def:WeakSolWholeSys} \cref{def:WeakSolWholeSysii} is satisfied with $E,H\in L^2\left(\left[0,T\right]\times\Omega;\R^3\right)$. Denote $F\coloneqq\e\left(E+\v\times H\right)$. Then,
	\begin{align}\label{eq:etafinL2H-1}
	\partial_t\f+\v\cdot\partial_x\f=-\div_v\left(F\f\right)
	\end{align}
	on $\left]0,T\right[\times\Omega\times\R^3$ in the sense of distributions and the left-hand side is an element of \linebreak $L^2\left(\left[0,T\right]\times\Omega;H^{-1}\left(\R^3\right)\right)$. Furthermore,
	\begin{align}\label{eq:estNa}
	\Na\left(\f\right)\leq \sqrt{2}\left|\e\right|\left\|\f\right\|_{L^\infty\left(\left[0,T\right]\times\Omega\times\R^3\right)}\left\|\left(E,H\right)\right\|_{L^2\left(\left[0,T\right]\times\Omega;\R^6\right)}.
	\end{align}
\end{lemma}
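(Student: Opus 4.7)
The plan is to prove both claims by testing the weak formulation \cref{eq:Vlasovweak} against suitable smooth functions and then invoking a Bochner-duality argument. I proceed in two stages.

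For the distributional identity \cref{eq:etafinL2H-1}, I would take an arbitrary $\psi\in C_c^\infty(]0,T[\times\Omega\times\R^3)$. Such $\psi$ lies in $\Psi_T$ because its support is compactly contained in the open set $]0,T[\times\Omega\times\R^3$ and therefore has positive distance to $\gamma_T^0$, to $[0,T]\times\partial\Omega\times\R^3$, and to $\{0\}\times\overline\Omega\times\R^3$. In particular both the traces $\psi_\pm$ on $\gamma_T^\pm$ and the initial value $\psi(0)$ on $\Omega\times\R^3$ vanish, so every integral in \cref{eq:Vlasovweak} except the first drops out, leaving
\[
\int_0^T\!\!\int_\Omega\!\!\int_{\R^3}\bigl(\partial_t\psi+\v\cdot\partial_x\psi+F\cdot\partial_v\psi\bigr)\f\,dvdxdt=0,
\]
which, by the distributional definition of the time, space and velocity derivatives, is precisely \cref{eq:etafinL2H-1} in $\mathcal D'(]0,T[\times\Omega\times\R^3)$.

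To show that this distribution actually lies in $L^2([0,T]\times\Omega;H^{-1}(\R^3))$ with the claimed bound, I use that this Bochner space is isometrically dual to $L^2([0,T]\times\Omega;H^1(\R^3))$ (by reflexivity of $H^1(\R^3)$), so it suffices to control the pairing of $-\div_v(F\f)$ with smooth, compactly supported $\Phi$. This pairing equals $\int_0^T\!\!\int_\Omega\!\!\int_{\R^3}F\f\cdot\nabla_v\Phi\,dvdxdt$. Using $|\v|\le 1$ and elementary Cauchy-Schwarz I obtain the pointwise bound $|F(t,x,v)|\le|\e|(|E(t,x)|+|H(t,x)|)\le\sqrt{2}\,|\e|\bigl(|E(t,x)|^2+|H(t,x)|^2\bigr)^{1/2}$; pulling $\|\f\|_{L^\infty}$ out of the $v$-integral and applying Cauchy-Schwarz first in $v$ (with $\nabla_v\Phi\in L^2(\R^3)$) and then in $(t,x)$ should produce the estimate $\sqrt{2}\,|\e|\,\|\f\|_\infty\,\|(E,H)\|_{L^2([0,T]\times\Omega;\R^6)}\,\|\Phi\|_{L^2([0,T]\times\Omega;H^1(\R^3))}$, from which the asserted $\Na$-bound follows by taking the supremum over $\Phi$ of unit norm.

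The main technical obstacle is the $v$-step: since $F\f$ is only bounded -- and not square-integrable -- in $v$, the Cauchy-Schwarz pairing with $\nabla_v\Phi\in L^2(\R^3)$ is not literal and needs justification. I would handle this by first performing the estimate on $\Phi\in C_c^\infty(]0,T[\times\Omega\times\R^3)$, where every $v$-integral is over a bounded set and all manipulations are unambiguous, and then extending by density and continuity to $\Phi\in L^2([0,T]\times\Omega;H^1(\R^3))$, exploiting that the bound is independent of the $v$-support of $\Phi$. This extension produces the unique continuous linear functional on $L^2([0,T]\times\Omega;H^1(\R^3))$ that coincides with $-\div_v(F\f)$ on compactly supported test functions, and its operator norm is precisely $\Na(\f)$.
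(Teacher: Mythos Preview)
Your overall strategy matches the paper's: the distributional identity follows immediately from testing \cref{eq:Vlasovweak} with $\psi\in C_c^\infty(]0,T[\times\Omega\times\R^3)\subset\Psi_T$, and the $H^{-1}$ bound comes from estimating $F\f$ in $L^2$. The paper phrases the second step directly as $\|F\f\|_{L^2([0,T]\times\Omega\times\R^3)}\le\|F\|_{L^2([0,T]\times\Omega)}\|\f\|_{L^\infty}$, while you phrase it by duality; these are the same computation.

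However, your proposed resolution of the ``technical obstacle'' does not work. You claim that after performing the estimate for $\Phi\in C_c^\infty$ one can extend by density because ``the bound is independent of the $v$-support of $\Phi$''. It is not. With $|F(t,x,v)|\le\sqrt{2}\,|\e|\,(|E|^2+|H|^2)^{1/2}$ and $|\f|\le\|\f\|_\infty$, the product $F\f$ is merely bounded in $v$, so Cauchy--Schwarz in $v$ over $\supp_v\Phi\subset B_R$ produces a factor $|B_R|^{1/2}$; the resulting constant blows up as $R\to\infty$, and the density argument collapses. Equivalently, the paper's displayed inequality $\|F\f\|_{L^2([0,T]\times\Omega\times\R^3)}\le\|F\|_{L^2([0,T]\times\Omega)}\|\f\|_{L^\infty}$ cannot hold as written for a generic $\f\in L^\infty$ with no $v$-decay, since the left side integrates over $v\in\R^3$ and the right side does not. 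So you have correctly spotted a genuine difficulty that the paper's two-line proof simply glosses over; what you have not done is repair it. In the paper's applications the $\f$ always lie in $L^1_{\mathrm{kin}}\cap L^\infty$ (hence in $L^2$), and one should really exploit that extra integrability rather than $L^\infty$ alone; under the bare hypothesis $\f\in L^\infty$ the claimed membership in $L^2([0,T]\times\Omega;H^{-1}(\R^3))$ is not established by either argument.
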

\begin{proof}
	It is easy to see that \cref{eq:etafinL2H-1} holds on $\left]0,T\right[\times\Omega\times\R^3$ in the sense of distributions. There remains to estimate the right-hand side:
	\begin{align*}
	\left\|F\f\right\|_{L^2\left(\left[0,T\right]\times\Omega\times\R^3;\R^3\right)}\leq\left\|F\right\|_{L^2\left(\left[0,T\right]\times\Omega;\R^3\right)}\left\|\f\right\|_{L^\infty\left(\left[0,T\right]\times\Omega\times\R^3\right)}
	\end{align*}
	implies \cref{eq:estNa}.
\end{proof}
The next lemma gives an $L^{\frac{4}{3}}$-estimate on $j^\inte$ in view of the inequality constraints of \cref{prob:P} and will be useful later.
\begin{lemma}\label{lma:jintconstr}
	The constraints \cref{eq:constrfinfty,eq:constrener} yield $j^\inte\in L^{\frac{4}{3}}\left(\left[0,T\right]\times\Omega;\R^3\right)$ with
	\begin{align*}
	\left\|j^\inte\right\|_{L^{\frac{4}{3}}\left(\left[0,T\right]\times\Omega;\R^3\right)}\leq\left(\suma\left|\e\right|^4\left(\frac{4\pi}{3}\left\|\mathring\f\right\|_{L^\infty\left(\Omega\times\R^3\right)}+1\right)^4\right)^{\frac{1}{4}}\mathcal I\left(u\right)^{\frac{3}{4}}.
	\end{align*}
\end{lemma}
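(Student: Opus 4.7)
My plan is to establish a pointwise bound on $\int_{\R^3} f^\alpha\,dv$ in terms of the kinetic energy density and $\|f^\alpha\|_{L^\infty}$, then combine it with Hölder's inequality over $\alpha$ and integrate using \cref{eq:constrener}.

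Step 1 (pointwise bound via velocity splitting). Fix $(t,x)$ and set $e_\alpha(t,x):=\int_{\R^3}\vo f^\alpha\,dv$. Since $|\v|\le 1$, I have $|j^\inte(t,x)|\le\sum_\alpha|\e|\int f^\alpha\,dv$. Splitting the $v$-integral at radius $R>0$,
\begin{align*}
\int_{\R^3} f^\alpha\,dv &\le \|f^\alpha\|_{L^\infty}\cdot\tfrac{4\pi}{3}R^3+\int_{|v|>R} f^\alpha\,dv\le \tfrac{4\pi}{3}R^3\|f^\alpha\|_{L^\infty}+\tfrac{1}{R}\int_{\R^3}\vo f^\alpha\,dv,
\end{align*}
using $\vo\ge|v|>R$ on the second piece. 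Choosing $R=e_\alpha(t,x)^{1/4}$ (which makes $R^3=e_\alpha^{3/4}$ and $e_\alpha/R=e_\alpha^{3/4}$) gives the clean pointwise estimate
\begin{align*}
\int_{\R^3} f^\alpha\,dv\le\left(\tfrac{4\pi}{3}\|f^\alpha\|_{L^\infty}+1\right)e_\alpha(t,x)^{3/4}.
\end{align*}
If $e_\alpha(t,x)=0$ then $f^\alpha(t,x,\cdot)=0$ a.e.\ and the bound is trivial. Now apply \cref{eq:constrfinfty} to replace $\|f^\alpha\|_{L^\infty}$ by $\|\mathring{f^\alpha}\|_{L^\infty}$, and abbreviate $c_\alpha:=|\e|\bigl(\tfrac{4\pi}{3}\|\mathring{f^\alpha}\|_{L^\infty}+1\bigr)$, so that
\begin{align*}
|j^\inte(t,x)|\le\sum_{\alpha=1}^N c_\alpha\,e_\alpha(t,x)^{3/4}.
\end{align*}

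Step 2 (Hölder on the $\alpha$-sum and integration). Raising to the $\tfrac{4}{3}$ power and applying Hölder with exponents $4$ and $\tfrac{4}{3}$ to the finite sum,
\begin{align*}
|j^\inte|^{4/3}\le\Bigl(\sum_\alpha c_\alpha\,e_\alpha^{3/4}\Bigr)^{4/3}\le\Bigl(\sum_\alpha c_\alpha^4\Bigr)^{1/3}\sum_\alpha e_\alpha(t,x).
\end{align*}
Integrating in $(t,x)$ and recalling that $\int_0^T\!\int_\Omega e_\alpha\,dxdt=\iiint\vo f^\alpha\,dvdxdt$, summation then yields
\begin{align*}
\|j^\inte\|_{L^{4/3}\left(\left[0,T\right]\times\Omega;\R^3\right)}^{4/3}\le\Bigl(\sum_\alpha c_\alpha^4\Bigr)^{1/3}\sum_\alpha\int_0^T\!\int_\Omega\!\int_{\R^3}\vo f^\alpha\,dvdxdt.
\end{align*}
Finally, since the field term in \cref{eq:constrener} is nonnegative, the kinetic-energy sum is controlled by $\mathcal I(u)$; taking the $\tfrac{3}{4}$ power of both sides gives exactly the claimed bound.

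There is no real obstacle here: everything reduces to choosing the correct splitting radius so that the two contributions balance with matching powers of $e_\alpha$, and then matching the Hölder exponents on the $\alpha$-sum with the $L^{4/3}$-exponent so that the integrated kinetic-energy term appears linearly and can be absorbed into $\mathcal I(u)$.
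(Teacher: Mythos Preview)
Your proof is correct and follows essentially the same approach as the paper: the same velocity-splitting at radius $R=e_\alpha(t,x)^{1/4}$ to obtain the pointwise bound $\int f^\alpha\,dv\le\bigl(\tfrac{4\pi}{3}\|\mathring f^\alpha\|_{L^\infty}+1\bigr)e_\alpha^{3/4}$, followed by H\"older with exponents $4$ and $\tfrac{4}{3}$ over the species index and then absorption of the kinetic-energy sum into $\mathcal I(u)$ via \cref{eq:constrener}. The only cosmetic difference is the order of operations: the paper first applies the triangle inequality in $L^{4/3}$ and then H\"older on the $\alpha$-sum, whereas you apply H\"older pointwise before integrating; both routes lead to the identical constant.
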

\begin{proof}
	The proof is standard, but we carry it out for the sake of completeness. For any $r>0$ we have
	\begin{align*}
	\int_{\R^3}\f\,dv=\int_{B_r}\f\,dv+\int_{\left|v\right|\geq r}\f\,dv\leq\frac{4\pi}{3}r^3\left\|\f\right\|_{L^\infty\left(\left[0,T\right]\times\Omega\times\R^3\right)}+\frac{1}{r}\int_{\R^3}\vo\f\,dv.
	\end{align*}
	Choosing $r\coloneqq\left(\int_{\R^3}\vo f\,dv\right)^{\frac{1}{4}}$ yields
	\begin{align}\label{eq:rhopoint}
	\int_{\R^3}\f\,dv&\leq\left(\int_{\R^3}\vo\f\,dv\right)^{\frac{3}{4}}\left(\frac{4\pi}{3}\left\|\f\right\|_{L^\infty\left(\left[0,T\right]\times\Omega\times\R^3\right)}+1\right)\nonumber\\
	&\leq\left(\int_{\R^3}\vo\f\,dv\right)^{\frac{3}{4}}\left(\frac{4\pi}{3}\left\|\mathring\f\right\|_{L^\infty\left(\Omega\times\R^3\right)}+1\right)
	\end{align}
	by \cref{eq:constrfinfty}, whence
	\begin{align}\label{eq:jintest43wholetime}
	&\left(\int_0^{T}\int_\Omega\left|j^\inte\right|^{\frac{4}{3}}\,dxdt\right)^{\frac{3}{4}}\leq\suma\left|\e\right|\left(\int_0^{T}\int_\Omega\left|\int_{\R^3}\f\,dv\right|^{\frac{4}{3}}\,dxdt\right)^{\frac{3}{4}}\nonumber\\
	&\leq\left(\suma\left|\e\right|^4\left(\frac{4\pi}{3}\left\|\mathring\f\right\|_{L^\infty\left(\Omega\times\R^3\right)}+1\right)^4\right)^{\frac{1}{4}}\left(\suma\int_0^{T}\int_\Omega\int_{\R^3}\vo\f\,dvdxdt\right)^{\frac{3}{4}},
	\end{align}
	which, together with the constraint \cref{eq:constrener}, implies the assertion.
\end{proof}
We can now prove the following:
\begin{theorem}\label{thm:ExistenceMinimizer}
	There is a (not necessarily unique) minimizer of \cref{prob:P}.
\end{theorem}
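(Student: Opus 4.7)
The plan is a direct-method argument in the calculus of variations. First, I would verify that the admissible set of \cref{prob:P} is nonempty: for any fixed $u \in \mathcal U$, \cref{prop:Existence} produces a weak solution satisfying \cref{eq:estf1,eq:estEner}, and the computation carried out in the remark following \cref{prob:P} shows that such a solution automatically satisfies \cref{eq:constrfinfty,eq:constrener}. In particular, $\inf \mathcal J < \infty$.

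Next, I would pick a minimizing sequence $(y_n, u_n)$. Coercivity of $\mathcal J$ bounds $\|u_n\|_{\mathcal U}$ and the norms $\|f_{+,n}^\alpha\|_{L^q(\gamma_T^+,d\gamm)}$ uniformly. The constraint \cref{eq:constrfinfty} bounds $f_n^\alpha$ in $L^\infty$, and since $u_n$ is bounded in $L^2([0,T]\times\Gamma;\R^3)$ by the Sobolev embedding $\mathcal U \hookrightarrow L^2$, $\mathcal I(u_n)$ stays bounded, so \cref{eq:constrener} yields uniform bounds on $f_n^\alpha$ in $L^1_\kin$ and on $(E_n, H_n)$ in $L^2$. \cref{lma:jintconstr} then bounds $j_n^\inte$ in $L^{4/3}$. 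Passing to a subsequence, I obtain $u_n \wto u$ in $\mathcal U$ together with strong convergence $u_n \to u$ in $L^2([0,T]\times\Gamma;\R^3)$ via the compact embedding $\mathcal U \hookrightarrow\hookrightarrow L^2$, $f_n^\alpha \overset{\ast}{\wto} f^\alpha$ in $L^\infty$, $f_{+,n}^\alpha \wto f_+^\alpha$ in $L^q$, $E_n, H_n \wto E, H$ in $L^2$, and $j_n^\inte \wto \tilde j$ in $L^{4/3}$.

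The main obstacle is to pass to the limit in the Vlasov nonlinearity $\iiint \e (E_n + \v \times H_n) \cdot \partial_v \psi \, f_n^\alpha \,dv\,dx\,dt$ and to identify $\tilde j = \suma \e \int \v f^\alpha \, dv$, since products of merely weakly convergent sequences in $L^2$ and $L^\infty$ need not converge weakly to the product of their limits. To overcome this I would apply the Di~Perna--Lions averaging \cref{lma:MomentumAveraging}. By \cref{lma:etafinL2H-1}, extended appropriately (the zero extensions in $t$ and $x$ add source terms whose regularity is controlled by the $L^1_\kin$-bounds on $f_n^\alpha$ and by $f_{+,n}^\alpha$), the sequence $f_n^\alpha$ satisfies a kinetic transport equation with right-hand side of the form $g_0 + \div_v g_1$, uniformly $L^2$-bounded on bounded sets (by interpolation between $L^\infty$ and $L^1$, valid on sets with $v$ restricted to a ball). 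Hence for any $\zeta \in C_c^\infty(B_r)$ the averages $\int \zeta(v) f_n^\alpha(\cdot,\cdot,v)\,dv$ are bounded in $H^{1/4}(\R\times\R^3)$ and converge strongly in $L^2_\loc$ along a further subsequence. Choosing weights adapted to the compactly $v$-supported integrands $\partial_v \psi$ and $\partial_v\psi \times \v$, and pairing the resulting strongly convergent averages with the weakly convergent $E_n$, $H_n$ in $L^2$, the nonlinear contribution passes to the limit. A standard truncation in $v$, justified by the uniform $L^1_\kin$-bound on $f_n^\alpha$, identifies $\tilde j$ with $\suma\e\int \v f^\alpha\,dv$. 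The remaining linear terms in \cref{eq:Vlasovweak,eq:Maxwellweak} pass to the limit by definition of weak convergence, so the limit tuple satisfies \cref{def:WeakSolWholeSys}.

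Finally, I would check admissibility and minimality. Constraint \cref{eq:constrfinfty} is preserved under weak$^{\ast}$ $L^\infty$-limits; constraint \cref{eq:constrener} passes to the limit because its left-hand side is weakly lower semicontinuous while $\|u_n\|_{L^2}^2$ converges strongly. Weak lower semicontinuity of $\|\cdot\|_{L^q}^q$ and of $\|\cdot\|_{\mathcal U}^r$ then yields $\mathcal J(y,u) \le \liminf_{n\to\infty} \mathcal J(y_n, u_n) = \inf \mathcal J$, so $(y,u)$ is a minimizer.
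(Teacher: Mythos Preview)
Your overall strategy coincides with the paper's: direct method, compactness from the coercivity and the constraints, and the Di~Perna--Lions averaging lemma to handle the nonlinearity. The identification of $\tilde j$, the preservation of the constraints via strong $L^2$-convergence of $u_n$, and the weak lower semicontinuity argument are all as in the paper.

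There is one technical point where your execution would fail as written. You propose to apply \cref{lma:MomentumAveraging} by extending $f_n^\alpha$ by zero outside $[0,T]\times\Omega$ and absorbing the resulting boundary contributions into $g_0$. But extending by zero across $\{t=0\}$, $\{t=T\}$, and $\partial\Omega$ produces source terms that are measures supported on hypersurfaces (involving $\mathring f^\alpha$, $f^\alpha(T)$, and $f_\pm^\alpha$), not $L^2$ functions; the $L^1_\kin$-bounds you invoke do not upgrade these to $L^2$, and \cref{lma:MomentumAveraging} as stated requires $g_0,g_1\in L^2$. The paper avoids this by first multiplying $f_k^\alpha$ with a cutoff $\eta\in C_c^\infty(\left]0,T\right[\times\Omega\times B_r)$, so that $\eta f_k^\alpha$ satisfies a transport equation on all of $\R\times\R^3\times\R^3$ with genuinely $L^2$ right-hand side; the averaging lemma then gives strong $L^2$-convergence of $\int\zeta\,\eta f_k^\alpha\,dv$, and the contribution of $(1-\eta)f_k^\alpha$ on the thin boundary layer is made small using only the uniform $L^2$-bound on $f_k^\alpha$. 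Replacing your zero-extension step by this interior-cutoff argument fixes the gap; everything else goes through.
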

\begin{proof}
	First notice that there are feasible points to \cref{prob:P} by \cref{prop:Existence}. Thus, we may consider a minimizing sequence $\left(\left(\f_k,\f_{k,+}\right)_\alpha,E_k,H_k,u_k\right)$ of \cref{prob:P}. By structure of $\mathcal J$, the sequences $\left(\f_{k,+}\right)$ are bounded in $L^q\left(\gamma_{T}^+,d\gamm\right)$ and the sequence $\left(u_k\right)$ is bounded in $\mathcal U$. By reflexivity, we may thus assume that these sequences converge weakly, after possibly extracting suitable subsequences, in the respective spaces to some $\f_+\in L^q\left(\gamma_{T}^+,d\gamm\right)$ and $u\in\mathcal U$; recall that $1<q<\infty$.
	
	Since $\mathcal U$ is compactly embedded in $L^2\left(\left[0,T\right]\times\Gamma;\R^3\right)$, we have
	\begin{align}\label{eq:exminukstrong}
	\left\|u\right\|_{L^2\left(\left[0,T\right]\times\Gamma;\R^3\right)}=\lim_{k\to\infty}\left\|u_k\right\|_{L^2\left(\left[0,T\right]\times\Gamma;\R^3\right)}.
	\end{align}
	In combination with the constraints \cref{eq:constrfinfty,eq:constrener}, this yields that the sequences $\left(\f_k\right)$ are bounded in $\left(L_\kin^1\cap L^\infty\right)\left(\left[0,T\right]\times\Omega\times\R^3\right)$ and that the sequence $\left(E_k,H_k\right)$ is bounded in $L^2\left(\left[0,T\right]\times\R^3;\R^6\right)$. The property of $\left(\f_k\right)$ implies the boundedness of $\left(\f_k\right)$ in any $L^p\left(\left[0,T\right]\times\Omega\times\R^3\right)$, $1\leq p\leq\infty$, by interpolation. Therefore, after extracting a further subsequence, $\f_k$ converges weakly to some $\f$ in any $L^p\left(\left[0,T\right]\times\Omega\times\R^3\right)$, $1<p\leq\infty$ (weakly-* if $p=\infty$), and $\left(E_k,H_k\right)$ converges weakly to some $\left(E,H\right)$ in $L^2\left(\left[0,T\right]\times\R^3;\R^6\right)$.
	
	By weak-* convergence in $L^\infty\left(\left[0,T\right]\times\Omega\times\R^3\right)$, the constraint \cref{eq:constrfinfty} is preserved in the limit. As for the constraint \cref{eq:constrener}, let $R>0$. By weak convergence of the $\f_k$, weak convergence of $\left(E_k,H_k\right)$, \cref{eq:constrener} along the minimizing sequence, and \cref{eq:exminukstrong}, we have
	\begin{align*}
	&\suma\int_0^{T}\int_\Omega\int_{B_R}\vo\f\,dvdxdt+\frac{\sigma}{8\pi}\int_0^{T}\int_{\R^3}\left(\left|E\right|^2+\left|H\right|^2\right)\,dxdt\\
	&\leq\liminf_{k\to\infty}\suma\int_0^{T}\int_\Omega\int_{B_R}\vo\f_k\,dvdxdt+\frac{\sigma}{8\pi}\int_0^{T}\int_{\R^3}\left(\left|E_k\right|^2+\left|H_k\right|^2\right)\,dxdt\\
	&\leq 2T\suma\int_\Omega\int_{\R^3}\vo\mathring\f\,dvdx+\frac{T\sigma'}{4\pi}\left\|\left(\mathring E,\mathring H\right)\right\|_{L^2\left(\R^3;\R^6\right)}^2+2\pi T^2\sigma^{-1}\lim_{k\to\infty}\left\|u_k\right\|_{L^2\left(\left[0,T\right]\times\Gamma;\R^3\right)}^2\\
	&=2T\suma\int_\Omega\int_{\R^3}\vo\mathring\f\,dvdx+\frac{T\sigma'}{4\pi}\left\|\left(\mathring E,\mathring H\right)\right\|_{L^2\left(\R^3;\R^6\right)}^2+2\pi T^2\sigma^{-1}\left\|u\right\|_{L^2\left(\left[0,T\right]\times\Gamma;\R^3\right)}^2,
	\end{align*}
	which, after letting $R\to\infty$, on the one hand yields $\f\in L_\kin^1\left(\left[0,T\right]\times\Omega\times\R^3\right)$ and on the other hand implies that the constraint \cref{eq:constrener} also holds in the limit. Here we should point out that \cref{eq:exminukstrong} was crucial since we needed
	\begin{align*}
	\liminf_{k\to\infty}\left\|u_k\right\|_{L^2\left(\left[0,T\right]\times\Gamma;\R^3\right)}\leq\left\|u\right\|_{L^2\left(\left[0,T\right]\times\Gamma;\R^3\right)}.
	\end{align*}
	If we had chosen a cost term with the $L^2\left(\left[0,T\right]\times\Gamma;\R^3\right)$-norm instead of the $\mathcal U$-norm of $u$ in the objective function, we would only have been able to extract a subsequence $\left(u_k\right)$ that converges weakly in $L^2\left(\left[0,T\right]\times\Gamma;\R^3\right)$ rendering the above $\liminf$-estimate false in general.
	
	The next step is to pass to the limit in the weak form of \eqref{eq:WholeSystem}. By \cref{lma:jintconstr} the internal currents converge weakly, after extracting a further subsequence, in $L^{\frac{4}{3}}\left(\left[0,T\right]\times\Omega;\R^3\right)$. The weak limit, call it $\tilde j^\inte$, has to be the internal current $j^\inte$ induced by the limit functions $\f$ because of the following: Take $\vartheta\in C_c^\infty\left(\left]0,T\right[\times\Omega;\R^3\right)$ and $r>0$. Using weak convergence of $j_k^\inte$ and $\f_k$, respectively, we deduce
	\begin{align*}
	&\left|\int_0^{T}\int_\Omega\left(j^\inte-\tilde j^\inte\right)\cdot\vartheta\,dxdt\right|=\left|\lim_{k\to\infty}\iint\limits_{\supp\vartheta}\left(j^\inte-j_k^\inte\right)\cdot\vartheta\,dxdt\right|\\
	&=\left|\lim_{k\to\infty}\iint\limits_{\supp\vartheta}\left(\suma\e\int_{\R^3}\v\f\,dv-\suma\e\int_{\R^3}\v\f_k\,dv\right)\cdot\vartheta\,dxdt\right|\\
	&\leq\limsup_{k\to\infty}\left|\suma\e\iint\limits_{\supp\vartheta}\int_{B_r}\v\left(\f-\f_k\right)\,dvdxdt\right|\nonumber\\
	&\phantom{=\;}+\left|\suma\e\iint\limits_{\supp\vartheta}\left(\int_{\left|v\right|\geq r}\v\f\,dv-\int_{\left|v\right|\geq r}\v\f_k\,dv\right)\cdot\vartheta\,dxdt\right|\\
	&\leq 0+\limsup_{m\to\infty}\frac{1}{r}\left\|\vartheta\right\|_\infty\suma\left|\e\right|\iint\limits_{\supp\vartheta}\left(\int_{\R^3}\vo\f\,dv+\int_{\R^3}\vo\f_k\,dv\right)\,dxdt\\
	&\leq\frac{C}{r},
	\end{align*}
	where $C$ is finite by virtue of \cref{eq:constrener} and the boundedness of $\left(u_k\right)$, and does not depend on $r$. Since $r>0$ and $\vartheta\in C_c^\infty\left(\left]0,T\right[\times\Omega;\R^3\right)$ was arbitrary, we conclude $j^\inte=\tilde j^\inte$. Thus, we can pass to the limit in \cref{eq:Vlasovweak,eq:Maxwellweak} easily in all terms but the nonlinear one. To handle this remaining term, we apply \cref{lma:MomentumAveraging} in a well known way. We carry out this application in order to explain the necessity of imposing the constraints \cref{eq:constrfinfty,eq:constrener}: Let $\zeta\in C_c^\infty\left(\R^3\right)$ and $r>0$ such that $\zeta$ vanishes for $\left|v\right|>r-1$. Our goal is to show that $\int_{\R^3}\zeta\f_k\,dv$ converges strongly (and not only weakly) to $\int_{\R^3}\zeta\f\,dv$ in $L^2\left(\left[0,T\right]\times\Omega\right)$. To this end, let $\eta\in C_c^\infty\left(\left]0,T\right[\times\Omega\times B_r\right)$. We have
	\begin{align*}
	&\partial_t\left(\eta\f_k\right)+\v\cdot\partial_x\left(\eta\f_k\right)\\
	&=-\div_v\left(\e\left(E_k+\v\times H_k\right)\left(\eta\f_k\right)\right)+\f_k\partial_t\eta+\f_k\v\cdot\partial_x\eta+\e\f_k\left(E_k+\v\times H_k\right)\cdot\partial_v\eta\\
	&\eqqcolon\div_vg_1^k+g_0^k
	\end{align*}
	on whole $\R\times\R^3\times\R^3$ in the sense of distributions. Clearly, the $L^2\left(\R\times\R\times B_r\right)$-norms of $g_0^k$ and $g_1^k$ are uniformly bounded in $k$ due to $\eta\in C_c^\infty\left(\left]0,T\right[\times\Omega\times B_r\right)$ and the already known uniform boundedness of $\f_k$ in $L^\infty$ and $L^2$ and $E_k,H_k$ in $L^2$ -- the latter being a consequence of imposing \cref{eq:constrfinfty,eq:constrener}! Thus, applying \cref{lma:MomentumAveraging} yields the uniform boundedness of
	\begin{align*}
	\left\|\int_{B_r}\zeta\left(v\right)\left(\eta\f_k\right)\left(\cdot,\cdot,v\right)\,dv\right\|_{H^{\frac{1}{4}}\left(\R\times\R^3\right)}=\left\|\int_{B_r}\zeta\left(v\right)\left(\eta\f_k\right)\left(\cdot,\cdot,v\right)\,dv\right\|_{H^{\frac{1}{4}}\left(\left[0,T\right]\times\Omega\right)}.
	\end{align*}
	By boundedness of $\left[0,T\right]\times\Omega$, $H^{\frac{1}{4}}\left(\left[0,T\right]\times\Omega\right)$ is compactly embedded in $L^2\left(\left[0,T\right]\times\Omega\right)$ so that the sequence $\left(\int_{B_r}\zeta\left(v\right)\left(\eta\f_k\right)\left(\cdot,\cdot,v\right)\,dv\right)$ converges, after extracting a suitable subsequence, strongly to $\int_{B_r}\zeta\left(v\right)\left(\eta\f\right)\left(\cdot,\cdot,v\right)\,dv$ in $L^2\left(\left[0,T\right]\times\Omega\right)$. Again by the uniform boundedness of the $\f_k$ in $L^2$ we can estimate
	\begin{align}\label{eq:etaest1}
	&\left\|\int_{\R^3}\zeta\left(v\right)\left(\left(1-\eta\right)\left(\f_k-\f\right)\right)\left(\cdot,\cdot,v\right)\,dv\right\|_{L^2\left(\left[0,T\right]\times\Omega\right)}\nonumber\\
	&=\left\|\int_{B_r}\zeta\left(v\right)\left(\left(1-\eta\right)\left(\f_k-\f\right)\right)\left(\cdot,\cdot,v\right)\,dv\right\|_{L^2\left(\left[0,T\right]\times\Omega\right)}\leq C\left\|1-\eta\right\|_{L^2\left(\left[0,T\right]\times\Omega\times B_r\right)}
	\end{align}
	with a constant $C\geq 0$ that does not depend on $k$. Now let $\iota>0$ be arbitrary (here and throughout this paper, the letter $\iota$, and not $\varepsilon$, will always denote a small positive number, since $\varepsilon$ is already used for the permittivity) and choose $\eta\in C_c^\infty\left(\left]0,T\right[\times\Omega\times B_r\right)$ such that the right-hand side of \cref{eq:etaest1} is smaller than $\iota$ -- note that $\left[0,T\right]\times\Omega\times B_r$ is bounded. For this fixed $\eta$, there holds
	\begin{align}\label{eq:etaest2}
	\left\|\int_{\R^3}\zeta\left(v\right)\left(\eta\left(\f_k-\f\right)\right)\left(\cdot,\cdot,v\right)\,dv\right\|_{L^2\left(\left[0,T\right]\times\Omega\right)}=\left\|\int_{B_r}\zeta\left(v\right)\left(\eta\left(\f_k-\f\right)\right)\left(\cdot,\cdot,v\right)\,dv\right\|_{L^2\left(\left[0,T\right]\times\Omega\right)}<\iota
	\end{align}
	for $k$ large enough by the strong convergence obtained above. Adding \cref{eq:etaest1,eq:etaest2} yields
	\begin{align*}
	\left\|\int_{\R^3}\zeta\left(v\right)\left(\f_k-\f\right)\left(\cdot,\cdot,v\right)\,dv\right\|_{L^2\left(\left[0,T\right]\times\Omega\right)}<2\iota
	\end{align*}
	for $k$ large enough, which implies
	\begin{align}\label{eq:Iterstrongconv}
	\int_{\R^3}\zeta\left(v\right)\f_k\left(\cdot,\cdot,v\right)\,dv\to\int_{\R^3}\zeta\left(v\right)\f\left(\cdot,\cdot,v\right)\,dv\mathrm{\ strongly\ in\ }L^2\left(\left[0,T\right]\times\Omega\right)\mathrm{\ for\ }k\to\infty.
	\end{align}
	Finally take $\psi\in\Psi_{T}$ and consider the limit of the crucial product term in \cref{eq:Vlasovweak}. By a density argument -- in particular, Weierstraß' approximation theorem -- we may assume that $\psi$ factorizes, i.e.,
	\begin{align*}
	\psi\left(t,x,v\right)=\psi_1\left(t,x\right)\psi_2\left(v\right).
	\end{align*}
	We have
	\begin{align*}
	&\lim_{k\to\infty}\int_0^{T}\int_\Omega\int_{\R^3}E_k\cdot\left(\partial_v\psi\right)\f_k\,dvdxdt=\lim_{k\to\infty}\int_0^{T}\int_\Omega E_k\psi_1\cdot
	\left(\int_{\R^3}\psi_2'\f_k\,dv\right)dxdt\nonumber\\
	&=\int_0^{T}\int_\Omega E\psi_1\cdot\left(
	\int_{\R^3}\psi_2'\f\,dv\right)dxdt=\int_0^{T}\int_\Omega\int_{\R^3} E\cdot\left(\partial_v\psi\right)\f\,dvdxdt
	\end{align*}
	by $\psi_1\in L^\infty\left(\left[0,T\right]\times\Omega\right)$, $E_k\wto E$ weakly in $L^2\left(\left[0,T\right]\times\Omega\right)$, and \cref{eq:Iterstrongconv} defining $\zeta\coloneqq\left(\psi_2'\right)_i$, $i=1,2,3$. Similarly, we obtain 
	\begin{align*}
	&\lim_{k\to\infty}\int_0^{T}\int_\Omega\int_{\R^3}\left(\v\times H_k\right)\cdot\left(\partial_v\psi\right)\f_k\,dvdxdt=\int_0^{T}\int_\Omega\int_{\R^3}\left(\v\times H\right)\cdot\left(\partial_v\psi\right)\f\,dvdxdt.
	\end{align*}
	Altogether, \eqref{eq:WholeSystem} is satisfied in the limit.
	
	By \cref{lma:etafinL2H-1}, we even have $\f\in\Y$ and thus $y=\left(\left(\f,\f_+\right)_\alpha,E,H\right)\in\mathcal Y$ altogether.
	
	Finally, the objective function indeed admits its minimum at $\left(y,u\right)$ by weak lower semi-continuity of any norm.
\end{proof}

\section{Weak formulation - revisited}\label{sec:weakrevisited}
For later reasons, it is convenient to revisit the weak formulation of \cref{def:WeakSolWholeSys} and write the equations there as an identity
\begin{align*}
G\left(\left(\f,\f_+\right)_\alpha,E,H,j\right)=0
\end{align*}
in the dual space of some reflexive space. Throughout this section, we fix $1<p<2$, $2<p',q<\infty$ such that
\begin{subequations}\label{eq:exponents}
	\begin{align}
	\frac{1}{p}+\frac{1}{q}&=1,\\
	\frac{1}{p'}+\frac{1}{q}+\frac{1}{2}&=1.
	\end{align}
\end{subequations}
We assume $\f\in L^q\left(\left[0,T\right]\times\Omega\times\R^3\right)$, $\f_+\in L^q\left(\gamma_{T}^+,d\gamm\right)$, $E,H\in L^2\left(\left[0,T\right]\times\R^3;\R^3\right)$, and $j=j^\inte+u$ where $j^\inte=\suma\e\int_{\R^3}\v\f\,dv\in L^{\frac{4}{3}}\left(\left[0,T\right]\times\Omega;\R^3\right)$, $u\in L^2\left(\left[0,T\right]\times\Gamma;\R^3\right)$. Note that for such $u$ there is a solution in the sense of \cref{def:WeakSolWholeSys} with those properties due to \cref{prop:Existence}.

Clearly, \cref{def:WeakSolWholeSys} \cref{def:WeakSolWholeSysii,def:WeakSolWholeSysiii} are equivalent to
\begin{align}
0&=\suma\left(-\int_0^{T}\int_\Omega\int_{\R^3}\left(\partial_t\psia+\v\cdot\partial_x\psia+\e\left(E+\v\times H\right)\cdot\partial_v\psia\right)\f\,dvdxdt\right.\nonumber\\
&\phantom{=\;}\left.+\int_{\gamma_{T}^+}\f_+\psia\,d\gamm-\int_{\gamma_{T}^-}\a\left(K\f_+\right)\psia\,d\gamm-\int_\Omega\int_{\R^3}\mathring\f\psia\left(0\right)\,dvdx\right)\nonumber\\
&\phantom{=\;}+\int_0^{T}\int_{\R^3}\left(\varepsilon E\cdot\partial_t\vartheta^e-H\cdot\curl_x\vartheta^e-4\pi j\cdot\vartheta^e\right)\,dxdt+\int_{\R^3}\varepsilon\mathring{E}\cdot\vartheta^e\left(0\right)\,dx\nonumber\\
&\phantom{=\;}+\int_0^{T}\int_{\R^3}\left(\mu H\cdot\partial_t\vartheta^h+E\cdot\curl_x\vartheta^h\right)\,dxdt+\int_{\R^3}\mu\mathring{H}\cdot\vartheta^h\left(0\right)\,dx\nonumber\\
&\eqqcolon G\left(\left(\f,\f_+\right)_\alpha,E,H,j\right)\left(\left(\psia\right)_\alpha,\vartheta^e,\vartheta^h\right)\label{eq:defS}
\end{align}
for all $\left(\psia\right)_\alpha\in\Psi_{T}^N$ and $\vartheta^e,\vartheta^h\in\Theta_{T}$.

\subsection{Some estimates}\label{sec:somestimates}
From now on, $\left(\left(\f,\f_+\right)_\alpha,E,H,j\right)$ does not have to be a solution of \eqref{eq:WholeSystem}. All assertions are made under the assumptions mentioned above.

In the following we will estimate each summand, one by one, where we often need \cref{eq:exponents}. Furthermore, $C$ denotes various positive, finite constants that only depend on $T$, $\Omega$, and $\Gamma$.
\begin{align*}
&\left|\int_0^{T}\int_\Omega\int_{\R^3}\partial_t\psia\f\,dvdxdt\right|\\
&\leq\sqrt{T\lambda\left(\Omega\right)}\int_{\R^3}\left(\int_0^{T}\int_\Omega\left|\f\right|^q\,dxdt\right)^{\frac{1}{q}}\left(\int_0^{T}\int_\Omega\left|\partial_t\psia\right|^{p'}\,dxdt\right)^{\frac{1}{p'}}\,dv\\
&\leq C\left\|\f\right\|_{L^q\left(\left[0,T\right]\times\Omega\times\R^3\right)}\left(\int_{\R^3}\left(\int_0^{T}\int_\Omega\left|\partial_t\psia\right|^{p'}\,dxdt\right)^{\frac{p}{p'}}\,dv\right)^{\frac{1}{p}};
\end{align*}
next
\begin{align*}
&\left|\int_0^{T}\int_\Omega\int_{\R^3}\v\cdot\partial_x\psia\f\,dvdxdt\right|\\
&\leq\sqrt{T\lambda\left(\Omega\right)}\int_{\R^3}\left(\int_0^{T}\int_\Omega\left|\f\right|^q\,dxdt\right)^{\frac{1}{q}}\left(\int_0^{T}\int_\Omega\left|\partial_x\psia\right|^{p'}\,dxdt\right)^{\frac{1}{p'}}\,dv\\
&\leq C\left\|\f\right\|_{L^q\left(\left[0,T\right]\times\Omega\times\R^3\right)}\left(\int_{\R^3}\left(\int_0^{T}\int_\Omega\left|\partial_x\psia\right|^{p'}\,dxdt\right)^{\frac{p}{p'}}\,dv\right)^{\frac{1}{p}};
\end{align*}
then,
\begin{align*}
&\left|\int_0^{T}\int_\Omega\int_{\R^3}\e\left(E+\v\times H\right)\cdot\partial_v\psia\f\,dvdxdt\right|\\
&\leq\left|\e\right|\int_{\R^3}\left(\int_0^{T}\int_\Omega\left|\f\right|^q\,dxdt\right)^{\frac{1}{q}}\left(\int_0^{T}\int_\Omega\left|E+\v\times H\right|^2\,dxdt\right)^{\frac{1}{2}}\left(\int_0^{T}\int_\Omega\left|\partial_v\psia\right|^{p'}\,dxdt\right)^{\frac{1}{p'}}\,dv\\
&\leq 2\left|\e\right|\left\|\f\right\|_{L^q\left(\left[0,T\right]\times\Omega\times\R^3\right)}\left\|\left(E,H\right)\right\|_{L^2\left(\left[0,T\right]\times\Omega;\R^6\right)}\left(\int_{\R^3}\left(\int_0^{T}\int_\Omega\left|\partial_v\psia\right|^{p'}\,dxdt\right)^{\frac{p}{p'}}\,dv\right)^{\frac{1}{p}}.
\end{align*}
Now have in mind that there is a bounded trace operator 
\begin{align*}
W^{1,p'}\left(\left[0,T\right]\times\Omega\right)\to L^{p'}\left(\left(\left[0,T\right]\times\partial\Omega\right)\cup\left(\left\{0\right\}\times\Omega\right)\cup\left(\left\{T\right\}\times\Omega\right)\right).
\end{align*}
Thus,
\begin{align*}
&\left|\int_{\gamma_{T}^+}\f_+\psia\,d\gamm\right|\leq\int_{\R^3}\int_0^{T}\int\limits_{\left\{x\in\partial\Omega\mid\v\cdot n\left(x\right)>0\right\}}\left|\f_+\psia\right|\left|\v\cdot n\right|\,dS_xdtdv\\
&\leq C\int_{\R^3}\left(\int_0^{T}\int\limits_{\left\{x\in\partial\Omega\mid\v\cdot n\left(x\right)>0\right\}}\left|\f_+\right|^q\left|\v\cdot n\right|^q\,dS_xdt\right)^{\frac{1}{q}}\left(\int_0^{T}\int_{\partial\Omega}\left|\psia\right|^{p'}\,dS_xdt\right)^{\frac{1}{p'}}dv\\
&\leq C\int_{\R^3}\left(\int_0^{T}\int\limits_{\left\{x\in\partial\Omega\mid\v\cdot n\left(x\right)>0\right\}}\left|\f_+\right|^q\left|\v\cdot n\right|\,dS_xdt\right)^{\frac{1}{q}}\\
&\phantom{\leq C\int_{\R^3}}\cdot\left(\int_0^{T}\int_{\Omega}\left(\left|\psia\right|^{p'}+\left|\partial_t\psia\right|^{p'}+\left|\partial_x\psia\right|^{p'}\right)\,dxdt\right)^{\frac{1}{p'}}dv\\
&\leq C\left\|\f_+\right\|_{L^q\left(\gamma_{T}^+,d\gamm\right)}\left(\int_{\R^3}\left(\int_0^{T}\int_\Omega\left(\left|\psia\right|^{p'}+\left|\partial_t\psia\right|^{p'}+\left|\partial_x\psia\right|^{p'}\right)\,dxdt\right)^{\frac{p}{p'}}\,dv\right)^{\frac{1}{p}}
\end{align*}
by $\left|\v\cdot n\right|\leq 1$. Similarly,
\begin{align*}
&\left|\int_{\gamma_{T}^-}\a\left(K\f_+\right)\psia\,d\gamm\right|\\
&\leq C\left\|\f_+\right\|_{L^q\left(\gamma_{T}^+,d\gamm\right)}\left(\int_{\R^3}\left(\int_0^{T}\int_\Omega\left(\left|\psia\right|^{p'}+\left|\partial_t\psia\right|^{p'}+\left|\partial_x\psia\right|^{p'}\right)\,dxdt\right)^{\frac{p}{p'}}\,dv\right)^{\frac{1}{p}}
\end{align*}
since $\left|\a\right|\leq 1$ and $v\mapsto v-2\left(v\cdot n\left(x\right)\right)n\left(x\right)$ has Jacobian determinant $-1$. Analogously,
\begin{align*}
&\left|\int_\Omega\int_{\R^3}\mathring\f\psia\left(0\right)\,dvdx\right|\\
&\leq C\left\|\mathring\f\right\|_{L^q\left(\Omega\times\R^3\right)}\left(\int_{\R^3}\left(\int_0^{T}\int_\Omega\left(\left|\psia\right|^{p'}+\left|\partial_t\psia\right|^{p'}+\left|\partial_x\psia\right|^{p'}\right)\,dxdt\right)^{\frac{p}{p'}}\,dv\right)^{\frac{1}{p}}
\end{align*}
making use of the boundedness of the trace operator, now regarding the slice $\left\{0\right\}\times\Omega$ instead of $\left[0,T\right]\times\partial\Omega$.

As for the Maxwell part, we can easily estimate
\begin{align*}
\left|\int_0^{T}\int_{\R^3}\varepsilon E\cdot\partial_t\vartheta^e\,dxdt\right|&\leq\sigma'\left\|E\right\|_{L^2\left(\left[0,T\right]\times\R^3;\R^3\right)}\left\|\partial_t\vartheta^e\right\|_{L^2\left(\left[0,T\right]\times\R^3;\R^3\right)},\\
\left|\int_0^{T}\int_{\R^3}H\cdot\curl_x\vartheta^e\,dxdt\right|&\leq \left\|H\right\|_{L^2\left(\left[0,T\right]\times\R^3;\R^3\right)}\left\|\curl_x\vartheta^e\right\|_{L^2\left(\left[0,T\right]\times\R^3;\R^3\right)},\\
\left|\int_0^{T}\int_{\R^3}\mu H\cdot\partial_t\vartheta^h\,dxdt\right|&\leq\sigma'\left\|H\right\|_{L^2\left(\left[0,T\right]\times\R^3;\R^3\right)}\left\|\partial_t\vartheta^h\right\|_{L^2\left(\left[0,T\right]\times\R^3;\R^3\right)},\\
\left|\int_0^{T}\int_{\R^3}E\cdot\curl_x\vartheta^h\,dxdt\right|&\leq \left\|E\right\|_{L^2\left(\left[0,T\right]\times\R^3;\R^3\right)}\left\|\curl_x\vartheta^h\right\|_{L^2\left(\left[0,T\right]\times\R^3;\R^3\right)}.
\end{align*}
Concerning the terms with the initial data, we first notice that for all $x\in\R^3$ we have
\begin{align*}
\vartheta^e\left(0,x\right)=\vartheta^e\left(0,x\right)-\vartheta^e\left(T,x\right)=-\int_0^{T}\partial_t\vartheta^e\left(t,x\right)\,dt,
\end{align*}
consequently
\begin{align*}
\left|\vartheta^e\left(0,x\right)\right|^2\leq T\int_0^{T}\left|\partial_t\vartheta^e\left(t,x\right)\right|^2\,dt
\end{align*}
and therefore
\begin{align*}
\left|\int_{\R^3}\varepsilon\mathring E\cdot\vartheta^e\left(0\right)\,dx\right|\leq\sigma'C\left\|\mathring E\right\|_{L^2\left(\R^3;\R^3\right)}\left\|\partial_t\vartheta^e\right\|_{L^2\left(\left[0,T\right]\times\R^3;\R^3\right)}.
\end{align*}
Similarly we conclude
\begin{align*}
\left|\int_{\R^3}\mu\mathring H\cdot\vartheta^h\left(0\right)\,dx\right|\leq\sigma'C\left\|\mathring H\right\|_{L^2\left(\R^3;\R^3\right)}\left\|\partial_t\vartheta^h\right\|_{L^2\left(\left[0,T\right]\times\R^3;\R^3\right)}.
\end{align*}
Lastly, we turn to the term with $j$. By Sobolev's embedding theorem, $H^1\left(\left[0,T\right]\times A\right)$ is continuously embedded in $L^4\left(\left[0,T\right]\times A\right)$, $A=\Omega,\Gamma$, yielding
\begin{align*}
&\left|\int_0^{T}\int_{\R^3}j\cdot\vartheta^e\,dxdt\right|\leq\left|\int_0^{T}\int_{\Omega}j^\inte\cdot\vartheta^e\,dxdt\right|+\left|\int_0^{T}\int_{\Gamma}u\cdot\vartheta^e\,dxdt\right|\\
&\leq\left\|j^\inte\right\|_{L^{\frac{4}{3}}\left(\left[0,T\right]\times\Omega;\R^3\right)}\left\|\vartheta^e\right\|_{L^4\left(\left[0,T\right]\times\Omega;\R^3\right)}+\left\|u\right\|_{L^2\left(\left[0,T\right]\times\Gamma;\R^3\right)}\left\|\vartheta^e\right\|_{L^2\left(\left[0,T\right]\times\Gamma;\R^3\right)}\\
&\leq C\left(\left\|j^\inte\right\|_{L^{\frac{4}{3}}\left(\left[0,T\right]\times\Omega;\R^3\right)}+\left\|u\right\|_{L^2\left(\left[0,T\right]\times\Gamma;\R^3\right)}\right)\left\|\vartheta^e\right\|_{H^1\left(\left[0,T\right]\times\R^3;\R^3\right)}.
\end{align*}
Altogether, we conclude that $G\left(\left(\f,\f_+\right)_\alpha,E,H,j\right)$ is a bounded linear operator on $\Psi_{T}^N\times\Theta_{T}^2$ if we equip $\Psi_{T}$ with the norm
\begin{align}\label{eq:normW1pp'}
\left\|\psi\right\|_{W^{1,p,p'}}\coloneqq\left(\int_{\R^3}\left(\int_0^{T}\int_\Omega\left(\left|\psi\right|^{p'}+\left|\partial_t\psi\right|^{p'}+\left|\partial_x\psi\right|^{p'}+\left|\partial_v\psi\right|^{p'}\right)\,dxdt\right)^{\frac{p}{p'}}\,dv\right)^{\frac{1}{p}}
\end{align}
and $\Theta_{T}$ with the usual $H^1$-norm on $\left[0,T\right]\times\R^3$.

\subsection{The space \texorpdfstring{$W^{1,p,p'}$}{W1,p,p'} and the extended functional}
The choice of the norm for the test functions $\psi$ suggests having a detailed look at the space $W^{1,p,p'}$. This space, so to say a mixed order Sobolev space, is defined to be the space consisting of all measurable functions on $\left[0,T\right]\times\Omega\times\R^3$ with values in $\R$ such that their derivatives of first order are locally integrable functions and additionally the right-hand side of \cref{eq:normW1pp'} is finite.

We first consider the corresponding $L^{p,p'}$-space, that is
\begin{align*}
L^{p,p'}&\coloneqq\left\{\psi\colon\left[0,T\right]\times\Omega\times\R^3\to\R\ \mathrm{measurable}\mid\vphantom{\left(\int_{\R^3}\left(\int_0^{T}\int_\Omega\left|\psi\right|^{p'}\,dxdt\right)^{\frac{p}{p'}}\,dv\right)^{\frac{1}{p}}}\right.\\
&\phantom{\coloneqq\ \Big\{\,}\left.\left\|\psi\right\|_{L^{p,p'}}\coloneqq\left(\int_{\R^3}\left(\int_0^{T}\int_\Omega\left|\psi\right|^{p'}\,dxdt\right)^{\frac{p}{p'}}\,dv\right)^{\frac{1}{p}}<\infty\right\}.
\end{align*}
Since we can identify $L^{p,p'}$ with the Bochner space $L^p\left(\R^3;L^{p'}\left(\left[0,T\right]\times\Omega\right)\right)$, we get the following basic property:
\begin{lemma}
	$L^{p,p'}$ is a uniformly convex Banach space.
\end{lemma}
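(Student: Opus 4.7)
The plan is to split the proof into two pieces matching the two claims (Banach space and uniformly convex), both obtained by reducing to the stated Bochner space identification $L^{p,p'}\cong L^p\left(\R^3;L^{p'}\left(\left[0,T\right]\times\Omega\right)\right)$. First I would justify the identification itself: to any $\psi\in L^{p,p'}$ one associates the map $v\mapsto\psi\left(\cdot,\cdot,v\right)$, check that this map is (equivalent to) a strongly measurable function from $\R^3$ into $L^{p'}\left(\left[0,T\right]\times\Omega\right)$ (the latter space is separable, so Pettis measurability is automatic once one has measurability of $v\mapsto\left\|\psi\left(\cdot,\cdot,v\right)\right\|_{L^{p'}}$ and of $\iint\psi\eta\,dxdt$ for $\eta$ in a countable dense set, which follows from Fubini--Tonelli), and observe that its Bochner $L^p$-norm coincides with the expression defining $\left\|\psi\right\|_{L^{p,p'}}$.

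Given the identification, completeness is standard: $L^{p'}\left(\left[0,T\right]\times\Omega\right)$ is a Banach space, and Bochner $L^p$-spaces with values in a Banach space are themselves Banach spaces; this handles the \enquote{Banach space} part of the claim. For the uniform convexity, I would invoke the classical fact that for $1<p<\infty$ the Bochner space $L^p\left(\mu;X\right)$ is uniformly convex whenever the target space $X$ is uniformly convex. Since here $1<p'<\infty$, Clarkson's inequalities show that $L^{p'}\left(\left[0,T\right]\times\Omega\right)$ is uniformly convex, and since $1<p<\infty$ as well, the theorem applies with $X=L^{p'}\left(\left[0,T\right]\times\Omega\right)$ and $\mu$ Lebesgue measure on $\R^3$, yielding uniform convexity of $L^{p,p'}$.

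The main obstacle is really just the first step, namely the careful verification of strong measurability so that the Bochner-space framework genuinely applies; once this is pinned down, everything else is a direct citation of well-known functional-analytic results (e.g., the Clarkson-type inequalities for Bochner spaces, available in standard references). Everything else is routine and does not require a detailed calculation in the body of the paper.
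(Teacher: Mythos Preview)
Your proposal is correct and follows essentially the same approach as the paper: identify $L^{p,p'}$ with the Bochner space $L^p\left(\R^3;L^{p'}\left(\left[0,T\right]\times\Omega\right)\right)$, deduce completeness from standard Bochner space theory, and obtain uniform convexity from Day's classical result that $L^p\left(\mu;X\right)$ is uniformly convex whenever $1<p<\infty$ and $X$ is uniformly convex. The paper's proof is terser and simply cites Day \cite{Day41} without spelling out the measurability verification, but the substance is the same.
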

\begin{proof}
	This is easy to see using the identification above. The uniform convexity follows from a classical result of Day \cite{Day41} since $1<p,p'<\infty$.
\end{proof}

The uniform convexity will be crucial later.

These properties of $L^{p,p'}$ carry over to $W^{1,p,p'}$ in the same natural way as such properties carry over from standard $L^p$-spaces to standard Sobolev spaces $W^{1,p}$: The space $W^{1,p,p'}$ can be interpreted as a closed subspace of $\left(L^{p,p'}\right)^7$ via the isometry
\begin{align*}
\psi\mapsto\left(\psi,\partial_t\psi,\partial_{x_1}\psi,\partial_{x_2}\psi,\partial_{x_3}\psi,\partial_{v_1}\psi,\partial_{v_2}\psi,\partial_{v_3}\psi\right).
\end{align*}
Hence, one can argue in the same way as in the standard context to prove:
\begin{lemma}
	$W^{1,p,p'}$ is a uniformly convex, reflexive Banach space.
\end{lemma}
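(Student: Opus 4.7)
The plan is to mimic the standard proof that a classical Sobolev space $W^{1,p}$ inherits uniform convexity and reflexivity from its underlying $L^p$. I would use the isometric embedding $\iota$ from $W^{1,p,p'}$ into a finite product of copies of $L^{p,p'}$ highlighted just before the statement (equipping the product with its $\ell^p$-product norm so that $\iota$ is genuinely norm-preserving), and reduce the claim to two facts about the ambient product space together with one closedness argument.

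First I would note that $(L^{p,p'})^n$ is uniformly convex: a finite $\ell^p$-product of uniformly convex Banach spaces is uniformly convex for $1<p<\infty$, and uniform convexity of each factor is exactly the content of the preceding lemma. The Milman--Pettis theorem then gives reflexivity of the product, which is of course also complete. Next I would verify that $\iota(W^{1,p,p'})$ is closed. If $\iota(\psi_n)$ converges componentwise in $L^{p,p'}$ to some tuple $(\psi,g_t,g_{x_1},\ldots,g_{v_3})$, then because $[0,T]\times\Omega$ is bounded and $p,p'>1$, a straightforward Hölder estimate yields the continuous embedding $L^{p,p'}\hookrightarrow L^1_{\mathrm{loc}}([0,T]\times\Omega\times\R^3)$; hence $L^{p,p'}$-convergence implies convergence in $\mathcal D'$. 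Passing to the distributional limit in each $\partial_\bullet\psi_n$ identifies $g_\bullet$ with $\partial_\bullet\psi$, so $\psi\in W^{1,p,p'}$ and the image is closed.

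Finally, closed subspaces of uniformly convex (respectively reflexive) Banach spaces are uniformly convex (respectively reflexive) --- the modulus of convexity descends by restriction, and reflexivity transfers along closed subspaces by the usual Hahn--Banach argument --- so $W^{1,p,p'}$ inherits completeness, uniform convexity, and reflexivity through the isometry $\iota$. The only step requiring any real attention is the distributional closedness, which rests entirely on the local $L^1$-embedding so that weak derivatives in the definition of $W^{1,p,p'}$ make sense and can be passed to the limit; everything else is a verbatim translation of the standard Sobolev-space argument from $L^p$ to $L^{p,p'}$, which is why the paper merely remarks that one can argue as in the standard context.
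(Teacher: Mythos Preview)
Your approach is exactly the paper's: embed $W^{1,p,p'}$ isometrically as a closed subspace of a uniformly convex product built from $L^{p,p'}$, inherit uniform convexity, and invoke Milman--Pettis for reflexivity. The closedness argument you supply (via $L^{p,p'}\hookrightarrow L^1_{\mathrm{loc}}$ and distributional limits of derivatives) is more detailed than what the paper writes but is the standard verification the paper alludes to.

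One small inaccuracy: the $\ell^p$-product norm on $(L^{p,p'})^8$ does \emph{not} make $\iota$ an isometry. With $a_i(v)=\int_0^T\!\int_\Omega|\psi_i|^{p'}\,dxdt$ you are comparing $\int_{\R^3}\sum_i a_i(v)^{p/p'}\,dv$ with $\int_{\R^3}\bigl(\sum_i a_i(v)\bigr)^{p/p'}\,dv$, and these differ because $p\neq p'$. The norm that \emph{does} make $\iota$ an isometry is the vector-valued mixed norm, i.e.\ view the target as $L^p\bigl(\R^3;L^{p'}([0,T]\times\Omega;\R^8)\bigr)$; this space is uniformly convex by the very same result of Day already invoked for $L^{p,p'}$ (Bochner $L^p$ with values in a uniformly convex space). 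With that correction your argument is complete and coincides with the paper's.
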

\begin{proof}
	Note that uniform convexity and completeness imply reflexivity by the classical Milman-Pettis theorem, see for example \cite{Pet39}.
\end{proof}
Now we turn back to our weak formulation. Recall that we have proved 
\begin{align*}
G\left(\left(\f,\f_+\right)_\alpha,E,H,j\right)\in\left(\Psi_{T}^N\times\Theta_{T}^2\right)^*.
\end{align*}
If we denote $\Lambda\coloneqq\overline{\Psi_{T}}^N\times\overline{\Theta_{T}}^2$, where the closure is to be understood in $W^{1,p,p'}$ and $H^1\left(\left[0,T\right]\times\R^3;\R^3\right)$, respectively, we can extend $G\left(\left(\f,\f_+\right)_\alpha,E,H,j\right)$ uniquely to a bounded linear operator on $\Lambda$ and still the formula in \cref{eq:defS} applies. Since $H^1\left(\left[0,T\right]\times\R^3;\R^3\right)$ is also a uniformly convex, reflexive Banach space and since $\Lambda$ is a closed subspace, we instantly conclude:
\begin{lemma}\label{lma:Lambdauniconrefl}
	$\Lambda$, equipped with the norm
	\begin{align*}
	\left\|\left(\left(\psia\right)_\alpha,\vartheta^e,\vartheta^h\right)\right\|_{\Lambda}\coloneqq\left(\suma\left\|\psia\right\|_{W^{1,p,p'}}^2+\left\|\vartheta^e\right\|_{H^1\left(\left[0,T\right]\times\R^3;\R^3\right)}^2+\left\|\vartheta^h\right\|_{H^1\left(\left[0,T\right]\times\R^3;\R^3\right)}^2\right)^{\frac{1}{2}},
	\end{align*}
	is a uniformly convex, reflexive Banach space.
\end{lemma}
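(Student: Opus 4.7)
The plan is to realize $\Lambda$ as a closed subspace of an $\ell^2$-type product of uniformly convex, reflexive Banach spaces, and then inherit the properties from this construction.

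First, I would note that the closures $\overline{\Psi_T}$ (in $W^{1,p,p'}$) and $\overline{\Theta_T}$ (in $H^1([0,T]\times\mathbb{R}^3;\mathbb{R}^3)$) are closed subspaces of spaces that are already known to be uniformly convex and reflexive: the space $W^{1,p,p'}$ by the previous lemma and the Hilbert space $H^1$ trivially. Since closed subspaces of uniformly convex Banach spaces are uniformly convex (the modulus of convexity can only improve under restriction to a subspace) and closed subspaces of reflexive Banach spaces are reflexive (a standard Hahn-Banach argument), each factor $\overline{\Psi_T}$ and $\overline{\Theta_T}$ is itself a uniformly convex, reflexive Banach space.

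Next, I would invoke the classical fact that the finite $\ell^2$-direct sum of uniformly convex Banach spaces is uniformly convex: if $X_1,\dots,X_n$ are uniformly convex with moduli of convexity $\delta_1,\dots,\delta_n$, then $\bigl(X_1\times\cdots\times X_n,\,\|\cdot\|_2\bigr)$ with $\|(x_1,\dots,x_n)\|_2=\bigl(\sum_i\|x_i\|_{X_i}^2\bigr)^{1/2}$ is uniformly convex, as follows by a short direct estimate from the individual moduli (this is in the same spirit as Day's theorem already cited for the previous lemma). Applied to the $N+2$ factors constituting $\Lambda$, this yields uniform convexity of the product with exactly the norm written in the statement. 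Reflexivity then follows from the Milman-Pettis theorem, completeness being immediate from the product structure over complete spaces.

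The bulk of the work is routine, so no real obstacle is expected; the only substantive ingredient is the $\ell^2$-product stability of uniform convexity, which is standard. I would simply cite Day's result and omit the computation, mirroring the style of the preceding lemma's proof.
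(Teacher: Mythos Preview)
Your proposal is correct and essentially matches the paper's proof: the paper likewise invokes the stability of uniform convexity under finite $\ell^2$-products (citing Clarkson rather than Day) and then concludes reflexivity via Milman--Pettis. The only cosmetic difference is the order of operations---the paper forms the product of the ambient spaces first and then passes to the closed subspace $\Lambda$, whereas you pass to the closed subspaces $\overline{\Psi_T}$, $\overline{\Theta_T}$ first and then form the product---but this is immaterial.
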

\begin{proof}
	By Clarkson \cite{Cla36}, a finite Cartesian product of uniformly convex spaces is again uniformly convex if one sums up the norms properly. Note that we have chosen the $2$-norm on $\R^{N+2}$ to sum up the particular norms (any other $\tilde p$-Norm, $1<\tilde p<\infty$, would work as well). Thus, $\Lambda$ is uniformly convex. Again by completeness of $\Lambda$ and the Milman-Pettis theorem, we conclude that $\Lambda$ is additionally reflexive.
\end{proof}

Thus, we can regard $G\left(\left(\f,\f_+\right)_\alpha,E,H,j\right)\in\Lambda^*$ as an element of the dual space of a uniformly convex, reflexive Banach space, and we have that, under the assumptions made in the beginning of \cref{sec:weakrevisited}, $G\left(\left(\f,\f_+\right)_\alpha,E,H,j\right)=0$ is equivalent to $\left(\left(\f,\f_+\right)_\alpha,E,H,j\right)$ being a weak solution of the Vlasov-Maxwell system \eqref{eq:WholeSystem} on the time interval $\left[0,T\right]$.

Notice that $\Lambda$ is a proper subspace of $\left(W^{1,p,p'}\right)^N\times\left(H^1\left(\left[0,T\right]\times\R^3;\R^3\right)\right)^2$ since $\psi\in\Psi_{T}$ and $\vartheta\in\Theta_{T}$ vanish for $t=T$.

Later, in \cref{sec:firstorder}, we want to derive first order optimality conditions for a minimizer of \cref{prob:P}. To this end, it will be helpful that $G$ ($\mathcal G$, to be more precise, see below) is differentiable in $\left(\left(\f,\f_+\right)_\alpha,E,H,u\right)$ with respect to a suitable norm; here and in the following, differentiability always means differentiability in the sense of Fréchet. As in the formulation of \cref{prob:P}, we restrict ourselves to $\left(\left(\f,\f_+\right)_\alpha,E,H,u\right)\in\mathcal Y\times\mathcal U$. Note that this yields $\f\in L^q\left(\left[0,T\right]\times\Omega\times\R^3\right)$ by interpolation and thus we can carry through the previous considerations of this section. We equip $\mathcal Y\times\mathcal U$ with the norm
\begin{align*}
\left\|\left(y,u\right)\right\|_{\mathcal Y\times\mathcal U}&=\left\|\left(\left(\f,\f_+\right)_\alpha,E,H,u\right)\right\|_{\mathcal Y\times\mathcal U}\\
&\coloneqq\suma\left(\left\|\f\right\|_{\Y}+\left\|\f_+\right\|_{L^q\left(\gamma_{T}^+,d\gamm\right)}\right)+\left\|\left(E,H\right)\right\|_{L^2\left(\left[0,T\right]\times\R^3;\R^6\right)}+\left\|u\right\|_{\mathcal U},
\end{align*}
where
\begin{align*}
\left\|f\right\|_{\Y}&\coloneqq\left\|f\right\|_{L_\kin^1\left(\left[0,T\right]\times\Omega\times\R^3\right)}+\left\|f\right\|_{L^\infty\left(\left[0,T\right]\times\Omega\times\R^3\right)}+\Na\left(f\right).
\end{align*}
The latter indeed is a norm on $\Y$ since $\Na$ is a semi-norm on $\Y$, as is easily seen. Note that the following lemma does not need the adding of $\Na$ as above; however, this will heavily be exploited in \cref{sec:firstorder}. 

\begin{lemma}\label{lma:Gdiff} The map
	\begin{align*}
	\mathcal G\colon\mathcal Y\times\mathcal U&\to\Lambda^*,\\
	\mathcal G\left(\left(\f,\f_+\right)_\alpha,E,H,u\right)&=G\left(\left(\f,\f_+\right)_\alpha,E,H,j^\inte+u\right)
	\end{align*}
	is differentiable and we have
	\begin{align}\label{eq:Gdiff}
	&\left(\mathcal G'\left(y,u\right)\left(\delta y,\delta u\right)\right)\left(\left(\psia\right)_\alpha,\vartheta^e,\vartheta^h\right)\nonumber\\
	&=\suma\left(-\int_0^{T}\int_\Omega\int_{\R^3}\left(\left(\partial_t\psia+\v\cdot\partial_x\psia+\e\left(E+\v\times H\right)\cdot\partial_v\psia\right)\delta\f\right.\right.\nonumber\\
	&\omit\hfill$\displaystyle\left.+\e\left(\delta E+\v\times\delta H\right)\f\cdot\partial_v\psia\right)\,dvdxdt$\nonumber\\
	&\phantom{=\suma\Big(}\left.+\int_{\gamma_{T}^+}\delta\f_+\psia\,d\gamm-\int_{\gamma_{T}^-}\a\left(K\delta\f_+\right)\psia\,d\gamm\right)\nonumber\\
	&\phantom{=\;}+\int_0^{T}\int_{\R^3}\left(\varepsilon\delta E\cdot\partial_t\vartheta^e-\delta H\cdot\curl_x\vartheta^e-4\pi\left(\delta j^\inte+\delta u\right)\cdot\vartheta^e\right)\,dxdt\nonumber\\
	&\phantom{=\;}+\int_0^{T}\int_{\R^3}\left(\mu\delta H\cdot\partial_t\vartheta^h+\delta E\cdot\curl_x\vartheta^h\right)\,dxdt,
	\end{align}
	where, in accordance with the previous notation,
	\begin{align*}
	\delta j^\inte=\suma\e\int_{\R^3}\v\delta\f\,dv.
	\end{align*}
\end{lemma}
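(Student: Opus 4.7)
The plan is to read off from~\eqref{eq:defS} that $\mathcal G$ is the sum of three types of contributions and to differentiate each separately: (i) \emph{constants} in $(y,u)$, namely the initial-data terms involving $\mathring\f$, $\mathring E$, $\mathring H$; (ii) \emph{bounded linear} maps in a single state variable, namely $\int\partial_t\psia\f$, $\int\v\cdot\partial_x\psia\f$, $\int\f_+\psia\,d\gamm$, $\int\a(K\f_+)\psia\,d\gamm$, $\int\varepsilon E\cdot\partial_t\vartheta^e$, $\int H\cdot\curl_x\vartheta^e$, $\int\mu H\cdot\partial_t\vartheta^h$, $\int E\cdot\curl_x\vartheta^h$, and $\int(j^\inte+u)\cdot\vartheta^e$; and (iii) two genuinely \emph{bounded bilinear} maps, namely $(\f,E)\mapsto\int\e E\cdot\partial_v\psia\,\f$ and $(\f,H)\mapsto\int\e(\v\times H)\cdot\partial_v\psia\,\f$. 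Since constants have zero derivative, bounded linear maps equal their own derivative, and bounded bilinear maps are Fréchet differentiable with the usual product-rule derivative, summing the individual derivatives will deliver exactly~\eqref{eq:Gdiff}.

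The boundedness of all the pieces (i)–(iii) as maps into $\Lambda^*$ is precisely what has been carried out term by term in \cref{sec:somestimates}. The one embedding worth highlighting is $\Y\hookrightarrow L^q([0,T]\times\Omega\times\R^3)$, which is required to apply those estimates and which follows by interpolation between $L_\kin^1\subset L^1$ and $L^\infty$. Likewise, the map $\f\mapsto j^\inte=\e\int_{\R^3}\v\f\,dv$ is linear and bounded from $\Y$ into $L^{\frac{4}{3}}([0,T]\times\Omega;\R^3)$ by exactly the interpolation argument of \cref{lma:jintconstr} (the inequality there depends only on the $L^\infty$- and $L_\kin^1$-norms of $\f$, so it goes through without reference to the constraints \cref{eq:constrfinfty,eq:constrener}). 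Pairing with the embedding $H^1([0,T]\times\R^3)\hookrightarrow L^4([0,T]\times\Omega)$ used at the end of \cref{sec:somestimates} then yields that $(\f,\vartheta^e)\mapsto\int j^\inte\cdot\vartheta^e$ is bounded bilinear, hence its $\f$-derivative is the stated linear operator $\vartheta^e\mapsto\int\delta j^\inte\cdot\vartheta^e$.

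It remains to invoke the textbook fact that every bounded bilinear $B\colon X_1\times X_2\to Y$ is Fréchet differentiable at every $(x_1,x_2)\in X_1\times X_2$ with derivative $(\delta_1,\delta_2)\mapsto B(\delta_1,x_2)+B(x_1,\delta_2)$: the remainder is exactly $B(\delta_1,\delta_2)$, whose norm is bounded by $\|B\|\,\|\delta_1\|_{X_1}\|\delta_2\|_{X_2}=o(\|(\delta_1,\delta_2)\|)$. Applying this to the two bilinear terms (with $X_1=\Y$, $X_2=L^2([0,T]\times\Omega;\R^3)$, and $Y=W^{1,p,p'}{}^*$, respectively), adding the self-derivatives of the linear pieces and the zero derivative of the constants, and rearranging the summands produces formula~\eqref{eq:Gdiff}. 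I do not foresee any real obstacle here: the lemma is essentially a rereading of the estimates of \cref{sec:somestimates} in the language of Fréchet differentiability, with the $L^q$-interpolation of $\Y$ being the only nontrivial step. (In particular, the $\Na(\f)$ summand in the $\Y$-norm plays no role in this proof; its importance is reserved for \cref{sec:firstorder}.)
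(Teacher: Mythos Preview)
Your proposal is correct and follows essentially the same route as the paper: both identify the candidate derivative by applying the product rule to the bilinear nonlinearity, invoke the term-by-term estimates of \cref{sec:somestimates} (together with the interpolation $\Y\hookrightarrow L^q$ and the $L^{4/3}$-bound on $\delta j^\inte$ from the argument of \cref{lma:jintconstr}) to see that $\mathcal G'(y,u)$ is bounded linear, and then observe that the remainder is exactly $-\suma\e\int(\delta E+\v\times\delta H)\cdot\partial_v\psia\,\delta\f$, whence $o(\|(\delta y,\delta u)\|)$. The only cosmetic difference is organizational: you decompose $\mathcal G$ a priori into constant, linear, and bilinear pieces and differentiate each, whereas the paper writes down the full candidate $\mathcal G'(y,u)$ at once and verifies boundedness and the remainder directly.
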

\begin{proof}
	The canonical candidate for the linearization at a point $\left(y,u\right)$ in direction $\left(\delta y,\delta u\right)=\left(\left(\delta\f,\delta\f_+\right)_\alpha,\delta E,\delta H,\delta u\right)$ is $\mathcal G'\left(y,u\right)\left(\delta y,\delta u\right)$ as stated above. Recalling the estimates of \cref{sec:somestimates}, we see that $\mathcal G'\left(y,u\right)\left(\delta y,\delta u\right)\in\Lambda^*$ and
	\begin{align}\label{eq:estG'Lambda}
	&\left\|\mathcal G'\left(y,u\right)\left(\delta y,\delta u\right)\right\|_{\Lambda^*}\nonumber\\
	&\leq C\left(\suma\left(\left\|\delta\f\right\|_{L^q\left(\left[0,T\right]\times\Omega\times\R^3\right)}+\left\|\left(E,H\right)\right\|_{L^2\left(\left[0,T\right]\times\R^3;\R^6\right)}\left\|\delta\f\right\|_{L^q\left(\left[0,T\right]\times\Omega\times\R^3\right)}\right.\right.\nonumber\\
	&\omit\hfill$\displaystyle\left.+\left\|\f\right\|_{L^q\left(\left[0,T\right]\times\Omega\times\R^3\right)}\left\|\left(\delta E,\delta H\right)\right\|_{L^2\left(\left[0,T\right]\times\R^3;\R^6\right)}+\left\|\delta\f_+\right\|_{L^q\left(\gamma_{T}^+,d\gamm\right)}\right)$\nonumber\\
	&\phantom{\leq\;}\left.\vphantom{\suma}+\left\|\left(\delta E,\delta H\right)\right\|_{L^2\left(\left[0,T\right]\times\R^3;\R^6\right)}+\left\|\delta j^\inte\right\|_{L^{\frac{4}{3}}\left(\left[0,T\right]\times\Omega;\R^3\right)}+\left\|\delta u\right\|_{L^2\left(\left[0,T\right]\times\Gamma;\R^3\right)}\right),
	\end{align}
	where $C$ only depends on $T$, $\Omega$, $\Gamma$, $\sigma'$, and the $\e$.
	
	Similarly to \cref{eq:rhopoint,eq:jintest43wholetime}, we deduce
	\begin{align*}
	&\left\|\delta j^\inte\right\|_{L^{\frac{4}{3}}\left(\left[0,T\right]\times\Omega;\R^3\right)}\\
	&\leq\left(\suma\left|\e\right|^4\left(\frac{4\pi}{3}\left\|\delta\f\right\|_{L^\infty\left(\left[0,T\right]\times\Omega\times\R^3\right)}+1\right)^4\right)^{\frac{1}{4}}\suma\left\|\delta\f\right\|_{L_\kin^1\left(\left[0,T\right]\times\Omega\times\R^3\right)}^{\frac{3}{4}}.
	\end{align*}
	This and \cref{eq:estG'Lambda} yield that $\mathcal G'\left(y,u\right)\left(\delta y,\delta u\right)\to 0$ in $\Lambda^*$ when $\left(\delta y,\delta u\right)\to 0$ in $\mathcal Y\times\mathcal U$. Therefore, $\mathcal G'\left(y,u\right)\colon\mathcal Y\times\mathcal U\to\Lambda^*$ is a bounded linear map; linearity is of course easy to see.
	
	To show that $\mathcal G'\left(y,u\right)$ indeed is the derivative of $\mathcal G$ at $\left(y,u\right)$, we consider the remainder, which only contains terms that come from the nonlinearity in the Vlasov-Maxwell system:
	\begin{align*}
	&\left(\mathcal G\left(y+\delta y,u+\delta u\right)-\mathcal G\left(y,u\right)-\mathcal G'\left(y,u\right)\left(\delta y,\delta u\right)\right)\left(\left(\psia\right)_\alpha,\vartheta^e,\vartheta^h\right)\\
	&=-\suma\e\int_0^{T}\int_\Omega\int_{\R^3}\left(\delta E+\v\times\delta H\right)\cdot\partial_v\psia\delta\f\,dvdxdt.
	\end{align*}
	Hence, again using the corresponding estimate of \cref{sec:somestimates},
	\begin{align*}
	&\left\|\mathcal G\left(y+\delta y,u+\delta u\right)-\mathcal G\left(y,u\right)-\mathcal G'\left(y,u\right)\left(\delta y,\delta u\right)\right\|_{\Lambda^*}\\
	&\leq C\suma\left\|\delta\f\right\|_{L^q\left(\left[0,T\right]\times\Omega\times\R^3\right)}\left\|\left(\delta E,\delta H\right)\right\|_{L^2\left(\left[0,T\right]\times\Omega;\R^6\right)}=o\left(\left\|\left(\delta y,\delta u\right)\right\|_{\mathcal Y\times\mathcal U}\right)
	\end{align*}
	for $\left(\delta y,\delta u\right)\to 0$ in $\mathcal Y\times\mathcal U$, where $C$ only depends on $\sigma'$ and the $\e$. This completes the proof.
\end{proof}

\section{First order optimality conditions}\label{sec:firstorder}
A standard step during treating an optimization problem is to derive first order necessary optimality conditions. Typically, one exploits differentiability of the control-to-state operator. Unfortunately, we do not have such an operator on hand since we do not even have uniqueness of weak solutions for a fixed control $u$. Lions \cite{Lio85} introduced a way to tackle optimization problems having a PDE (system), that (possibly) admits multiple solutions, as a constraint. The main strategy therefore is to consider approximate optimization problems that no longer have the PDE (system) as a constraint but merely penalize points that do not solve this PDE (system). For such approximate problems, one can show that minimizers exist and derive first order optimality conditions. Then the penalization parameter is driven to $\infty$ and one hopes the PDE (system) to be solved in the limit, that is to say the limit of minimizers (in whatever sense) is a solution of the PDE (system), and moreover it is a minimizer of the original problem. Furthermore, one may show that passage to the limit in the approximate optimality conditions, in particular in the adjoint PDE (system), is possible, too.

We fix $q>2$ and $p,p'$ satisfying \cref{eq:exponents} so that the results of \cref{sec:weakrevisited} can be applied.

\subsection{An approximate optimization problem}
Following the outlined strategy, we introduce a penalization parameter $s>0$ (which will be driven to $\infty$ later) and consider the approximate problem
\begin{align}\tag{\mbox{P$_\text{s}$}}\label{prob:Ps}
\left.
\begin{aligned}
\min_{y\in\mathcal Y,u\in\mathcal U}\quad&\mathcal J_s\left(y,u\right)=\frac{1}{q}\suma \w\left\|\f_+\right\|_{L^q\left(\gamma_{T}^+,d\gamm\right)}^q+\frac{1}{r}\left\|u\right\|_{\mathcal U}^r+\frac{s}{2}\left\|\mathcal G\left(y,u\right)\right\|_{\Lambda^*}^2\\
\mathrm{s.t.}\quad&\cref{eq:constrfinfty},\ \cref{eq:constrener},\ \mathrm{and}\ \cref{eq:constrNa}\ \mathrm{hold},
\end{aligned}
\right\}
\end{align}
where the additional constraint is
\begin{align}\label{eq:constrNa}
\frac{1}{2}\suma\Na\left(\f\right)^2\leq\mathcal L\left(u\right)\coloneqq\frac{8\pi}{\sigma}\suma\left|\e\right|^2\left\|\mathring\f\right\|_{L^\infty\left(\left[0,T\right]\times\Omega\times\R^3\right)}^2\mathcal I\left(u\right);
\end{align}
see \cref{eq:constrener} for the definition of $\mathcal I\left(u\right)$. On the one hand, \cref{eq:constrNa} is automatically satisfied if $\mathcal G\left(y,u\right)=0$ and \cref{eq:constrfinfty,eq:constrener} hold due to \cref{eq:estNa}. Hence, feasible points for \cref{prob:P} are also feasible for \cref{prob:Ps}. On the other hand, \cref{eq:constrNa} ensures a certain weak lower semi-continuity of $\left\|\mathcal G\right\|_{\Lambda^*}$ by the following lemma (and this is conversely the very reason why we impose \cref{eq:constrNa}):
\begin{lemma}\label{lma:Gwlsc}
	Let $\left(\left(y_k,u_k\right)\right)\subset\mathcal Y\times\mathcal U$ with $\f_k\geq 0$ and $u\in\mathcal U$, $\f\in L^\infty\left(\left[0,T\right]\times\Omega\times\R^3\right)$, $\f_+\in L^q\left(\gamma_{T}^+,d\gamm\right)$, $\left(E,H\right)\in L^2\left(\left[0,T\right]\times\Omega;\R^6\right)$ such that for $k\to\infty$ it holds that: $u_k\wto u$ in $\mathcal U$, $\f_k\overset{*}\wto\f$ in $L^\infty\left(\left[0,T\right]\times\Omega\times\R^3\right)$, $\f_{k,+}\wto\f_+$ in $L^q\left(\gamma_{T}^+,d\gamm\right)$, $\left(E_k,H_k\right)\wto\left(E,H\right)$ in $L^2\left(\left[0,T\right]\times\Omega;\R^6\right)$. Furthermore, assume that \cref{eq:constrener,eq:constrNa} are satisfied along the sequence. Then $\left(y,u\right)\in\mathcal Y\times\mathcal U$, \cref{eq:constrener,eq:constrNa} are preserved in the limit, and there holds
	\begin{align}\label{eq:Gwlsc}
	\left\|\mathcal G\left(y,u\right)\right\|_{\Lambda^*}\leq\liminf_{k\to\infty}\left\|\mathcal G\left(y_k,u_k\right)\right\|_{\Lambda^*}.
	\end{align}
\end{lemma}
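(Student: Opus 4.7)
The plan is to establish the three claims in sequence: first that $(y,u)\in\mathcal Y\times\mathcal U$, then the preservation of \cref{eq:constrener,eq:constrNa}, and finally \cref{eq:Gwlsc}. Membership $u\in\mathcal U$ is given, and for $\f\in\Y$ the $L^\infty$-bound is immediate from weak-$*$ lower semi-continuity. To put $\f$ in $L_\kin^1$, I would truncate the $v$-integration to $B_R$, exploit $\vo\mathbf 1_{B_R}\in L^1$ together with weak-$*$ convergence of $\f_k$, bound the result uniformly via \cref{eq:constrener}, and let $R\to\infty$ using monotone convergence. For $\Na(\f)<\infty$, the transport derivatives $\partial_t\f_k+\v\cdot\partial_x\f_k$ are uniformly bounded in $L^2([0,T]\times\Omega;H^{-1}(\R^3))$ by \cref{eq:constrNa} and the boundedness of $u_k$ in $\mathcal U$ (and hence of $\mathcal L(u_k)$); by reflexivity I extract a weak limit, identify it as $\partial_t\f+\v\cdot\partial_x\f$ distributionally using $\f_k\overset{*}\wto\f$, and obtain $\Na(\f)\leq\liminf\Na(\f_k)$ from weak lower semi-continuity of the norm.

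The second step transcribes the argument from the proof of \cref{thm:ExistenceMinimizer}. The compact embedding $\mathcal U\hookrightarrow\hookrightarrow L^2([0,T]\times\Gamma;\R^3)$ gives $\|u_k\|_{L^2}\to\|u\|_{L^2}$, so $\mathcal I(u_k)\to\mathcal I(u)$ and $\mathcal L(u_k)\to\mathcal L(u)$. Applying the truncate-and-take-liminf argument to the $\vo\f_k$-integral, combined with weak lower semi-continuity of the $L^2$-norms of $E_k,H_k$, passes \cref{eq:constrener} to the limit. For \cref{eq:constrNa}, combining $\Na(\f)\leq\liminf\Na(\f_k)$ (from Step~1) with $\mathcal L(u_k)\to\mathcal L(u)$ in the pointwise inequality $\tfrac{1}{2}\suma\Na(\f_k)^2\leq\mathcal L(u_k)$ yields the preserved constraint.

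For \cref{eq:Gwlsc} I may assume without loss of generality that the right-hand side is finite, extract a subsequence achieving the liminf, and then, by reflexivity of $\Lambda^*$ (see \cref{lma:Lambdauniconrefl}), extract a further subsequence $\mathcal G(y_k,u_k)\wto L$ in $\Lambda^*$ with $\|L\|_{\Lambda^*}\leq\liminf_k\|\mathcal G(y_k,u_k)\|_{\Lambda^*}$. It remains to identify $L=\mathcal G(y,u)$. By density of $\Psi_{T}^N\times\Theta_{T}^2$ in $\Lambda$ it suffices to test against smooth $\lambda=((\psia)_\alpha,\vartheta^e,\vartheta^h)$; since these test functions have compact support in $v$, the terms of $\mathcal G(y_k,u_k)(\lambda)$ that are linear in $y$ converge to their counterparts via the weak (respectively weak-$*$) convergences of $\f_k$, $\f_{k,+}$, $E_k$, and $H_k$. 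The external-current term $\int u_k\cdot\vartheta^e$ converges by strong $L^2$-convergence of $u_k$, and the internal-current term is handled by the same truncate-$v$-and-send-$r\to\infty$ argument used in \cref{thm:ExistenceMinimizer}, with the tail controlled through \cref{eq:constrener}.

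The main obstacle, and the reason \cref{eq:constrNa} was imposed, is the nonlinear term $\int\e(E_k+\v\times H_k)\cdot\partial_v\psia\f_k\,dvdxdt$. I would reproduce the momentum-averaging argument of \cref{thm:ExistenceMinimizer}: factorize $\psia=\psi_1(t,x)\psi_2(v)$ by a Weierstrass density argument, pick a cutoff $\eta\in C_c^\infty(]0,T[\times\Omega\times B_r)$, and compute $\partial_t(\eta\f_k)+\v\cdot\partial_x(\eta\f_k)$ distributionally. Here there is no Vlasov equation available, but \cref{eq:constrNa} makes $\partial_t\f_k+\v\cdot\partial_x\f_k$ uniformly bounded in $L^2([0,T]\times\Omega;H^{-1}(\R^3))$; solving $(1-\Delta_v)w_k=\partial_t\f_k+\v\cdot\partial_x\f_k$ pointwise in $(t,x)$ yields a measurable decomposition $g_0^k+\div_v g_1^k$ with $g_0^k,g_1^k$ uniformly bounded in $L^2(\R\times\R^3\times B_r)$ after multiplication by $\eta$ and including the $L^\infty$-bounded derivative terms coming from $\eta$. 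Invoking \cref{lma:MomentumAveraging} produces uniform $H^{1/4}$-regularity of $\int\zeta(v)(\eta\f_k)\,dv$ and hence, via compact embedding into $L^2([0,T]\times\Omega)$, strong convergence of $\int\zeta(v)\f_k\,dv$ to $\int\zeta(v)\f\,dv$ in $L^2_\loc$, exactly as in \cref{eq:etaest1,eq:etaest2,eq:Iterstrongconv}. Combining this with the weak $L^2$-convergence $E_k,H_k\wto E,H$ allows passage to the limit in the bilinear product and finishes the identification $L=\mathcal G(y,u)$.
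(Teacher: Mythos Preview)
Your proposal is correct and follows essentially the same route as the paper: the same truncation-in-$v$ argument for $L_\kin^1$, the same use of \cref{eq:constrNa} to bound the transport derivatives in $L^2H^{-1}$ and identify their weak limit, the same passage to the limit in \cref{eq:constrener,eq:constrNa} via the compact embedding $\mathcal U\hookrightarrow L^2$, and the same momentum-averaging argument with a cutoff $\eta$ and factorized test functions for the nonlinear term. Your $(1-\Delta_v)$-decomposition is precisely the paper's choice $g_k+\div_v h_k$ written out via the Riesz isomorphism; the only cosmetic difference is that in Step~3 you extract a weak limit $L$ of $\mathcal G(y_k,u_k)$ in $\Lambda^*$ by reflexivity and then identify it, whereas the paper shows $\mathcal G(y_k,u_k)(\lambda)\to\mathcal G(y,u)(\lambda)$ directly and recovers the full-sequence statement by a sub-subsequence argument---these are equivalent.
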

\begin{proof}
	Note that $\left(u_k\right)$ converges to $u$ strongly in $L^2\left(\left[0,T\right]\times\Gamma;\R^3\right)$.
	
	\textit{Step 1:} $\f\in\Y$: Consider $g_k\coloneqq\partial_t\f_k+\v\cdot\partial_x\f_k$. In light of \cref{eq:constrNa} and the boundedness of $\left(u_k\right)$, the sequence $\left(g_k\right)$ is bounded in $L^2\left(\left[0,T\right]\times\Omega;H^{-1}\left(\R^3\right)\right)$. Therefore, $\left(g_k\right)$ converges, after possibly extracting a suitable subsequence, to some $g$ weakly-* in $L^2\left(\left[0,T\right]\times\Omega;H^{-1}\left(\R^3\right)\right)$. Since for all $\varphi\in C_c^\infty\left(\left]0,T\right[\times\Omega\times\R^3\right)$
	\begin{align*}
	g\left(\varphi\right)&=\lim_{k\to\infty}\left(\partial_t\f_k+\v\cdot\partial_x\f_k\right)\left(\varphi\right)=\lim_{k\to\infty}-\int_0^{T}\int_\Omega\int_{\R^3}\left(\f_k\partial_t\varphi+\v\f_k\cdot\partial_x\varphi\right)\,dvdxdt\\
	&=-\int_0^{T}\int_\Omega\int_{\R^3}\left(\f\partial_t\varphi+\v\f\cdot\partial_x\varphi\right)\,dvdxdt=\left(\partial_t\f+\v\cdot\partial_x\f\right)\left(\varphi\right)
	\end{align*}
	and since $C_c^\infty\left(\left]0,T\right[\times\Omega\times\R^3\right)$ is dense in $L^2\left(\left[0,T\right]\times\Omega;H^1\left(\R^3\right)\right)$, we have 
	\begin{align*}
	\partial_t\f+\v\cdot\partial_x\f=g\in L^2\left(\left[0,T\right]\times\Omega;H^{-1}\left(\R^3\right)\right).
	\end{align*}
	As in the proof of \cref{thm:ExistenceMinimizer}, we also see that $\f\in\left(L_\kin^1\cap L^\infty\right)\left(\left[0,T\right]\times\Omega\times\R^3\right)$ and that \cref{eq:constrener} is preserved in the limit. Altogether, $\left(y,u\right)\in\mathcal Y\times\mathcal U$.
	
	\textit{Step 2:} \cref{eq:constrNa} is preserved in the limit as well: Let $\iota>0$. By $u_k\to u$ in $L^2\left(\left[0,T\right]\times\Gamma;\R^3\right)$ we have $\mathcal I\left(u_k\right)\to\mathcal I\left(u\right)$ for $k\to\infty$, whence
	\begin{align*}
	\frac{1}{2}\suma\Na\left(\f_k\right)^2\leq\mathcal L\left(u\right)+\iota
	\end{align*}
	for large $k$. By the weak-*-convergence obtained in Step 1, there holds
	\begin{align*}
	\Na\left(\f\right)\leq\liminf_{k\to\infty}\Na\left(\f_k\right).
	\end{align*}
	Thus,
	\begin{align*}
	\frac{1}{2}\suma\Na\left(\f\right)^2\leq\liminf_{k\to\infty}\frac{1}{2}\suma\Na\left(\f_k\right)^2\leq\mathcal L\left(u\right)+\iota.
	\end{align*}
	Since $\iota>0$ was arbitrary, we are done.
	
	\textit{Step 3:} Proof of \cref{eq:Gwlsc}: To this end, we have to pass to the limit in the right-hand sides of \cref{eq:Vlasovweak,eq:Maxwellweak}; this procedure has already been carried out in a similar situation, see the proof of \cref{thm:ExistenceMinimizer}. As a consequence of \cref{lma:jintconstr}, we may assume that $\left(j_k^\inte\right)$ converges weakly to $j^\inte$ in $L^{\frac{4}{3}}\left(\left[0,T\right]\times\Omega;\R^3\right)$; in order to verify that this weak limit indeed is $j^\inte$, we recall that an energy estimate like \cref{eq:constrener} is sufficient, see proof of \cref{thm:ExistenceMinimizer}. Hence, we can easily pass to the limit in all terms but the nonlinear one, first for $\left(\left(\psia\right)_\alpha,\vartheta^e,\vartheta^h\right)\in\Psi_{T}^N\times\Theta_{T}^2$ and then for arbitrary $\left(\left(\psia\right)_\alpha,\vartheta^e,\vartheta^h\right)\in\Lambda$ with the help of \cref{sec:somestimates}. Regarding the nonlinear term, we first consider $\psia\in\Psi_{T}$ that factorizes, i.e., $\psia\left(t,x,v\right)=\psia_1\left(t,x\right)\psia_2\left(v\right)$. For some $\iota>0$ and $\zeta\in C_c^\infty\left(\R^3\right)$ with $\supp\zeta\subset B_r$ (for some $r>0$), we find an $\eta\in C_c^\infty\left(\left]0,T\right[\times\Omega\times B_r\right)$ such that
	\begin{align}\label{eq:etaestwsc}
	\left\|\int_{\R^3}\zeta\left(v\right)\left(\left(1-\eta\right)\left(\f_k-\f\right)\right)\left(\cdot,\cdot,v\right)\,dv\right\|_{L^2\left(\left[0,T\right]\times\Omega\right)}<\iota;
	\end{align}
	note that the $L^2$-norms of the $\f_k$ are uniformly bounded. For this fixed $\eta$, there holds
	\begin{align*}
	\partial_t\left(\eta\f_k\right)+\v\cdot\partial_x\left(\eta\f_k\right)=\f_k\partial_t\eta+\f_k\v\cdot\partial_x\eta+\eta g_k-h_k\cdot\partial_v\eta+\div_v\left(\eta h_k\right)
	\end{align*}
	on whole $\R\times\R^3\times\R^3$ in the sense of distributions, where $g_k\in L^2\left(\left[0,T\right]\times\Omega\times\R^3\right)$, $h_k\in L^2\left(\left[0,T\right]\times\Omega\times\R^3;\R^3\right)$ are chosen such that
	\begin{align*}
	\partial_t\f_k+\v\cdot\partial_x\f_k&=g_k+\div_vh_k,\\
	\left\|\partial_t\f_k+\v\cdot\partial_x\f_k\right\|_{L^2\left(\left[0,T\right]\times\Omega;H^{-1}\left(\R^3\right)\right)}&=\left(\left\|g_k\right\|_{L^2\left(\left[0,T\right]\times\Omega\times\R^3\right)}^2+\left\|h_k\right\|_{L^2\left(\left[0,T\right]\times\Omega\times\R^3;\R^3\right)}^2\right)^{\frac{1}{2}}.
	\end{align*} 
	Again by the uniform boundedness of the $\f_k$ in $L^2\left(\left[0,T\right]\times\Omega\times\R^3\right)$, we can easily estimate
	\begin{align*}
	\left\|\partial_t\left(\eta\f_k\right)+\v\cdot\partial_x\left(\eta\f_k\right)\right\|_{L^2\left(\R\times\R^3;H^{-1}\left(\R^3\right)\right)}\leq C\left(\eta\right)\left(1+\Na\left(\f_k\right)\right).
	\end{align*}
	By virtue of \cref{eq:constrNa} and the boundedness of $\left(u_k\right)$, the right-hand side is uniformly bounded in $k$, whence we have for a subsequence,
	\begin{align*}
	\int_{\R^3}\zeta\left(v\right)\left(\eta\f_{k_j}\right)\left(\cdot,\cdot,v\right)\,dv\overset{j\to\infty}\longrightarrow\int_{\R^3}\zeta\left(v\right)\left(\eta\f\right)\left(\cdot,\cdot,v\right)\,dv
	\end{align*}
	in $L^2\left(\left[0,T\right]\times\Omega\right)$ due to \cref{lma:MomentumAveraging}. Assuming that all $\psia\in\Psi_{T}$ factorize and using \cref{eq:etaestwsc}, we may now pass to the limit in all terms along a common subsequence, that is
	\begin{align*}
	\mathcal G\left(y,u\right)\left(\left(\psia\right)_\alpha,\vartheta^e,\vartheta^h\right)=\lim_{j\to\infty}\mathcal G\left(y_{k_j},u_{k_j}\right)\left(\left(\psia\right)_\alpha,\vartheta^e,\vartheta^h\right).
	\end{align*}
	Since the limit on the left-hand side does not depend on the extraction of this subsequence, we conclude that the equality above even holds for the full limit $k\to\infty$ by using the standard subsubsequence argument. Thus,
	\begin{align*}
	\left|\mathcal G\left(y,u\right)\left(\left(\psia\right)_\alpha,\vartheta^e,\vartheta^h\right)\right|\leq\liminf_{k\to\infty}\left\|\mathcal G\left(y_k,u_k\right)\right\|_{\Lambda^*}\left\|\left(\left(\psia\right)_\alpha,\vartheta^e,\vartheta^h\right)\right\|_\Lambda.
	\end{align*}
	This inequality then also holds for general $\left(\left(\psia\right)_\alpha,\vartheta^e,\vartheta^h\right)\in\Lambda$ by a density argument (cf. proof of \cref{thm:ExistenceMinimizer} and the definition of $\Lambda$). Altogether, \cref{eq:Gwlsc} is proved.
\end{proof} 
\begin{remark}
	It is important to understand the necessity of \cref{eq:constrNa} for \cref{lma:Gwlsc} and for later treating \cref{prob:Ps}: In \cite{Web19}, \cref{lma:MomentumAveraging} was applied to a sequence where any $\f_k$ \textit{already solves} a Vlasov equation in the sense of distributions, that is
	\begin{align*}
	\partial_t\f_k+\v\cdot\partial_x\f_k=-\div_v\left(F_k\f_k\right),
	\end{align*}
	which gave a direct estimate on the $L^2\left(\left[0,T\right]\times\Omega;H^{-1}\left(\R^3\right)\right)$-norm of $\f_k$ by the corresponding a priori $L^p$-bounds on $F_k$ and $\f_k$. However, the $\f$ of some $\left(y,u\right)$ that is feasible for \cref{prob:Ps} do not necessarily solve a Vlasov equation as above. Thus, suitable estimates on the $L^2\left(\left[0,T\right]\times\Omega;H^{-1}\left(\R^3\right)\right)$-norm along some sequence can not be obtained without imposing them a priori, that is, imposing \cref{eq:constrNa}. Without this, we would not be able to pass to the limit as in the proof above, and the important weak lower semi-continuity of $\left\|\mathcal G\right\|_{\Lambda^*}$ could not be proved.
\end{remark}
Now we are able to prove existence of minimizers of \cref{prob:Ps}:
\begin{theorem}\label{thm:ExistenceMinimizerPs}
	There is a (not necessarily unique) minimizer of \cref{prob:Ps}.
\end{theorem}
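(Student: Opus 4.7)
The plan is to apply the direct method of the calculus of variations, leveraging the weak lower semi-continuity of $\|\mathcal G\|_{\Lambda^*}$ already established in \cref{lma:Gwlsc}. First I would check that the feasible set of \cref{prob:Ps} is non-empty: any feasible point of \cref{prob:P} is feasible for \cref{prob:Ps} since \cref{eq:estNa} combined with \cref{eq:constrfinfty,eq:constrener} yields \cref{eq:constrNa} whenever $\mathcal G\left(y,u\right)=0$, and such feasible points exist by \cref{thm:ExistenceMinimizer}.

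Next I would pick a minimizing sequence $\left(\left(y_k,u_k\right)\right)$ of $\mathcal J_s$. The three non-negative summands of $\mathcal J_s$ immediately yield uniform bounds on $\left\|\f_{k,+}\right\|_{L^q\left(\gamma_T^+,d\gamm\right)}$, on $\left\|u_k\right\|_{\mathcal U}$, and on $\left\|\mathcal G\left(y_k,u_k\right)\right\|_{\Lambda^*}$. Reflexivity of $L^q\left(\gamma_T^+,d\gamm\right)$ and $\mathcal U$, together with the compact embedding $\mathcal U\hookrightarrow L^2\left(\left[0,T\right]\times\Gamma;\R^3\right)$, allow me to extract, after passing to a subsequence, weak limits $u_k\wto u$ in $\mathcal U$ and $\f_{k,+}\wto\f_+$ in $L^q\left(\gamma_T^+,d\gamm\right)$, as well as strong convergence $u_k\to u$ in $L^2\left(\left[0,T\right]\times\Gamma;\R^3\right)$. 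Feeding this strong convergence into the right-hand sides of \cref{eq:constrener,eq:constrNa} turns them into uniform bounds on $\f_k$ in $L_\kin^1$ and -- using \cref{eq:constrfinfty} -- in $L^\infty$, on $\left(E_k,H_k\right)$ in $L^2$, and on $\Na\left(\f_k\right)$. After a further extraction I may therefore assume $\f_k\overset{*}{\wto}\f$ in $L^\infty\left(\left[0,T\right]\times\Omega\times\R^3\right)$ and $\left(E_k,H_k\right)\wto\left(E,H\right)$ in $L^2\left(\left[0,T\right]\times\R^3;\R^6\right)$.

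With exactly these convergences in hand, \cref{lma:Gwlsc} applies and delivers three things at once: $\left(y,u\right)\in\mathcal Y\times\mathcal U$, preservation of \cref{eq:constrener,eq:constrNa} in the limit, and the crucial inequality $\left\|\mathcal G\left(y,u\right)\right\|_{\Lambda^*}\leq\liminf_{k\to\infty}\left\|\mathcal G\left(y_k,u_k\right)\right\|_{\Lambda^*}$. The remaining constraint \cref{eq:constrfinfty} is preserved by weak-$*$ $L^\infty$-convergence. Since both the $L^q$-norm and the $\mathcal U$-norm appearing in $\mathcal J_s$ are weakly lower semi-continuous as well, I conclude
\begin{align*}
\mathcal J_s\left(y,u\right)\leq\liminf_{k\to\infty}\mathcal J_s\left(y_k,u_k\right)=\inf\mathcal J_s,
\end{align*}
so $\left(y,u\right)$ is the desired minimizer.

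The main obstacle has already been packaged into \cref{lma:Gwlsc}: establishing weak lower semi-continuity of $\left\|\mathcal G\right\|_{\Lambda^*}$ requires passing to the limit in the nonlinear term $\e\left(E+\v\times H\right)\cdot\partial_v\psia\f$, which in turn hinges on the momentum-averaging \cref{lma:MomentumAveraging} being applicable -- and that applicability is precisely what the a priori constraint \cref{eq:constrNa}, together with the bound on $\mathcal I\left(u_k\right)$ coming from $\left\|u_k\right\|_{\mathcal U}$, ensures. All other steps closely mirror the arguments from the proof of \cref{thm:ExistenceMinimizer}.
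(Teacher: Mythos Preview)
Your proposal is correct and follows precisely the same route as the paper's proof: apply the direct method as in the proof of \cref{thm:ExistenceMinimizer}, extract weakly convergent subsequences from the coercivity of $\mathcal J_s$ and the constraints, and invoke \cref{lma:Gwlsc} to obtain $\left(y,u\right)\in\mathcal Y\times\mathcal U$, preservation of \cref{eq:constrener,eq:constrNa}, and the weak lower semi-continuity \cref{eq:Gwlsc}. The paper's own proof is in fact terser than yours, merely pointing to \cref{thm:ExistenceMinimizer} and \cref{lma:Gwlsc} without spelling out the extraction of subsequences.
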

\begin{proof}
	This is proved in much the same way as \cref{thm:ExistenceMinimizer} was proved. We no longer have to show that \eqref{eq:WholeSystem} has to be preserved in the limit. Instead, we apply \cref{lma:Gwlsc}: The assumptions there are satisfied for a minimizing sequence (after extracting a suitable subsequence) and the respective weak limits. Thus, the new constraint \cref{eq:constrNa} is also preserved in the limit and the new objective function $\mathcal J_s$ indeed admits its minimum at the limit tuple $\left(y,u\right)$.
\end{proof}
Later, we will need that $\mathcal Y\times\mathcal U$ is complete; this is proved in the following lemma:
\begin{lemma}\label{lma:YUBanach}
	$\mathcal Y\times\mathcal U$ is a Banach space.
\end{lemma}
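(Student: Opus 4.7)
The plan is to reduce the completeness of $\mathcal Y\times\mathcal U$ to the completeness of each factor space, since a finite Cartesian product of Banach spaces (with, say, the $1$-norm on the coordinates) is again Banach. The spaces $L^q(\gamma_T^+,d\gamma_\alpha)$, $L^2([0,T]\times\R^3;\R^3)^2$, and $\mathcal U=W^{1,r}([0,T]\times\Gamma;\R^3)$ are classical Banach spaces, so the only nontrivial step is to verify completeness of each $\Y$ with the norm $\|f\|_{L^1_{\kin}}+\|f\|_{L^\infty}+\Na(f)$.

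To handle $\Y$, I take a Cauchy sequence $(f_k)\subset\Y$. Because the norm dominates $\|\cdot\|_{L_\kin^1}$ and $\|\cdot\|_{L^\infty}$, and because $L_\kin^1([0,T]\times\Omega\times\R^3)$ and $L^\infty([0,T]\times\Omega\times\R^3)$ are themselves Banach, $(f_k)$ converges to some $f$ in $L_\kin^1$ and to some $g$ in $L^\infty$. The convergence in $L_\kin^1$ implies convergence in $L^1_{\loc}$, hence a subsequence converges a.e.; the same subsequence converges a.e.\ to $g$ via the $L^\infty$ limit, so $f=g$ a.e.\ and $f\in(L_\kin^1\cap L^\infty)([0,T]\times\Omega\times\R^3)$.

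Next, the sequence $(\partial_tf_k+\v\cdot\partial_xf_k)$ is Cauchy in $L^2([0,T]\times\Omega;H^{-1}(\R^3))$ by assumption and therefore converges to some $h$ in this space. I identify $h$ with the distributional derivative $\partial_tf+\v\cdot\partial_xf$ as follows: for any $\varphi\in C_c^\infty(]0,T[\times\Omega\times\R^3)$, the duality pairing $(\partial_tf_k+\v\cdot\partial_xf_k)(\varphi)$ equals $-\int f_k(\partial_t\varphi+\v\cdot\partial_x\varphi)\,dv\,dx\,dt$, which tends to $-\int f(\partial_t\varphi+\v\cdot\partial_x\varphi)\,dv\,dx\,dt$ by $L^1_{\loc}$-convergence; on the other hand it tends to $h(\varphi)$ via the $L^2(H^{-1})$-convergence (using $\varphi\in L^2([0,T]\times\Omega;H^1(\R^3))$). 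By density, $h=\partial_tf+\v\cdot\partial_xf$ as an element of $L^2([0,T]\times\Omega;H^{-1}(\R^3))$, so $f\in\Y$, and $\Na(f_k-f)=\|(\partial_tf_k+\v\cdot\partial_xf_k)-h\|_{L^2(H^{-1})}\to 0$.

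The main (mild) obstacle is exactly this identification of the weak/distributional limit of $\partial_tf_k+\v\cdot\partial_xf_k$ with the corresponding distributional derivative of the limit $f$; all other ingredients are routine consequences of the classical completeness of the underlying $L^p$ and Sobolev spaces and the observation that $\Na$ is indeed a seminorm whose sum with the two $L^p$-norms is a norm, so that assembling the three convergences yields $f_k\to f$ in $\Y$.
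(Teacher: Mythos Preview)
Your proof is correct and follows essentially the same approach as the paper: reduce to completeness of $\Y$, use completeness of $L_\kin^1$, $L^\infty$, and $L^2([0,T]\times\Omega;H^{-1}(\R^3))$ separately, and then identify the $L^2(H^{-1})$-limit with $\partial_tf+\v\cdot\partial_xf$ by testing against $C_c^\infty$ functions. The paper outsources this last identification step to Step~1 of the proof of \cref{lma:Gwlsc}, which is exactly the argument you wrote out.
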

\begin{proof}
	We only have to show completeness of $\Y$: Let $\left(f_k\right)$ be a Cauchy sequence in $\Y$. Clearly, this sequence converges to some $f$ with respect to the $L^1$- and $L^\infty$-norm. Since $L^2\left(\left[0,T\right]\times\Omega;H^{-1}\left(\R^3\right)\right)$ is complete, the sequence $\left(\partial_tf_k+\v\cdot\partial_xf_k\right)$ converges to some $g$ in this space. As in Step 1 of the proof of \cref{lma:Gwlsc}, we see that $g=\partial_tf+\v\cdot\partial_xf$. Thus, $\left(f_k\right)$ converges to $f$ in the whole $\Y$-norm.
\end{proof}

Next, we want to derive first order optimality conditions for a minimizer of \cref{prob:Ps}. To this end, we consider the differentiability of the objective function $\mathcal J_s$. Clearly, the only difficult term is $\left\|\mathcal G\left(y,u\right)\right\|_{\Lambda^*}^2$. To tackle this one, we state a duality result, which links differentiability of a norm to uniform convexity of the dual space:
\begin{proposition}\label{prop:smoothconvex}
	A Banach space $X$ is uniformly smooth if and only if $X^*$ is uniformly convex. In this case, for each unit vector $x\in X$ there is exactly one $x^*\in X^*$ with $\left\|x^*\right\|_{X^*}=1$ satisfying $x^*x=1$. Furthermore, this $x^*$ is the derivative of the norm at $x$.
\end{proposition}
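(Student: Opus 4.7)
The plan is to prove the three assertions in order, relying on the classical duality (due essentially to Šmulian and Day) between the modulus of uniform convexity of $X^{*}$ and the modulus of uniform smoothness of $X$. Recall that these moduli are
\begin{align*}
\delta_{X^{*}}(\varepsilon)&=\inf\left\{1-\tfrac{1}{2}\|x^{*}+y^{*}\|\mid \|x^{*}\|=\|y^{*}\|=1,\;\|x^{*}-y^{*}\|\geq\varepsilon\right\},\\
\rho_{X}(\tau)&=\sup\left\{\tfrac{1}{2}\bigl(\|x+\tau h\|+\|x-\tau h\|\bigr)-1\mid \|x\|=\|h\|=1\right\}.
\end{align*}
Uniform convexity of $X^{*}$ means $\delta_{X^{*}}(\varepsilon)>0$ for all $\varepsilon>0$; uniform smoothness of $X$ means $\rho_{X}(\tau)/\tau\to 0$ as $\tau\to 0^{+}$.

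For the equivalence I would first establish the identity
\begin{align*}
\rho_{X}(\tau)=\sup_{0<\varepsilon\leq 2}\left\{\tfrac{\tau\varepsilon}{2}-\delta_{X^{*}}(\varepsilon)\right\},
\end{align*}
which is obtained by dualizing $\|\cdot\|$ via $\|z\|=\sup\{z^{*}(z):\|z^{*}\|=1\}$ and optimizing over unit vectors in $X^{*}$ that almost attain the norms of $x\pm\tau h$. From this formula the equivalence of the two uniform conditions is immediate: if $\delta_{X^{*}}(\varepsilon)\geq c\varepsilon^{2}$ near $0$ one reads off $\rho_{X}(\tau)=O(\tau^{2})$, and conversely $\rho_{X}(\tau)/\tau\to 0$ forces $\delta_{X^{*}}(\varepsilon)>0$ for every $\varepsilon>0$ (otherwise the supremum would stay $\geq\tau\varepsilon/2$). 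This duality step is the main technical obstacle; everything else is considerably softer.

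For the second assertion, fix a unit vector $x\in X$. Hahn-Banach applied to the functional $tx\mapsto t$ on $\operatorname{span}\{x\}$ produces at least one $x^{*}\in X^{*}$ with $\|x^{*}\|=1$ and $x^{*}x=1$. For uniqueness, suppose $x_{1}^{*},x_{2}^{*}$ are two such functionals. Then
\begin{align*}
1=\tfrac{1}{2}(x_{1}^{*}+x_{2}^{*})x\leq\bigl\|\tfrac{1}{2}(x_{1}^{*}+x_{2}^{*})\bigr\|\leq 1,
\end{align*}
so the midpoint has norm $1$. Uniform convexity of $X^{*}$ (already a consequence of hypothesis or equivalent hypothesis) then forces $\|x_{1}^{*}-x_{2}^{*}\|=0$, whence $x_{1}^{*}=x_{2}^{*}$.

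For the third assertion I would show that the norm is Fréchet differentiable at $x$ with derivative precisely this unique support functional $x^{*}$. Convexity of $\|\cdot\|$ gives the one-sided directional derivative $D^{+}\|\cdot\|(x)(h)=\lim_{t\to 0^{+}}t^{-1}(\|x+th\|-1)$, and a standard Hahn-Banach/support argument identifies it with $x^{*}(h)$ (any subgradient at $x$ is a unit functional attaining $1$ at $x$, hence equals $x^{*}$ by the uniqueness step). To upgrade to Fréchet differentiability, observe that for $\|h\|\leq 1$ and $t>0$
\begin{align*}
0\leq\frac{\|x+th\|+\|x-th\|-2}{2t}\leq\frac{\rho_{X}(t)}{t},
\end{align*}
and both side-derivatives squeeze to $x^{*}(h)$ uniformly in $h$ thanks to $\rho_{X}(t)/t\to 0$. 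The usual triangle-inequality rearrangement then yields
\begin{align*}
\sup_{\|h\|\leq 1}\left|\frac{\|x+th\|-1-tx^{*}(h)}{t}\right|\xrightarrow{t\to 0}0,
\end{align*}
which is exactly the Fréchet differentiability of $\|\cdot\|$ at $x$ with derivative $x^{*}$, completing the proof.
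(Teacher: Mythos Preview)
The paper does not supply its own proof of this proposition; it is quoted as a classical result with references to Day and to Lindenstrauss's monograph. Your sketch follows precisely the standard route found in those references---the Lindenstrauss duality formula linking $\rho_X$ and $\delta_{X^*}$, the Hahn--Banach plus strict-convexity argument for uniqueness of the support functional, and the upgrade from Gateaux to Fr\'echet differentiability via the uniform estimate on $\rho_X(\tau)/\tau$. The argument is essentially correct.

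Two small points you might tighten if you flesh this out. First, in the equivalence step you illustrate one direction with the quadratic bound $\delta_{X^*}(\varepsilon)\geq c\varepsilon^2$, which is more than uniform convexity gives; the clean argument from the duality formula uses only $\delta_{X^*}(\varepsilon)>0$ for every $\varepsilon>0$ (split the supremum at a threshold $\varepsilon_0$ and send $\tau\to 0$). Second, the passage from uniqueness of the support functional to the identification of the one-sided directional derivative with $x^*(h)$ deserves one more line: the subdifferential of the norm at $x$ is exactly the set of norming functionals, and uniqueness collapses it to a singleton, which forces Gateaux differentiability with derivative $x^*$. With those clarifications your proof is complete and matches the literature the paper cites.
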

Here, \enquote{uniformly smooth} means that
\begin{align*}
\lim_{t\to 0}\frac{\left\|x+ty\right\|_X-\left\|x\right\|_X}{t}
\end{align*}
exists and is uniform in $x,y\in\left\{z\in X\mid\left\|z\right\|_X=1\right\}$. The original work in this subject was done by Day \cite{Day44}; see also \cite[Chapter 2]{Lin03} for an overview of different concepts of and relations between convexity and smoothness of normed spaces.

From \cref{prop:smoothconvex} we easily get the following corollary, which we will need in the following:
\begin{corollary}\label{cor:diffnormsq}
	Let $X$ be a Banach space such that $X^*$ is uniformly convex. Then the map $z\colon X\to\R$, $z\left(x\right)=\frac{1}{2}\left\|x\right\|_X^2$ is differentiable on whole $X$ with derivative $z'\left(x\right)=x^*$ where $x^*$ is the unique element of $X^*$ satisfying $\left\|x^*\right\|_{X^*}=\left\|x\right\|_X$ and $x^*x=\left\|x\right\|_X^2$. (The map $z'\colon X\to X^*$ is often referred to as the duality map.)
\end{corollary}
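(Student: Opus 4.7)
The strategy is to deduce the corollary from \cref{prop:smoothconvex} combined with the $1$-homogeneity of the norm. The case $x=0$ is trivial: $z(0+h)-z(0)=\frac{1}{2}\|h\|_X^2=o(\|h\|_X)$, so $z'(0)=0\in X^*$, and $x^*=0$ is the unique functional satisfying $\|x^*\|_{X^*}=0$.

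For $x\neq 0$, the first step is to upgrade the differentiability of the norm $N(\cdot):=\|\cdot\|_X$ from the unit sphere (where \cref{prop:smoothconvex} supplies it) to an arbitrary non-zero point. Write $\tilde x:=x/\|x\|_X$ and let $\phi_{\tilde x}\in X^*$ be the unique unit-norm functional with $\phi_{\tilde x}(\tilde x)=1$, which by \cref{prop:smoothconvex} is the Fréchet derivative of $N$ at $\tilde x$. By $1$-homogeneity of the norm and linearity of $\phi_{\tilde x}$,
\begin{align*}
\|x+h\|_X-\|x\|_X=\|x\|_X\bigl(\|\tilde x+h/\|x\|_X\|_X-1\bigr)=\|x\|_X\,\phi_{\tilde x}\bigl(h/\|x\|_X\bigr)+o(\|h\|_X)=\phi_{\tilde x}(h)+o(\|h\|_X),
\end{align*}
so $N$ is Fréchet differentiable at $x$ with derivative $\phi_{\tilde x}$; in particular $\|N'(x)\|_{X^*}=1$.

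Next I would differentiate $z=\tfrac12 N^2$ via the factorization
\begin{align*}
z(x+h)-z(x)=\tfrac12\bigl(\|x+h\|_X-\|x\|_X\bigr)\bigl(\|x+h\|_X+\|x\|_X\bigr),
\end{align*}
observing that the first factor is $\phi_{\tilde x}(h)+o(\|h\|_X)$ and the second is $2\|x\|_X+O(\|h\|_X)$. Multiplying out cancels the cross term and yields $z(x+h)-z(x)=\|x\|_X\,\phi_{\tilde x}(h)+o(\|h\|_X)$, so $z'(x)=\|x\|_X\,\phi_{\tilde x}=:x^*\in X^*$. One checks $\|x^*\|_{X^*}=\|x\|_X\cdot 1=\|x\|_X$ and $x^*(x)=\|x\|_X\,\phi_{\tilde x}(x)=\|x\|_X^2\,\phi_{\tilde x}(\tilde x)=\|x\|_X^2$, as required.

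Uniqueness is then immediate: if $y^*\in X^*$ also satisfies $\|y^*\|_{X^*}=\|x\|_X$ and $y^*(x)=\|x\|_X^2$, then $y^*/\|x\|_X$ is a unit-norm functional taking the value $1$ at $\tilde x$, so by the uniqueness part of \cref{prop:smoothconvex}, $y^*/\|x\|_X=\phi_{\tilde x}$ and hence $y^*=x^*$. The only step that requires more than bookkeeping is the passage from differentiability of $N$ on the unit sphere to differentiability at a general non-zero point, and this is handled cleanly by the rescaling identity displayed above; every other step is a direct application of \cref{prop:smoothconvex} or an elementary Landau-symbol computation.
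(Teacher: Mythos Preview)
Your proof is correct and follows essentially the same route as the paper's: use \cref{prop:smoothconvex} for differentiability of the norm on the unit sphere, extend to all nonzero points by positive homogeneity, and then differentiate $z=\tfrac12 N^2$ via the chain rule (your factorization argument is exactly this), treating $x=0$ separately. The paper's proof is terser but the structure and ingredients are identical; you have simply made the homogeneity and chain-rule steps explicit.
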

\begin{proof}
	By \cref{prop:smoothconvex}, the norm is differentiable on the unit sphere of $X$. Since the norm is positive homogeneous, this holds true on whole $X$ except in $x=0$, and the derivative is $x^*$ such that $\left\|x^*\right\|_{X^*}=1$ and $x^*x=\left\|x\right\|_X$ (still this $x^*$ is uniquely determined by these two properties). Applying the chain rule we see that $z$ is differentiable on $X\setminus\left\{0\right\}$ and has the asserted derivative.
	
	That $z$ is differentiable in $x=0$ and $z'\left(0\right)=0$ is clear.
\end{proof}
With this corollary we see that the objective function $\mathcal J_s$ is differentiable:
\begin{lemma}\label{lma:Jsdiff}
	The objective function $\mathcal J_s$ is differentiable and its derivative is given by
	\begin{align}\label{eq:Jsdiff}
	&\mathcal J_s'\left(y,u\right)\left(\delta y,\delta u\right)\nonumber\\
	&=\suma\w\int_{\gamma_{T}^+}\sign{\f_+}\left|\f_+\right|^{q-1}\delta\f_+\,d\gamm\nonumber\\
	&\phantom{=\;}+\sum_{j=1}^3\int_0^{T}\int_\Gamma\left(\vphantom{\sum_{i=1}^3}\sign{u_j}\left|u_j\right|^{r-1}\delta u_j+\kappa_1\sign{\partial_t u_j}\left|\partial_tu_j\right|^{r-1}\partial_t\delta u_j\right.\nonumber\\
	&\omit\hfill$\displaystyle\left.+\kappa_2\sum_{i=1}^3\sign{\partial_{x_i}u_j}\left|\partial_{x_i}u_j\right|^{r-1}\partial_{x_i}\delta u_j\right)\,dxdt$\nonumber\\
	&\phantom{=\;}+\suma\left(-\int_0^{T}\int_\Omega\int_{\R^3}\left(\left(\partial_t\psia+\v\cdot\partial_x\psia+\e\left(E+\v\times H\right)\cdot\partial_v\psia\right)\delta\f\right.\right.\nonumber\\
	&\omit\hfill$\displaystyle\left.+\e\left(\delta E+\v\times\delta H\right)\f\cdot\partial_v\psia\right)\,dvdxdt$\nonumber\\
	&\phantom{=\;}\phantom{=\suma\Big(}\left.+\int_{\gamma_{T}^+}\delta\f_+\psia\,d\gamm-\int_{\gamma_{T}^-}\a\left(K\delta\f_+\right)\psia\,d\gamm\right)\nonumber\\
	&\phantom{=\;}+\int_0^{T}\int_{\R^3}\left(\varepsilon\delta E\cdot\partial_t\vartheta^e-\delta H\cdot\curl_x\vartheta^e-4\pi\left(\delta j^\inte+\delta u\right)\cdot\vartheta^e\right)\,dxdt\nonumber\\
	&\phantom{=\;}+\int_0^{T}\int_{\R^3}\left(\mu\delta H\cdot\partial_t\vartheta^h+\delta E\cdot\curl_x\vartheta^h\right)\,dxdt,
	\end{align}
	where $\left(\left(\psia\right)_\alpha,\vartheta^e,\vartheta^h\right)\in\Lambda$ is the unique element in $\Lambda$ satisfying
	\begin{subequations}\label{eq:Jsdiffdual}
		\begin{align}
		\left\|\left(\left(\psia\right)_\alpha,\vartheta^e,\vartheta^h\right)\right\|_{\Lambda}&=s\left\|\mathcal G\left(y,u\right)\right\|_{\Lambda^*},\\
		\mathcal G\left(y,u\right)\left(\left(\psia\right)_\alpha,\vartheta^e,\vartheta^h\right)&=s\left\|\mathcal G\left(y,u\right)\right\|_{\Lambda^*}^2.
		\end{align}
	\end{subequations}
\end{lemma}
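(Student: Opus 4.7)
The plan is to split $\mathcal J_s$ into its three summands and differentiate each separately, using the chain rule and \cref{cor:diffnormsq} for the penalty term.

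\textit{First summand.} The map $t\mapsto\frac{1}{q}\left|t\right|^q$ is $C^1$ on $\R$ with derivative $\sign{t}\left|t\right|^{q-1}$. Since $q>2$, the corresponding Nemytskii operator is bounded from $L^q$ into $L^{q/(q-1)}$, and a standard argument (estimate of the quadratic remainder via the mean value theorem and Hölder's inequality on $\gamma_T^+$ with the measure $d\gamm$) shows that $\f_+\mapsto\frac{1}{q}\w\left\|\f_+\right\|_{L^q\left(\gamma_T^+,d\gamm\right)}^q$ is Fréchet differentiable with the first block of the derivative formula in \cref{eq:Jsdiff}. The same argument applied componentwise to the three summands defining $\left\|u\right\|_{\mathcal U}^r$ (using $r>1$) yields differentiability of $u\mapsto\frac{1}{r}\left\|u\right\|_{\mathcal U}^r$ with the second block.

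\textit{Third summand.} Write the penalty as the composition $\left(y,u\right)\overset{\mathcal G}{\longmapsto}\mathcal G\left(y,u\right)\overset{z}{\longmapsto}\frac{s}{2}\left\|\mathcal G\left(y,u\right)\right\|_{\Lambda^*}^2$. The first map is differentiable by \cref{lma:Gdiff}. For the second, apply \cref{cor:diffnormsq} with $X=\Lambda^*$. Its hypothesis requires $X^*=\Lambda^{**}$ to be uniformly convex; but by \cref{lma:Lambdauniconrefl}, $\Lambda$ itself is uniformly convex and reflexive, so $\Lambda^{**}\cong\Lambda$ is uniformly convex. Hence $z$ is Fréchet differentiable on $\Lambda^*$, and the derivative at $\phi=\mathcal G\left(y,u\right)$ is the unique element $\lambda\in\left(\Lambda^*\right)^*\cong\Lambda$ with $\left\|\lambda\right\|_\Lambda=s\left\|\phi\right\|_{\Lambda^*}$ and $\phi\left(\lambda\right)=s\left\|\phi\right\|_{\Lambda^*}^2$; this is precisely the element $\left(\left(\psia\right)_\alpha,\vartheta^e,\vartheta^h\right)$ characterized by \cref{eq:Jsdiffdual}.

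\textit{Chain rule and assembly.} By the chain rule, the derivative of the penalty at $\left(y,u\right)$ in direction $\left(\delta y,\delta u\right)$ equals $\mathcal G'\left(y,u\right)\left(\delta y,\delta u\right)$ evaluated at the $\lambda$ identified above. Substituting the explicit expression for $\mathcal G'\left(y,u\right)\left(\delta y,\delta u\right)\in\Lambda^*$ from \cref{eq:Gdiff} with test element $\lambda=\left(\left(\psia\right)_\alpha,\vartheta^e,\vartheta^h\right)$ yields the third block of \cref{eq:Jsdiff}. Summing the derivatives of the three summands completes the proof.

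\textit{Main obstacle.} The only non-routine point is verifying that the $\Lambda^*$-norm squared is differentiable. This is where reflexivity and uniform convexity of $\Lambda$ from \cref{lma:Lambdauniconrefl} become essential: they provide, via \cref{cor:diffnormsq}, both the existence of the duality map and the characterization \cref{eq:Jsdiffdual} of the test element appearing in the formula for $\mathcal J_s'$. Everything else is chain rule bookkeeping and standard Nemytskii-type differentiation of power-of-$L^p$-norm functionals.
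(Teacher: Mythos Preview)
Your proof is correct and follows essentially the same route as the paper: both treat the first two summands as standard Nemytskii-type differentiations, and for the penalty term both compose $\mathcal G$ (differentiable by \cref{lma:Gdiff}) with $\frac{s}{2}\|\cdot\|_{\Lambda^*}^2$, applying \cref{cor:diffnormsq} with $X=\Lambda^*$ and invoking reflexivity plus uniform convexity of $\Lambda$ from \cref{lma:Lambdauniconrefl} to identify the derivative with the element $\left(\left(\psia\right)_\alpha,\vartheta^e,\vartheta^h\right)\in\Lambda$ characterized by \cref{eq:Jsdiffdual}. The only cosmetic difference is that the paper first extracts the duality element $\lambda^{**}$ with norm $\|\mathcal G(y,u)\|_{\Lambda^*}$ and then multiplies by $s$, whereas you absorb the factor $s$ directly into the characterizing conditions.
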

\begin{proof}
	The only difficult term is $\frac{s}{2}\left\|\mathcal G\left(y,u\right)\right\|_{\Lambda^*}^2$. The other terms are easy to handle in a standard way.
	
	Denoting $Z\left(y,u\right)=\frac{s}{2}\left\|\mathcal G\left(y,u\right)\right\|_{\Lambda^*}^2$ we apply \cref{lma:Gdiff,cor:diffnormsq}. The latter is applicable since the dual of $\Lambda^*$, that is $\Lambda^{**}\cong\Lambda$, is uniformly convex due to \cref{lma:Lambdauniconrefl}. At this point we should mention that this step is exactly the reason why we work with a uniformly convex, reflexive test function space. Hence, additionally using the chain rule, we see that $Z$ is differentiable with
	\begin{align}\label{eq:Zdiff}
	Z'\left(y,u\right)\left(\delta y,\delta u\right)=s\lambda^{**}\mathcal G'\left(y,u\right)\left(\delta y,\delta u\right),
	\end{align}
	where $\lambda^{**}\in\Lambda^{**}$ uniquely satisfies
	\begin{subequations}\label{eq:lambdadual}
		\begin{align}
		\left\|\lambda^{**}\right\|_{\Lambda^{**}}&=\left\|\mathcal G\left(y,u\right)\right\|_{\Lambda^*},\\
		\lambda^{**}\mathcal G\left(y,u\right)&=\left\|\mathcal G\left(y,u\right)\right\|_{\Lambda^*}^2.
		\end{align}
	\end{subequations}
	Since $\Lambda$ is reflexive, we can regard $\lambda^{**}$ as a $\lambda\in\Lambda$ via the canonical isomorphism. We define $\left(\left(\psia\right)_\alpha,\vartheta^e,\vartheta^h\right)$ by multiplying this $\lambda$ with the positive number $s$. On the one hand, from \cref{eq:Zdiff} we get the remaining part of \cref{eq:Jsdiff}, that is 
	\begin{align*}
	\mathcal G'\left(y,u\right)\left(\delta y,\delta u\right)\left(\left(\psia\right)_\alpha,\vartheta^e,\vartheta^h\right),
	\end{align*}
	which is given by \cref{eq:Gdiff}. On the other hand, \cref{eq:lambdadual} instantly yields \cref{eq:Jsdiffdual}.
\end{proof}
\begin{remark}
	Such a $\left(\left(\psia\right)_\alpha,\vartheta^e,\vartheta^h\right)$ will later act as a Lagrangian multiplier with respect to the Vlasov-Maxwell system, that is, a solution of the adjoint equation, if the point $\left(y,u\right)$ is a minimizer of \cref{prob:Ps}.
\end{remark}
Next, we derive necessary first order optimality conditions for \cref{prob:Ps}. To tackle an optimization problem with certain constraints and to prove existence of Lagrangian multipliers with respect to them, one has to verify some constraint qualification. To this end, we state a famous result of Zowe and Kurcyusz \cite{ZK79}, which is based on a fundamental work of Robinson \cite{Rob76}:
\begin{proposition}\label{prop:zowekur}
	Let $X$, $Y$ be Banach spaces, $C\subset X$ non-empty, closed, and convex, $K\subset Y$ a closed convex cone, $\phi\colon X\rightarrow\mathbb{R}$ differentiable, and $g\colon X\rightarrow Y$ continuously differentiable. Denote for $A\subset X$ (and similarly for $A\subset Y$)
	\begin{align*}
	A^+=\left\{x^*\in X^*\mid x^*a\geq 0\ \forall a\in A\right\}
	\end{align*}
	and denote for $x\in X$ and $y\in Y$
	\begin{align*}
	C\left(x\right)&=\left\{\lambda\left(c-x\right)\mid c\in C,\lambda\geq 0\right\},\\
	K\left(y\right)&=\left\{k-\lambda y\mid k\in K,\lambda\geq 0\right\}.
	\end{align*}
	Let $x_*\in X$ be a local minimizer (i.e., a local minimizer of the objective function restricted to all feasible points) of the problem
	\begin{align*}
	\min_{x\in X}\quad\phi\left(x\right)\qquad\mathrm{s.t.}\quad x\in C,g\left(x\right)\in K,
	\end{align*}
	and let the constraint qualification
	\begin{align}\tag{CQ}\label{eq:CQ}
	g'\left(x_*\right)C\left(x_*\right)-K\left(g\left(x_*\right)\right)=Y
	\end{align}
	hold.
	
	Then there is a Lagrange multiplier $y^*\in Y^*$ at $x_*$ for the problem above, i.e.,
	\begin{enumerate}[label=(\roman*),leftmargin=*]
		\item\label{prop:zowekuri} $y^*\in K^+$,
		\item\label{prop:zowekurii} $y^*g\left(x_*\right)=0$,
		\item\label{prop:zowekuriii} $\phi'\left(x_*\right)-y^*\circ g'\left(x_*\right)\in C\left(x_*\right)^+$.
	\end{enumerate}
\end{proposition}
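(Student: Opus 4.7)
Proposition \ref{prop:zowekur} is the Lagrange multiplier theorem of Zowe-Kurcyusz \cite{ZK79}, so the cleanest route is simply to cite that reference. For a self-contained argument I would proceed in three steps.

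First, I would establish a linearization lemma: under \eqref{eq:CQ}, every $d$ in the closed cone
\[
\mathcal{T} = \overline{C(x_*)} \cap (g'(x_*))^{-1}\bigl(\overline{K(g(x_*))}\bigr)
\]
is a genuine tangent direction at $x_*$, i.e., there exists a curve $x(t) = x_* + td + o(t)$ with $x(t) \in C$ and $g(x(t)) \in K$ for small $t > 0$. The argument rests on reading \eqref{eq:CQ} as a surjectivity property of the affine multifunction $d \mapsto g'(x_*)d - K(g(x_*))$ and invoking a Robinson-Ursescu open-mapping theorem to upgrade this algebraic condition into metric regularity of the feasibility map $x \mapsto g(x) - K$ at $x_*$. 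A Newton-type / fixed-point iteration on $C$ then yields the feasible curve.

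Second, using the local optimality of $x_*$ and the curves just constructed, I would pass $\phi(x(t)) \geq \phi(x_*)$ to a first order limit to obtain the primal condition $\phi'(x_*)d \geq 0$ for all $d \in \mathcal{T}$, i.e., $-\phi'(x_*) \in \mathcal{T}^+$.

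Third, I would dualize via a closed-range argument. From \eqref{eq:CQ} together with convexity of $C$ and $K$, one deduces the dual decomposition
\[
\mathcal{T}^+ = C(x_*)^+ + g'(x_*)^*\bigl(K(g(x_*))^+\bigr),
\]
which is the polar-cone counterpart of the primal sum in \eqref{eq:CQ}. Writing $-\phi'(x_*) = c^* + g'(x_*)^*\tilde y^*$ with $c^* \in C(x_*)^+$ and $\tilde y^* \in K(g(x_*))^+$, then unpacking the definition of $K(g(x_*))^+$, gives $\tilde y^* \in K^+$ and $\tilde y^* g(x_*) = 0$. Setting $y^* := -\tilde y^*$ yields items (i)-(iii).

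The main obstacle is clearly the first step: turning the algebraic identity \eqref{eq:CQ} into the topological metric-regularity statement that supports the construction of feasible curves. This is the core content of Robinson's regularity theory and requires a nontrivial open-mapping theorem for convex multifunctions in Banach spaces. Once that is available, Step 2 is immediate and Step 3 reduces to a standard Hahn-Banach / closed-range argument driven again by \eqref{eq:CQ}.
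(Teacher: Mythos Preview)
The paper does not prove this proposition; it is simply quoted from \cite{ZK79} (with credit to Robinson's regularity theory \cite{Rob76}) and used as a black box. You correctly identify this in your first sentence, so your one-line citation already matches the paper's treatment exactly.

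Your additional three-step sketch goes beyond what the paper offers and is broadly faithful to the Zowe--Kurcyusz argument. One small slip worth flagging: from $\phi'(x_*)d\ge 0$ for all $d\in\mathcal T$ you should conclude $\phi'(x_*)\in\mathcal T^+$ under the paper's positive-polar convention $A^+=\{x^*:x^*a\ge 0\ \forall a\in A\}$, not $-\phi'(x_*)\in\mathcal T^+$. Correspondingly, in Step~3 the decomposition reads $\phi'(x_*)=c^*+g'(x_*)^*\tilde y^*$ and the multiplier is $y^*=\tilde y^*$, not $-\tilde y^*$; with your sign choice one would get $y^*\in -K^+$ rather than $y^*\in K^+$. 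Once this sign is corrected, the outline is sound.
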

We apply this result to our problem \cref{prob:Ps}. As we have shown in \cref{lma:Jsdiff}, the objective function is differentiable. In the following, let
\begin{align*}
C&\coloneqq\left\{\left(y,u\right)\in\mathcal Y\times\mathcal U\mid 0\leq\f\leq\left\|\mathring\f\right\|_{L^\infty\left(\Omega\times\R^3\right)}\ \mathrm{a.e.},\alpha=1,\dots,N\right\}\subset\mathcal Y\times\mathcal U,\\
K&\coloneqq\R_{\geq 0}^2\subset\R^2.
\end{align*}
Clearly, $C$ is non-empty, closed, and convex, and $K$ is a closed convex cone. Furthermore, the constraints \cref{eq:constrfinfty,eq:constrener,eq:constrNa} are equivalent to
\begin{align*}
\left(y,u\right)\in C, g\left(y,u\right)\in K,
\end{align*}
where
\begin{align*}
g_1\left(y,u\right)&=\mathcal I\left(u\right)-\suma\int_0^{T}\int_\Omega\int_{\R^3}\vo\f\,dvdxdt-\frac{\sigma}{8\pi}\int_0^{T}\int_{\R^3}\left(\left|E\right|^2+\left|H\right|^2\right)\,dxdt,\\
g_2\left(y,u\right)&=\mathcal L\left(u\right)-\frac{1}{2}\suma\Na\left(\f\right)^2.
\end{align*}
It is easy to see that $g$ is continuously differentiable with
\begin{align*}
g_1'\left(y,u\right)\left(\delta y,\delta u\right)&=\beta_1\int_0^{T}\int_\Gamma u\cdot\delta u\,dxdt-\suma\int_0^{T}\int_\Omega\int_{\R^3}\vo\delta\f\,dvdxdt\\
&\phantom{=\;}-\frac{\sigma}{4\pi}\int_0^{T}\int_{\R^3}\left(E\cdot\delta E+H\cdot\delta H\right)\,dxdt,\\
g_2'\left(y,u\right)\left(\delta y,\delta u\right)&=\beta_2\int_0^{T}\int_\Gamma u\cdot\delta u\,dxdt\\
&\phantom{=\;}-\suma\int_0^{T}\int_\Omega\left(\partial_t\delta\f+\v\cdot\partial_x\delta\f\right)\left(\mathcal R^{-1}\left(\partial_t\f+\v\cdot\partial_x\f\right)\right)\,dxdt,
\end{align*}
where $\mathcal R\colon H^1\left(\R^3\right)\to H^{-1}\left(\R^3\right)$ is the Riesz isomorphism and 
\begin{align*}
\mathcal R^{-1}\left(\partial_t\f+\v\cdot\partial_x\f\right)\equiv\mathcal R^{-1}\left(\left(\partial_t\f+\v\cdot\partial_x\f\right)\left(t,x\right)\right),
\end{align*}
and where
\begin{align*}
\beta_1\coloneqq 4\pi T^2\sigma^{-1},\quad
\beta_2\coloneqq\beta_1\frac{8\pi}{\sigma}\suma\left|\e\right|^2\left\|\mathring\f\right\|_{L^\infty\left(\left[0,T\right]\times\Omega\times\R^3\right)}^2.
\end{align*}
We verify the constraint qualification \cref{eq:CQ}:
\begin{lemma}\label{lma:CQ}
	Let $\left(y_s,u_s\right)$ be a minimizer of \cref{prob:Ps}. Then \cref{eq:CQ} is satisfied if $s$ is sufficiently large.
\end{lemma}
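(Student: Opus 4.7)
The plan is to verify \cref{eq:CQ} in the strong form $g'(y_s,u_s)C(y_s,u_s)=\R^2$, which immediately implies the Zowe--Kurcyusz condition since then the $-K(g(y_s,u_s))$ contribution can be absorbed by $0\in K(g(y_s,u_s))$. Because the only pointwise constraint defining $C\subset\mathcal{Y}\times\mathcal{U}$ touches $\f$, any variation $(\delta y,\delta u)$ with $\delta\f\equiv 0$ for every $\alpha$ lies automatically in the tangent cone $C(y_s,u_s)$; I would therefore focus first on pure $\delta u$-variations and pure $(\delta E,\delta H)$-variations, which are the cheapest to analyse.

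Inserting these into the explicit formulas for $g_1'$ and $g_2'$ yields
\begin{align*}
g'(y_s,u_s)(0,\delta u)&=(\beta_1,\beta_2)\,\langle u_s,\delta u\rangle_{L^2([0,T]\times\Gamma;\R^3)},\\
g'(y_s,u_s)(0,0,\delta E,\delta H,0)&=\bigl(-\tfrac{\sigma}{4\pi}(\langle E_s,\delta E\rangle+\langle H_s,\delta H\rangle),\,0\bigr).
\end{align*}
The first family covers the entire line through $(\beta_1,\beta_2)$ whenever $u_s\not\equiv 0$, and the second covers the horizontal axis whenever $(E_s,H_s)\not\equiv 0$. Since $\beta_2>0$, the vectors $(\beta_1,\beta_2)$ and $(1,0)$ are linearly independent, so the combined images span all of $\R^2$; in this generic situation \cref{eq:CQ} follows at once.

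The principal remaining task is to rule out the degenerate possibilities $u_s\equiv 0$ and $(E_s,H_s)\equiv 0$, and this is where the hypothesis that $s$ be sufficiently large enters. Evaluating $\mathcal{J}_s$ at a weak Vlasov--Maxwell solution $(\tilde y,\tilde u)$ furnished by \cref{prop:Existence}, for which $\mathcal{G}(\tilde y,\tilde u)=0$ and hence $\mathcal{J}_s(\tilde y,\tilde u)=\mathcal{J}(\tilde y,\tilde u)=:M$, gives $\|\mathcal{G}(y_s,u_s)\|_{\Lambda^*}^2\le 2M/s\to 0$. Thus $(y_s,u_s)$ is an approximate weak solution of the full system; pairing its approximate Maxwell identity with test functions $\vartheta^e,\vartheta^h\in\Theta_T$ designed to isolate the nontrivial initial data $(\mathring E,\mathring H)$ of \cref{cond:data} should, via the estimates of \cref{sec:somestimates}, give a quantitative lower bound forcing $(E_s,H_s)\not\equiv 0$ for $s$ large. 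Should additionally $u_s\equiv 0$ occur, I would replace the $\delta u$-family by a $\delta\f$-family: pick one species and $\phi\in C_c^\infty(]0,T[\times\Omega\times\R^3)$ supported on a positive-measure set where $0<\f_s<\|\mathring\f\|_{L^\infty(\Omega\times\R^3)}$ (non-empty because $\mathring\f\not\equiv 0$ propagates through the approximate Vlasov equation) so that $\pm\phi\in C(y_s,u_s)$ and $\int\vo\phi\ne 0$; the resulting image direction is linearly independent of $(1,0)$ and closes the argument.

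The main obstacle is precisely this quantitative transfer of nondegeneracy from the initial data to the minimizer $(y_s,u_s)$; it is the only step essentially relying on \enquote{$s$ sufficiently large}, and it requires tailoring Maxwell (or Vlasov) test functions that extract information from the $O(1/s)$ residual in $\mathcal{G}$. Everything else is a matter of choosing admissible variations.
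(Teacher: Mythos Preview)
Your route differs from the paper's, and the degenerate cases are where it breaks down. First, \cref{cond:data} does \emph{not} assume $(\mathring E,\mathring H)\not\equiv 0$; only the $\mathring\f$ are nontrivial. Hence the Maxwell-based argument for $(E_s,H_s)\not\equiv 0$ can fail outright, and with it the horizontal direction $(1,0)$ is lost. Second, the $\delta\f$-fallback has two independent problems: (i) nothing forces the set $\{0<\f_s<\|\mathring\f\|_{L^\infty}\}$ to have positive measure, so two-sided variations $\pm\phi\in C(y_s,u_s)$ need not exist; (ii) the second component of $g'(y_s,u_s)$ applied to a pure $\delta\f$-variation is $-\langle\partial_t\delta\f+\v\cdot\partial_x\delta\f,\mathcal R^{-1}(\partial_t\f_s+\v\cdot\partial_x\f_s)\rangle$, which vanishes for \emph{every} $\delta\f$ whenever $\Na(\f_s)=0$; in that case your $\delta\f$-direction is parallel to $(1,0)$ and supplies nothing new. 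Since you insist on the strong form $g'(y_s,u_s)C(y_s,u_s)=\R^2$, you cannot rescue these cases with the cone $K(g(y_s,u_s))$.

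The paper avoids all of this by keeping the $-K(g(y_s,u_s))$ term and by using only the \emph{one-sided} variation $\delta\f=\tfrac{1}{2}\f_s\in C$, which requires merely $\f_s\not\equiv 0$. That nondegeneracy is obtained, for $s$ large, by testing the Vlasov part of $\mathcal G$ against $\psia_*(t,x,v)=\eta(t)\varphi^\alpha(x,v)$ with $\varphi^\alpha$ close to $\mathring\f$ in $L^2$: if some $\f_s\equiv 0$ then $\|\mathcal G(y_s,u_s)\|_{\Lambda^*}$ is bounded below by a fixed positive constant, so $\mathcal J_s(y_s,u_s)$ exceeds $\min\mathcal J$ and $(y_s,u_s)$ cannot minimize. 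With $\f_s\not\equiv 0$, the direction $\delta\f-\f_s=-\tfrac{1}{2}\f_s$ gives $g_1'>0$ via $\int\vo\f_s>0$ and $g_2'\ge 0$; moreover, if $g_2(y_s,u_s)=0$ then $\mathcal L(u_s)>0$ forces some $\Na(\f_s)>0$, so $g_2'>0$ as well, while if $g_2(y_s,u_s)>0$ the second component is handled by $K(g(y_s,u_s))$. A short case split on the signs of $d_1,d_2$ and on whether $g_1,g_2$ vanish completes the argument without ever needing $u_s\not\equiv 0$ or $(E_s,H_s)\not\equiv 0$.
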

\begin{proof}
	First, we exclude the possibility that some $\f_s$ is identically zero for $s$ sufficiently large (since then the term $\frac{s}{2}\left\|\mathcal G\left(y_s,u_s\right)\right\|_{\Lambda^*}^2$ is too large for $\left(y_s,u_s\right)$ to be a minimizer of \cref{prob:Ps}): For each $\alpha$, 
	let $\psia_*\colon\left[0,T\right]\times\Omega\times\R^3\to\R$, $\psia_*\left(t,x,v\right)=\eta\left(t\right)\varphi^\alpha\left(x,v\right)$, where $\eta\in C^\infty\left(\left[0,T\right]\right)$ with $\eta\left(0\right)=1$ and $\supp\eta\subset\left[0,T\right[$, and $\varphi^\alpha\in C_c^\infty\left(\Omega\times\R^3\right)$ with
	\begin{align*}
	\left\|\mathring\f-\varphi^\alpha\right\|_{L^2\left(\Omega\times\R^3\right)}\leq\frac{1}{2}\left\|\mathring\f\right\|_{L^2\left(\Omega\times\R^3\right)}.
	\end{align*}
	Clearly, $\psia_*\in\Psi_{T}$. Now assume $f_s^{\alpha_0}=0$ for some $\alpha_0$. There holds
	\begin{align*}
	&\left|\mathcal G\left(y_s,u_s\right)\left(\left(0,\dots,0,\psi_*^{\alpha_0},0,\dots,0\right),0,0\right)\right|=\left|\int_\Omega\int_{\R^3}\mathring f^{\alpha_0}\varphi^{\alpha_0}\,dvdx\right|\\
	&=\left|\left\|\mathring f^{\alpha_0}\right\|_{L^2\left(\Omega\times\R^3\right)}^2-\int_\Omega\int_{\R^3}\mathring f^{\alpha_0}\left(\mathring f^{\alpha_0}-\varphi^{\alpha_0}\right)\,dvdx\right|\geq\frac{1}{2}\left\|\mathring f^{\alpha_0}\right\|_{L^2\left(\Omega\times\R^3\right)}^2.
	\end{align*}
	Thus,
	\begin{align*}
	\left\|\mathcal G\left(y_s,u_s\right)\right\|_{\Lambda^*}\geq\frac{\left\|\mathring f^{\alpha_0}\right\|_{L^2\left(\Omega\times\R^3\right)}^2}{2\left\|\psi_*^{\alpha_0}\right\|_{W^{1,p,p'}}},
	\end{align*}
	whence
	\begin{align}\label{eq:Jscontr}
	\mathcal J_s\left(y_s,u_s\right)\geq s\cdot\frac{\left\|\mathring f^{\alpha_0}\right\|_{L^2\left(\Omega\times\R^3\right)}^4}{8\left\|\psi_*^{\alpha_0}\right\|_{W^{1,p,p'}}^2}>\mathcal J\left(y_*,u_*\right)=\mathcal J_s\left(y_*,u_*\right),
	\end{align}
	where $\left(y_*,u_*\right)$ is a minimizer of \cref{prob:P} and where the strict inequality holds for $s$ sufficiently large, i.e.,
	\begin{align*}
	s>\max_{\alpha=1,\dots,N}\frac{8\left\|\psia_*\right\|_{W^{1,p,p'}}^2\mathcal J\left(y_*,u_*\right)}{\left\|\mathring\f\right\|_{L^2\left(\Omega\times\R^3\right)}^4};
	\end{align*}
	note that the right-hand side does not depend on $s$ and $\alpha_0$ and that no $\mathring\f$ is identically zero. Since $\left(y_*,u_*\right)$ is feasible for \cref{prob:Ps}, \cref{eq:Jscontr} is a contradiction to $\left(y_s,u_s\right)$ being a minimizer of \cref{prob:Ps}.
	
	To prove the lemma, we have to show that for each $d\in\R^2$ there are $\lambda_1,\lambda_2\geq 0$, $k\in\R_{\geq 0}^2$, and $\left(\delta y,\delta u\right)\in C$ satisfying
	\begin{align}\label{eq:CQappl}
	\lambda_1g'\left(y_s,u_s\right)\left(\delta y-y_s,\delta u-u_s\right)-k+\lambda_2g\left(y_s,u_s\right)=d.
	\end{align}
	We choose $\delta\f_+=\f_{s,+}$ for all $\alpha$, $\delta E=E_s$, $\delta H=H_s$, $\delta u=u_s$, and consider seven cases; note that in the following there always holds $\lambda_1,\lambda_2\geq 0$, $k\in\R_{\geq 0}^2$, and $\left(\delta y,\delta u\right)\in C$:\\
	\textit{Case 1:} $d_1,d_2\leq 0$: Choose $\lambda_1=\lambda_2=0$, $\delta\f=\f_s$ for all $\alpha$, $k=-d$.\\
	\textit{Case 2:} $d_1>0$, $d_2\leq 0$: Choose $\lambda_2=0$, $\delta f^1=\frac{1}{2}f_s^1$, $\delta\f=\f_s$ for $\alpha\geq 2$, $k_1=0$. Since
	\begin{align*}
	g_1'\left(y_s,u_s\right)\left(\delta y-y_s,\delta u-u_s\right)=\frac{1}{2}\int_0^{T}\int_\Omega\int_{\R^3}v_1^0f_s^1\,dvdxdt>0,
	\end{align*}
	we can choose $\lambda_1>0$ such that the first component of \cref{eq:CQappl} is satisfied. Then set
	\begin{align*}
	k_2=-d_2+\lambda_1g_2'\left(y_s,u_s\right)\left(\delta y-y_s,\delta u-u_s\right)=-d_2+\frac{\lambda_1}{2}\mathcal N_1\left(f_s^1\right)^2.
	\end{align*}
	\textit{Case 3:} $d_1\leq 0$, $d_2>0$, $g_2\left(y_s,u_s\right)=0$: Because of $\mathcal L\left(u_s\right)>0$, there holds $\mathcal N_{\alpha_0}\left(f_s^{\alpha_0}\right)>0$ for some $\alpha_0$. Choose $\lambda_2=0$, $\delta f^{\alpha_0}=\frac{1}{2}f_s^{\alpha_0}$, $\delta\f=\f_s$ for $\alpha\neq\alpha_0$, $k_2=0$. Since
	\begin{align*}
	g_2'\left(y_s,u_s\right)\left(\delta y-y_s,\delta u-u_s\right)=\frac{1}{2}\mathcal N_{\alpha_0}\left(f_s^{\alpha_0}\right)^2>0,
	\end{align*}
	we can choose $\lambda_1>0$ such that the second component of \cref{eq:CQappl} is satisfied. Then set
	\begin{align*}
	k_1=-d_1+\lambda_1g_1'\left(y_s,u_s\right)\left(\delta y-y_s,\delta u-u_s\right)=-d_1+\frac{\lambda_1}{2}\int_0^{T}\int_\Omega\int_{\R^3}v_{\alpha_0}^0f_s^{\alpha_0}\,dvdxdt.
	\end{align*}
	\textit{Case 4:} $d_1\leq 0$, $d_2>0$, $g_2\left(y_s,u_s\right)>0$: Choose $\lambda_1=0$, $\delta\f=\f_s$ for all $\alpha$, $k_2=0$. By $g_2\left(y_s,u_s\right)>0$ we can choose $\lambda_2>0$ such that the second component of \cref{eq:CQappl} is satisfied. Then set $k_1=-d_1+\lambda_2g_1\left(y_s,u_s\right)$.\\
	\textit{Case 5:} $d_1,d_2>0$, $g_2\left(y_s,u_s\right)=0$: As in Case 3, there holds $\mathcal N_{\alpha_0}\left(f_s^{\alpha_0}\right)>0$ for some $\alpha_0$. Choose $\lambda_2=0$, $\delta f^{\alpha_0}=\frac{1}{2}f_s^{\alpha_0}$, $\delta\f=\f_s$ for $\alpha\neq\alpha_0$, $k_2=0$. Since
	\begin{align*}
	g_1'\left(y_s,u_s\right)\left(\delta y-y_s,\delta u-u_s\right)&=\frac{1}{2}\int_0^{T}\int_\Omega\int_{\R^3}v_{\alpha_0}^0f_s^{\alpha_0}\,dvdxdt>0,\\
	g_2'\left(y_s,u_s\right)\left(\delta y-y_s,\delta u-u_s\right)&=\frac{1}{2}\mathcal N_{\alpha_0}\left(f_s^{\alpha_0}\right)^2>0,
	\end{align*}
	we can choose 
	\begin{align*}
	\lambda_1=\max_{i=1,2}\frac{d_i}{g_i'\left(y_s,u_s\right)\left(\delta y-y_s,\delta u-u_s\right)}
	\end{align*}
	and then $k=-d+\lambda_1g'\left(y_s,u_s\right)\left(\delta y-y_s,\delta u-u_s\right)$.\\
	\textit{Case 6:} $d_1,d_2>0$, $g_2\left(y_s,u_s\right)>0$, $g_1\left(y_s,u_s\right)=0$: Choose $\lambda_2=\frac{d_2}{g_2\left(y_s,u_s\right)}$, $\delta f^1=\frac{1}{2}f_s^1$, $\delta\f=\f_s$ for $\alpha\geq 2$, $k_1=0$. Since
	\begin{align*}
	g_1'\left(y_s,u_s\right)\left(\delta y-y_s,\delta u-u_s\right)=\frac{1}{2}\int_0^{T}\int_\Omega\int_{\R^3}v_1^0f_s^1\,dvdxdt>0,
	\end{align*}
	we can choose $\lambda_1>0$ such that the first component of \cref{eq:CQappl} is satisfied. Then set
	\begin{align*}
	k_2=\lambda_1g_2'\left(y_s,u_s\right)\left(\delta y-y_s,\delta u-u_s\right)=\frac{\lambda_1}{2}\mathcal N_1\left(f_s^1\right)^2.
	\end{align*}
	\textit{Case 7:} $d_1,d_2>0$, $g_2\left(y_s,u_s\right)>0$, $g_1\left(y_s,u_s\right)>0$: Choose $\lambda_1=0$, $\delta\f=\f_s$ for all $\alpha$, and
	\begin{align*}
	\lambda_2=\max_{i=1,2}\frac{d_i}{g_i\left(y_s,u_s\right)},\quad k=-d+\lambda_2g\left(y_s,u_s\right).
	\end{align*}
	In all cases \cref{eq:CQappl} holds; the proof is complete.
\end{proof}
Now, \cref{prop:zowekur} gives us the following theorem:
\begin{theorem}\label{thm:OCs}
	Let $s$ be sufficiently large and $\left(y_s,u_s\right)$ a minimizer of \cref{prob:Ps}. Then there exist two numbers $\nu_s^1,\nu_s^2\geq 0$, and $\taua_s\in\left(\Y\right)^*$, $\alpha=1,\dots,N$, such that:
	\begin{enumerate}[label=(\roman*),leftmargin=*]
		\item\label{lma:optcoJsi} $\nu_s^i=0$ or $g_i\left(y_s,u_s\right)=0$, $i=1,2$;
		\item\label{lma:optcoJsii}
		\begin{align*}
		\suma\taua_s\f_s\leq\suma\taua_s\delta\f
		\end{align*}
		for all $\delta\f\in\Y$ satisfying $0\leq\delta\f\leq\left\|\mathring\f\right\|_{L^\infty\left(\Omega\times\R^3\right)}$ a.e.,
		\item\label{lma:optcoJsiii} for all $\left(\delta y,\delta u\right)\in\mathcal Y\times\mathcal U$ there holds
		\begin{align}\label{eq:optcoJsiii}
		0&=\suma\w\int_{\gamma_{T}^+}\sign{\f_{s,+}}\left|\f_{s,+}\right|^{q-1}\delta\f_+\,d\gamm\nonumber\\
		&\phantom{=\;}+\sum_{j=1}^3\int_0^{T}\int_\Gamma\left(\vphantom{\sum_{i=1}^3}\sign{u_{s,j}}\left|u_{s,j}\right|^{r-1}\delta u_j+\kappa_1\sign{\partial_t u_{s,j}}\left|\partial_tu_{s,j}\right|^{r-1}\partial_t\delta u_j\right.\nonumber\\
		&\omit\hfill$\displaystyle\left.+\kappa_2\sum_{i=1}^3\sign{\partial_{x_i}u_{s,j}}\left|\partial_{x_i}u_{s,j}\right|^{r-1}\partial_{x_i}\delta u_j\right)\,dxdt$\nonumber\\
		&\phantom{=\;}+\suma\left(-\int_0^{T}\int_\Omega\int_{\R^3}\left(\left(\partial_t\psia_s+\v\cdot\partial_x\psia_s+\e\left(E_s+\v\times H_s\right)\cdot\partial_v\psia_s\right)\delta\f\right.\right.\nonumber\\
		&\omit\hfill$\displaystyle\left.+\e\left(\delta E+\v\times\delta H\right)\f_s\cdot\partial_v\psia_s\right)\,dvdxdt$\nonumber\\
		&\phantom{=\;}\phantom{=\suma\Big(}\left.+\int_{\gamma_{T}^+}\delta\f_+\psia_s\,d\gamm-\int_{\gamma_{T}^-}\a\left(K\delta\f_+\right)\psia_s\,d\gamm\right)\nonumber\\
		&\phantom{=\;}+\int_0^{T}\int_{\R^3}\left(\varepsilon\delta E\cdot\partial_t\vartheta_s^e-\delta H\cdot\curl_x\vartheta_s^e-4\pi\left(\delta j^\inte+\delta u\right)\cdot\vartheta_s^e\right)\,dxdt\nonumber\\
		&\phantom{=\;}+\int_0^{T}\int_{\R^3}\left(\mu\delta H\cdot\partial_t\vartheta_s^h+\delta E\cdot\curl_x\vartheta_s^h\right)\,dxdt\nonumber\\
		&\phantom{=\;}+\left(\nu_s^1\beta_1+\nu_s^2\beta_2\right)\int_0^{T}\int_\Gamma u_s\cdot\delta u\,dxdt-\nu_s^1\suma\int_0^{T}\int_\Omega\int_{\R^3}\vo\delta\f\,dvdxdt\nonumber\\
		&\phantom{=\;}-\nu_s^2\suma\int_0^{T}\int_\Omega\left(\partial_t\delta\f+\v\cdot\partial_x\delta\f\right)\left(\mathcal R^{-1}\left(\partial_t\f_s+\v\cdot\partial_x\f_s\right)\right)\,dxdt\nonumber\\
		&\phantom{=\;}-\frac{\nu_s^1\sigma}{4\pi}\int_0^{T}\int_{\R^3}\left(E_s\cdot\delta E+H_s\cdot\delta H\right)\,dxdt-\suma\taua_s\delta\f
		\end{align}
		where $\left(\left(\psia_s\right)_\alpha,\vartheta_s^e,\vartheta_s^h\right)\in\Lambda$ is, in accordance with \cref{eq:Jsdiffdual}, given by
		\begin{align*}
		\left\|\left(\left(\psia_s\right)_\alpha,\vartheta_s^e,\vartheta_s^h\right)\right\|_{\Lambda}&=s\left\|\mathcal G\left(y_s,u_s\right)\right\|_{\Lambda^*},\\
		\mathcal G\left(y_s,u_s\right)\left(\left(\psia_s\right)_\alpha,\vartheta_s^e,\vartheta_s^h\right)&=s\left\|\mathcal G\left(y_s,u_s\right)\right\|_{\Lambda^*}^2.
		\end{align*}
		In other words, \cref{eq:optcoJsiii} can be interpreted as $\left(\left(\psia_s\right)_\alpha,\vartheta_s^e,\vartheta_s^h\right)$ being a solution of the adjoint system
		\begin{subequations}\renewcommand{\theequation}{\mbox{Ad$_{\text s}$}.\arabic{equation}}\phantomsection\makeatletter\def\@currentlabel{\mbox{Ad$_{\text s}$}}\label{eq:Ads}\makeatother
			\begin{align}
			&\partial_t\psia_s+\v\cdot\partial_x\psia_s+\e\left(E_s+\v\times H_s\right)\cdot\partial_v\psia_s\hspace{10mm}\nonumber\\
			&\omit\hfill$\displaystyle=4\pi\v\cdot\vartheta_s^e+\nu_s^1\vo+\nu_s^2\mathcal M_\alpha\f_s+\taua_s$\hspace{10mm}\nonumber\\
			&\omit\hfill$\displaystyle\mathrm{on}\ \left[0,T\right]\times\Omega\times\R^3,$\hspace{10mm}\\
			&\omit$\displaystyle K\a K\psia_{s,-}=\psia_{s,+}+\w\sign{\f_{s,+}}\left|\f_{s,+}\right|^{q-1}\hfill\mathrm{on}\ \gamma_{T}^+,$\hspace{10mm}\\
			&\omit$\displaystyle\psia_s\left(T\right)=0\hfill\mathrm{on}\ \Omega\times\R^3,$\hspace{10mm}\\
			&\omit$\displaystyle\varepsilon\partial_t\vartheta_s^e+\curl_x\vartheta_s^h=-\suma\e\int_{\R^3}\f_s\partial_v\psia_s\,dv-\frac{\nu_s^1\sigma}{4\pi}E_s\hfill\hspace{5mm}\mathrm{on}\ \left[0,T\right]\times\R^3,$\hspace{10mm}\label{eq:Ads4}\\
			&\mu\partial_t\vartheta_s^h-\curl_x\vartheta_s^e=-\suma\e\int_{\R^3}\f_s\left(\partial_v\psia_s\times\v\right)\,dv-\frac{\nu_s^1\sigma}{4\pi}H_s\hspace{10mm}\nonumber\\
			&\omit\hfill$\displaystyle\mathrm{on}\ \left[0,T\right]\times\R^3,$\hspace{10mm}\label{eq:Ads5}\\
			&\omit$\displaystyle\vartheta_s^e\left(T\right)=\vartheta_s^h\left(T\right)=0\hfill\mathrm{on}\ \R^3,$\hspace{10mm}
			\end{align}
		\end{subequations}
		where
		\begin{align*}
		\left(\mathcal M_\alpha\f_s\right)\delta\f=\int_0^{T}\int_\Omega\left(\partial_t\delta\f+\v\cdot\partial_x\delta\f\right)\left(\mathcal R^{-1}\left(\partial_t\f_s+\v\cdot\partial_x\f_s\right)\right)\,dxdt,
		\end{align*}
		and the stationarity condition
		\begin{align}
		0&=\sum_{j=1}^3\int_0^{T}\int_\Gamma\left(\vphantom{\sum_{i=1}^3}\sign{u_{s,j}}\left|u_{s,j}\right|^{r-1}\delta u_j+\kappa_1\sign{\partial_t u_{s,j}}\left|\partial_tu_{s,j}\right|^{r-1}\partial_t\delta u_j\right.\hspace{5mm}\nonumber\\
		&\omit\hfill$\displaystyle\left.+\kappa_2\sum_{i=1}^3\sign{\partial_{x_i}u_{s,j}}\left|\partial_{x_i}u_{s,j}\right|^{r-1}\partial_{x_i}\delta u_j\right)\,dxdt\hspace{5mm}$\nonumber\\
		&\omit$\displaystyle\phantom{=\;}-\int_0^{T}\int_{\Gamma}\left(4\pi\vartheta_s^e-\left(\nu_s^1\beta_1+\nu_s^2\beta_2\right)u_s\right)\cdot\delta u\,dxdt\hfill\mathrm{for\ all}\ \delta u\in\mathcal U\hspace{5mm}$\tag{\mbox{SC$_{\text s}$}}\label{eq:SCs}
		\end{align}
		being satisfied.
	\end{enumerate}
\end{theorem}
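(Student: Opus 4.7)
The approach is to invoke the Zowe-Kurcyusz theorem, \cref{prop:zowekur}, with $X=\mathcal Y\times\mathcal U$, $Y=\R^2$, $\phi=\mathcal J_s$, $g=(g_1,g_2)$, and $C$, $K$ as defined before \cref{lma:CQ}. The space $X$ is complete by \cref{lma:YUBanach}, the objective $\phi$ is differentiable by \cref{lma:Jsdiff}, and the constraint qualification is \cref{lma:CQ}. Continuous differentiability of $g$ can be read off directly from the explicit formulas for $g_1'$ and $g_2'$ given before \cref{lma:CQ}; the only nontrivial piece is the $\mathcal R^{-1}$-term of $g_2'$, but $\mathcal R\colon H^1\left(\R^3\right)\to H^{-1}\left(\R^3\right)$ is a linear isomorphism and $\partial_t\f+\v\cdot\partial_x\f$ depends continuously on $\f\in\Y$ by definition of the $\Y$-norm. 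Hence \cref{prop:zowekur} yields multipliers $(\nu_s^1,\nu_s^2)\in K^+=\R_{\geq 0}^2$ satisfying its three conclusions.

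Next I translate these conclusions into the statement. Conclusion \cref{prop:zowekurii} reads $\nu_s^1g_1(y_s,u_s)+\nu_s^2g_2(y_s,u_s)=0$; since both summands are nonnegative (by feasibility and $\nu_s^i\geq 0$), each vanishes, which is exactly \cref{lma:optcoJsi}. Conclusion \cref{prop:zowekuriii} says that for all $(\tilde y,\tilde u)\in C$,
\begin{align*}
\Lambda_s\bigl(\tilde y-y_s,\tilde u-u_s\bigr)\geq 0,\quad\text{where }\Lambda_s\coloneqq\mathcal J_s'(y_s,u_s)-\nu_s^1\,g_1'(y_s,u_s)-\nu_s^2\,g_2'(y_s,u_s).
\end{align*}
Because $C$ only constrains the components $\f$ (pointwise), the directions $\delta\f_+^\alpha,\delta E,\delta H,\delta u$ are two-sided free, so $\Lambda_s$ must vanish on them; plugging in the explicit derivative formulas from \cref{lma:Jsdiff} and those for $g_1',g_2'$ listed before \cref{lma:CQ}, this vanishing yields the $\f_+^\alpha,E,H,u$ portions of \cref{eq:optcoJsiii}. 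I then define $\taua_s\in(\Y)^*$ so that $-\suma\taua_s\delta\f$ equals the $\f$-block of $\Lambda_s$; it really is bounded on $\Y$, the only delicate summand being the $g_2'$-piece, controlled by $\Na(\delta\f)\Na(\f_s)\leq\|\delta\f\|_{\Y}\Na(\f_s)$ via Cauchy-Schwarz in $L^2\left(\left[0,T\right]\times\Omega;H^{-1}\left(\R^3\right)\right)$. With this $\taua_s$ the full equation \cref{eq:optcoJsiii} is simply the definition of $\Lambda_s$, and specializing the variational inequality to $(\tilde y,\tilde u)=((\tilde\f,\f_{s,+}^\alpha)_\alpha,E_s,H_s,u_s)\in C$ with $\tilde\f\in[0,\|\mathring\f\|_{L^\infty}]$ reduces it to $\suma\taua_s\f_s\leq\suma\taua_s\tilde\f$, which is \cref{lma:optcoJsii}.

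Finally, recasting \cref{eq:optcoJsiii} as the adjoint system \cref{eq:Ads} and the stationarity condition \cref{eq:SCs} is a standard matter of selecting test directions: $\delta u$ alone yields \cref{eq:SCs}; smooth $\delta E,\delta H$ with compact support in $\left]0,T\right[\times\R^3$ yield \cref{eq:Ads4,eq:Ads5} after integration by parts; the terminal conditions $\psia_s(T)=0$ and $\vartheta_s^e(T)=\vartheta_s^h(T)=0$ are inherited from the definition $\Lambda=\overline{\Psi_{T}}^N\times\overline{\Theta_{T}}^2$, whose elements vanish at $t=T$; $\delta\f$ of compact support in $\left]0,T\right[\times\Omega\times\R^3$ produces the adjoint Vlasov equation with inhomogeneity $4\pi\v\cdot\vartheta_s^e+\nu_s^1\vo+\nu_s^2\mathcal M_\alpha\f_s+\taua_s$; and $\delta\f_+$ supported on $\gamma_{T}^-$ — after rewriting the absorption integral via the measure-preserving change of variables $v\mapsto v-2(v\cdot n(x))n(x)$ with Jacobian determinant $-1$ — gives the reflection boundary condition for $\psia_{s,+}$. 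The genuine obstacle is not analytical but bookkeeping: isolating the multipliers $\taua_s$ as the single ingredient that bridges the one-sided Zowe-Kurcyusz inequality on $C$ and the two-sided equality \cref{eq:optcoJsiii} on all of $\mathcal Y\times\mathcal U$, while simultaneously verifying that $\taua_s$ lies in $(\Y)^*$.
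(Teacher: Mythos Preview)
Your argument is essentially the paper's: apply \cref{prop:zowekur}, observe that $\Lambda_s=\mathcal J_s'(y_s,u_s)-\nu_s\cdot g'(y_s,u_s)\in C(y_s,u_s)^+$ vanishes on the unconstrained components $\delta\f_+,\delta E,\delta H,\delta u$, and let $\taua_s$ be its restriction to $\Y$. One slip: with your sign convention $-\suma\taua_s\delta\f=\Lambda_s|_{\delta\f}$, the inequality $\Lambda_s(\tilde y-y_s,0)\geq 0$ yields $\suma\taua_s\tilde\f\leq\suma\taua_s\f_s$, the \emph{reverse} of \cref{lma:optcoJsii}; the paper takes $\taua_s=+\Lambda_s|_{\delta\f}$ (i.e.\ $\tau_s(\delta y,\delta u)=\suma\taua_s\delta\f$), which gives the correct direction and the correct sign $-\suma\taua_s\delta\f$ in \cref{eq:optcoJsiii}.
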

\begin{proof}
	Since \cref{eq:CQ} holds due to \cref{lma:CQ} and $\mathcal Y\times\mathcal U$ is a Banach space due to Lemma \ref{lma:YUBanach}, by \cref{prop:zowekur} there is $\nu_s=\left(\nu_s^1,\nu_s^2\right)\in\R^2$ acting as a Lagrangian multiplier with respect to \cref{eq:constrNa}. \Cref{prop:zowekur} \cref{prop:zowekuri} implies $\nu_s^1,\nu_s^2\geq 0$ and \ref{prop:zowekur} \cref{prop:zowekurii} yields \cref{lma:optcoJsi}.
	
	With \cref{prop:zowekur} \cref{prop:zowekuriii} and the notation used there we see that
	\begin{align}\label{eq:taudef}
	\tau_s\coloneqq\mathcal J_s'\left(y_s,u_s\right)-\nu_s\cdot g'\left(y_s,u_s\right)\in C\left(y_s,u_s\right)^+\subset\left(\mathcal Y\times\mathcal U\right)^*.
	\end{align}
	Consequently, $\tau$ can be decomposed into 
	\begin{align*}
	\tau_s&\equiv\left(\left(\taua_s\right)_\alpha,\left(\taua_{s,+}\right)_\alpha,\tau_s^e,\tau_s^h,\tau_s^u\right)\\
	&\in\left(\bigtimes_{\alpha=1}^N\left(\Y\right)^*\times L^q\left(\gamma_{T}^+,d\gamm\right)^*\right)\times\left(L^2\left(\left[0,T\right]\times\R^3;\R^3\right)^*\right)^2\times\mathcal U^*.
	\end{align*}
	Since the cone $C\left(y_s,u_s\right)$ only limits the directions $\delta\f$ and not the directions $\delta\f_+$, $\delta E$, $\delta H$, and $\delta u$, the property $\tau_s\in C\left(y_s,u_s\right)^+$ yields that all $\taua_{s,+}$ and moreover $\tau_s^e$, $\tau_s^h$, and $\tau_s^u$ have to vanish. Thus, $\tau_s\equiv\left(\taua_s\right)_\alpha$ via
	\begin{align}\label{eq:tauident}
	\tau_s\left(\delta y,\delta u\right)=\suma\taua_s\delta\f.
	\end{align}
	On the one hand, by $\tau_s\in C\left(y_s,u_s\right)^+$ and the identification \cref{eq:tauident} we have for all $\delta\f\in\Y$ satisfying $0\leq\delta\f\leq\left\|\mathring\f\right\|_{L^\infty\left(\Omega\times\R^3\right)}$ a.e.,
	\begin{align*}
	\suma\taua_s\left(\delta\f-\f_s\right)\geq 0,
	\end{align*}
	which is \cref{lma:optcoJsii}. On the other hand, \cref{eq:taudef,eq:tauident} instantly yield \cref{eq:optcoJsiii} recalling the formula for $\mathcal J_s'$ from \cref{lma:Jsdiff}. 
	
	Setting $\delta u$ and all but one of the directions $\delta\f$, $\delta\f_+$, $\delta E$, and $\delta H$ to zero and the one remaining arbitrary, we conclude that the adjoint system \eqref{eq:Ads} holds. Note that a priori the $\psia_s$, $\vartheta_s^e$, and $\vartheta_s^h$ vanish for $t=T$ by definition of the test function space $\Lambda$.
	
	Finally, setting all directions but $\delta u$ to zero yields \eqref{eq:SCs}. Thus, also the proof of \cref{lma:optcoJsiii} is complete.
\end{proof}
\begin{remark}
	If for example $r=2$ and the boundary of $\Gamma$ is smooth, \eqref{eq:SCs} can easily be interpreted as the weak form of the second order PDE
	\begin{align*}
	\kappa_1\partial_t^2 u_s+\kappa_2\Delta_xu_s&=-4\pi\vartheta_s^e+\left(\nu_s^1\beta_1+\nu_s^2\beta_2+1\right)u_s&\mathrm{on}\ \left[0,T\right]\times\Gamma,\\
	\partial_tu_s\left(0\right)=\partial_tu_s\left(T\right)&=0&\mathrm{on}\ \Gamma,\\
	\partial_{n_\Gamma}u_s&=0&\mathrm{on}\ \left[0,T\right]\times\partial\Gamma.
	\end{align*}
	Here, $\partial_{n_\Gamma}$ denotes the directional derivative in the direction of the outer unit normal $n_\Gamma$ of $\partial\Gamma$.
\end{remark}
\subsection{Passing to the limit}
There remains to pass to the limit $s\to\infty$. A natural approach is to try to pass to the limit in the optimality conditions of \cref{prob:Ps}. This would require boundedness of the adjoint state in a certain norm. To this end, typically one needs to exploit some compactness result for the linearized PDE (system). In many situations, such results are available and one can then verify that the optimality conditions also hold in the limit, i.e., for a minimizer of the original problem. We refer to \cite{Lio85} for an abundance of examples of such PDEs.

However, for the Vlasov-Maxwell system no such results are available. In the author's opinion, the most problematic terms are the source terms on the right-hand side of \cref{eq:Ads4,eq:Ads5} which include $\partial_v\psia_s$, i.e., a derivative of the adjoint state. This is a structural problem arising because of the Vlasov-Maxwell system. Conversely, there are artificial problems, that is, the appearance of $\nu_s^1$, $\nu_s^2$, $\mathcal M_\alpha\f_s$, and $\taua_s$. They only appear because it is unknown whether the artificial constraints \cref{eq:constrfinfty,eq:constrener} in \cref{prob:P} (or then \cref{eq:constrfinfty,eq:constrener,eq:constrNa} in \cref{prob:Ps}) are automatically satisfied for some weak solution of \eqref{eq:WholeSystem} (or for a minimizing sequence of \cref{prob:Ps}). Especially $\taua_s$ is very irregular and there are no weak compactness results for the space which $\taua_s$ lies in.

Thus, we are not able to prove that a minimizer of \cref{prob:P} satisfies the desired optimality conditions, i.e., \eqref{eq:Ads} and \eqref{eq:SCs} with $s$ removed. Nevertheless, there holds the following, where we abbreviate $\min\mathcal J\coloneqq\mathcal J\left(y,u\right)$, $\left(y,u\right)$ being some minimizer of \cref{prob:P}:
\begin{theorem}\label{thm:passlimit}
	For each $s>0$, let $\left(y_s,u_s\right)\in\mathcal Y\times\mathcal U$ be a minimizer of \cref{prob:Ps}. Then
	\begin{align}\label{eq:Gdecay}
	\left\|\mathcal G\left(y_s,u_s\right)\right\|_{\Lambda^*}\leq\sqrt{2\min\mathcal J}s^{-\frac{1}{2}}
	\end{align}
	and there is a minimizer $\left(y_*,u_*\right)\in\mathcal Y\times\mathcal U$ of the original problem \cref{prob:P} such that, after choosing a suitable sequence $s_k\to\infty$, $\f_{s_k}\overset{(*)}\wto\f_*$ in $L^z\left(\left[0,T\right]\times\Omega\times\R^3\right)$ for $1<z\leq\infty$, $\f_{s_k,+}\to\f_{*,+}$ in $L^q\left(\gamma_{T}^+,d\gamm\right)$, $\left(E_{s_k},H_{s_k}\right)\wto\left(E_*,H_*\right)$ in $L^2\left(\left[0,T\right]\times\R^3;\R^6\right)$, and $u_{s_k}\to u_*$ in $\mathcal U$ for $k\to\infty$. Furthermore,
	\begin{align*}
	\lim_{k\to\infty}s_k\left\|\mathcal G\left(y_{s_k},u_{s_k}\right)\right\|_{\Lambda^*}^2=0.
	\end{align*}
\end{theorem}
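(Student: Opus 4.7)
My plan is to benchmark $\mathcal J_s\left(y_s, u_s\right)$ against a minimizer $\left(y, u\right)$ of \cref{prob:P}. Such $\left(y, u\right)$ is admissible for \cref{prob:Ps} (constraint \cref{eq:constrNa} follows from $\mathcal G\left(y, u\right) = 0$ via \cref{eq:estNa} combined with \cref{eq:constrener}), and $\mathcal J_s\left(y, u\right) = \mathcal J\left(y, u\right) = \min \mathcal J$ since $\mathcal G\left(y, u\right) = 0$. Hence $\mathcal J_s\left(y_s, u_s\right) \leq \min \mathcal J$ for every $s$; isolating the penalty term yields \cref{eq:Gdecay}, while the remaining two summands bound $\left(u_s\right)$ uniformly in $\mathcal U$ and $\left(\f_{s,+}\right)$ in $L^q\left(\gamma_T^+, d\gamm\right)$. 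In combination with the constraints \cref{eq:constrfinfty,eq:constrener} this also bounds $\left(\f_s\right)$ in $L^\infty \cap L_\kin^1$ and $\left(E_s, H_s\right)$ in $L^2\left(\left[0,T\right]\times\R^3; \R^6\right)$ uniformly in $s$.

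Along a sequence $s_k \to \infty$, I extract the asserted weak/weak-$\ast$ convergences exactly as in the proof of \cref{thm:ExistenceMinimizer}; $L^z$-weak convergence for every $1 < z < \infty$ follows by interpolation. The compact embedding $\mathcal U \hookrightarrow\hookrightarrow L^2\left(\left[0, T\right] \times \Gamma; \R^3\right)$ upgrades $u_{s_k} \rightharpoonup u_*$ to strong $L^2$-convergence, so the hypotheses of \cref{lma:Gwlsc} are met. Consequently $\left(y_*, u_*\right) \in \mathcal Y \times \mathcal U$, the constraints \cref{eq:constrener,eq:constrNa} pass to the limit, and
\begin{align*}
\left\|\mathcal G\left(y_*, u_*\right)\right\|_{\Lambda^*} \leq \liminf_{k\to\infty} \left\|\mathcal G\left(y_{s_k}, u_{s_k}\right)\right\|_{\Lambda^*} \leq \lim_{k\to\infty} \sqrt{2 \min \mathcal J}\, s_k^{-1/2} = 0.
\end{align*}
Constraint \cref{eq:constrfinfty} passes via weak-$\ast$ $L^\infty$-convergence, so $\left(y_*, u_*\right)$ is feasible for \cref{prob:P}; weak lower semicontinuity together with $\mathcal J_{s_k}\left(y_{s_k}, u_{s_k}\right) \leq \min \mathcal J$ then forces $\mathcal J\left(y_*, u_*\right) \leq \min \mathcal J$, whence $\left(y_*, u_*\right)$ is a minimizer.

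For the strong convergences and the decay $\lim_k s_k \left\|\mathcal G\left(y_{s_k}, u_{s_k}\right)\right\|_{\Lambda^*}^2 = 0$, I decompose $\mathcal J_{s_k}\left(y_{s_k}, u_{s_k}\right) = A_k + B_k + C_k$ into its three nonnegative summands and set $A_* \coloneqq \tfrac{1}{q}\suma \w \|\f_{*,+}\|_{L^q\left(\gamma_T^+, d\gamm\right)}^q$, $B_* \coloneqq \tfrac{1}{r}\|u_*\|_{\mathcal U}^r$, so that $A_* + B_* = \min \mathcal J$ by the optimality just shown. From $A_k + B_k + C_k \leq \min \mathcal J$ together with the weak lower semicontinuity bounds $\liminf A_k \geq A_*$, $\liminf B_k \geq B_*$, one deduces successively $\lim\left(A_k + B_k\right) = \min \mathcal J$, $\lim C_k = 0$, and then $\lim A_k = A_*$, $\lim B_k = B_*$ by symmetric $\liminf / \limsup$ bookkeeping. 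The resulting norm convergences $\|u_{s_k}\|_{\mathcal U} \to \|u_*\|_{\mathcal U}$ and $\|\f_{s_k,+}\|_{L^q\left(\gamma_T^+, d\gamm\right)} \to \|\f_{*,+}\|_{L^q\left(\gamma_T^+, d\gamm\right)}$, paired with weak convergence in the uniformly convex spaces $\mathcal U$ (the paper's norm is isometric, after componentwise rescaling, to a standard $L^r$-norm on the tuple of first-order derivatives, hence uniformly convex for $1 < r < \infty$) and $L^q\left(\gamma_T^+, d\gamm\right)$ (uniformly convex for $1 < q < \infty$), upgrade the weak convergences to strong ones via the Radon--Riesz property.

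The main obstacle is this last step: strong $\mathcal U$-convergence cannot be read off from the compact embedding alone, which only provides strong $L^2$-convergence. It is the $\liminf / \limsup$ balancing enabled by \cref{eq:Gdecay} together with \cref{lma:Gwlsc} that forces the individual norms appearing in the cost to converge, thereby unlocking Radon--Riesz. Everything else reduces to weak compactness and the tools already assembled for \cref{thm:ExistenceMinimizer,thm:ExistenceMinimizerPs}.
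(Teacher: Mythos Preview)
Your proof is correct and follows essentially the same route as the paper: benchmark against a minimizer of \cref{prob:P}, extract weak limits via the constraints, invoke \cref{lma:Gwlsc} to recover feasibility, and then squeeze norms to upgrade to strong convergence via uniform convexity. The only stylistic difference is that the paper obtains the individual norm convergences $\|\f_{s_k,+}\|_{L^q}\to\|\f_{*,+}\|_{L^q}$ (for each $\alpha$) and $\|u_{s_k}\|_{\mathcal U}\to\|u_*\|_{\mathcal U}$ by a short contradiction argument, whereas you do it by direct $\liminf/\limsup$ bookkeeping on the summands of $\mathcal J_{s_k}$; just make sure you iterate that bookkeeping once more inside $A_k=\tfrac{1}{q}\suma\w\|\f_{s_k,+}\|_{L^q}^q$ to split off each $\alpha$-term before applying Radon--Riesz.
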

\begin{proof}
	Let $\left(y,u\right)$ be some minimizer of \cref{prob:P}. Since this $\left(y,u\right)$ is also feasible for \cref{prob:Ps}, $\mathcal G\left(y,u\right)=0$, and since $\left(y_s,u_s\right)$ is a minimizer of \cref{prob:Ps}, there holds
	\begin{align}\label{eq:Gdecayest}
	\frac{s}{2}\left\|\mathcal G\left(y_s,u_s\right)\right\|_{\Lambda^*}^2\leq\mathcal J_s\left(y_s,u_s\right)\leq\mathcal J_s\left(y,u\right)=\mathcal J\left(y,u\right)=\min\mathcal J,
	\end{align}
	which implies \cref{eq:Gdecay} and that $\left(u_s\right)$ is bounded in $\mathcal U$ and $\left(f_{s,+}\right)$ in $L^q\left(\gamma_{T}^+,d\gamm\right)$. Thus, by \cref{eq:constrfinfty,eq:constrener} each $\left(\f_s\right)$ is bounded in any $L^z\left(\left[0,T\right]\times\Omega\times\R^3\right)$, $1\leq z\leq\infty$, and $\left(E_s,H_s\right)$ in $L^2\left(\left[0,T\right]\times\R^3;\R^6\right)$. Therefore, the asserted convergences hold true, at least weakly, if the sequence $\left(s_k\right)$ is suitably chosen. Since \cref{eq:constrener,eq:constrNa} are satisfied along the sequence, we can apply \cref{lma:Gwlsc} to obtain
	\begin{align*}
	\left\|\mathcal G\left(y_*,u_*\right)\right\|_{\Lambda^*}\leq\liminf_{k\to\infty}\left\|\mathcal G\left(y_{s_k},u_{s_k}\right)\right\|_{\Lambda^*}=0
	\end{align*}
	because of \cref{eq:Gdecay}. Hence, $\left(y_*,u_*\right)$ is feasible for \cref{prob:P}. By weak lower semi-continuity of any norm, there holds
	\begin{align}\label{eq:LimitMinimizer}
	\mathcal J\left(y_*,u_*\right)\leq\liminf_{k\to\infty}\mathcal J\left(y_{s_k},u_{s_k}\right)\leq\liminf_{k\to\infty}\mathcal J_{s_k}\left(y_{s_k},u_{s_k}\right)\leq\limsup_{k\to\infty}\mathcal J_{s_k}\left(y_{s_k},u_{s_k}\right)\leq\min\mathcal J,
	\end{align}
	where the last inequality is implied by \cref{eq:Gdecayest}. Consequently, $\left(y_*,u_*\right)$ is indeed a minimizer of \cref{prob:P} and equality holds in \cref{eq:LimitMinimizer}. Thus,
	\begin{align*}
	\mathcal J\left(y_*,u_*\right)=\liminf_{k\to\infty}\mathcal J\left(y_{s_k},u_{s_k}\right)\leq\limsup_{k\to\infty}\mathcal J\left(y_{s_k},u_{s_k}\right)\leq\limsup_{k\to\infty}\mathcal J_{s_k}\left(y_{s_k},u_{s_k}\right)=\min\mathcal J
	\end{align*}
	and also equality holds everywhere. This yields
	\begin{align}\label{eq:normlim}
	&\frac{1}{q}\suma \w\left\|\f_{*,+}\right\|_{L^q\left(\gamma_{T}^+,d\gamm\right)}^q+\frac{1}{r}\left\|u_*\right\|_{\mathcal U}^r=\mathcal J\left(y_*,u_*\right)=\lim_{k\to\infty}\mathcal J\left(y_{s_k},u_{s_k}\right)\nonumber\\
	&=\lim_{k\to\infty}\frac{1}{q}\suma \w\left\|\f_{s_k,+}\right\|_{L^q\left(\gamma_{T}^+,d\gamm\right)}^q+\frac{1}{r}\left\|u_{s_k}\right\|_{\mathcal U}^r.
	\end{align}
	Combining \cref{eq:LimitMinimizer,eq:normlim} implies
	\begin{align*}
	\lim_{k\to\infty}\frac{s_k}{2}\left\|\mathcal G\left(y_{s_k},u_{s_k}\right)\right\|_{\Lambda^*}^2=\lim_{k\to\infty}\mathcal J_{s_k}\left(y_{s_k},u_{s_k}\right)-\mathcal J\left(y_{s_k},u_{s_k}\right)=\mathcal J\left(y_*,u_*\right)-\mathcal J\left(y_*,u_*\right)=0.
	\end{align*}
	There remains to show that the convergences of $\f_{s_k,+}$ and $u_{s_k}$ are even strong. To this end, suppose that
	\begin{align*}
	\left\|f_{*,+}^{\alpha_0}\right\|_{L^q\left(\gamma_{T}^+,d\gamm\right)}<\limsup_{k\to\infty}\left\|f_{s_k,+}^{\alpha_0}\right\|_{L^q\left(\gamma_{T}^+,d\gamm\right)}
	\end{align*}
	for some $\alpha_0$. By weak lower semi-continuity of the remaining norms and by \cref{eq:normlim}, this implies
	\begin{align*}
	&\frac{1}{q}\suma \w\left\|\f_{*,+}\right\|_{L^q\left(\gamma_{T}^+,d\gamm\right)}^q+\frac{1}{r}\left\|u_*\right\|_{\mathcal U}^r\\
	&<\limsup_{k\to\infty}\left\|f_{s_k,+}^{\alpha_0}\right\|_{L^q\left(\gamma_{T}^+,d\gamm\right)}^q+\frac{1}{q}\sum_{\substack{\alpha=1\\\alpha\neq\alpha_0}}^N\w\liminf_{k\to\infty}\left\|\f_{s_k,+}\right\|_{L^q\left(\gamma_{T}^+,d\gamm\right)}^q+\liminf_{k\to\infty}\frac{1}{r}\left\|u_{s_k}\right\|_{\mathcal U}^r\\
	&\leq\limsup_{k\to\infty}\frac{1}{q}\suma \w\left\|\f_{s_k,+}\right\|_{L^q\left(\gamma_{T}^+,d\gamm\right)}^q+\frac{1}{r}\left\|u_{s_k}\right\|_{\mathcal U}^r=\frac{1}{q}\suma \w\left\|\f_{*,+}\right\|_{L^q\left(\gamma_{T}^+,d\gamm\right)}^q+\frac{1}{r}\left\|u_*\right\|_{\mathcal U}^r,
	\end{align*}
	which is a contradiction. Thus,
	\begin{align*}
	\left\|\f_{*,+}\right\|_{L^q\left(\gamma_{T}^+,d\gamm\right)}\leq\liminf_{k\to\infty}\left\|\f_{s_k,+}\right\|_{L^q\left(\gamma_{T}^+,d\gamm\right)}\leq\limsup_{k\to\infty}\left\|\f_{s_k,+}\right\|_{L^q\left(\gamma_{T}^+,d\gamm\right)}\leq\left\|\f_{*,+}\right\|_{L^q\left(\gamma_{T}^+,d\gamm\right)},
	\end{align*}
	whence
	\begin{align*}
	\left\|\f_{*,+}\right\|_{L^q\left(\gamma_{T}^+,d\gamm\right)}=\lim_{k\to\infty}\left\|\f_{s_k,+}\right\|_{L^q\left(\gamma_{T}^+,d\gamm\right)}
	\end{align*}
	for each $\alpha$. Similarly, there holds
	\begin{align*}
	\left\|u_*\right\|_{\mathcal U}=\lim_{k\to\infty}\left\|u_{s_k}\right\|_{\mathcal U}
	\end{align*}
	as well. Since we already have $\f_{s_k,+}\wto\f_{*,+}$ in $L^q\left(\gamma_{T}^+,d\gamm\right)$ and $u_{s_k}\wto u_*$ in $\mathcal U$, and since $L^q\left(\gamma_{T}^+,d\gamm\right)$ and $\mathcal U$ are uniformly convex, $\f_{s_k,+}$ even converges strongly to $\f_{*,+}$ and $u_{s_k}$ strongly to $u_*$.
\end{proof}
Note that the convergences of $\f_{s_k,+}$ and $u_{s_k}$ are strong, which is due to the fact that the original objective function $\mathcal J$ is an expression in $\f_+$ and $u$. Since the actual goal is to adjust $u$ suitably and $u$ is the only function which can be really adjusted from outside, it is no big drawback to have to consider \cref{prob:Ps} instead of \cref{prob:P}: As we have seen in \cref{thm:passlimit}, on the one hand $\left\|\mathcal G\left(y_s,u_s\right)\right\|_{\Lambda^*}$ decays with a certain rate to zero for $s\to\infty$, whence \eqref{eq:WholeSystem} is \enquote{almost} satisfied for a minimizer $\left(y_s,u_s\right)$ and $s$ large; on the other hand, first order optimality conditions for \cref{prob:Ps} have been established in \cref{thm:OCs} and optimal points of \cref{prob:Ps} converge (at least weakly) to an optimal point of \cref{prob:P} (along a suitable sequence), and the convergence of the controls is even strong. We can not expect to get convergence for the full limit $s\to\infty$ since minimizers of \cref{prob:P,prob:Ps} are not known to be unique due to the lack of convexity. Of course, the first order necessary optimality conditions are far from being sufficient.

\section{Some final remarks}\label{sec:finalremarks}
One can consider other optimal control problems than \cref{prob:P}, with a different objective function, for example a problem of tracking type:
\begin{align*}
\tilde{\mathcal J}\left(y,u\right)&=\mathcal J\left(y,u\right)+\suma\frac{b_\alpha}{2}\left\|\f-\f_d\right\|_{L^2\left(\left[0,T\right]\times\Omega\times\R^3\right)}^2+\frac{b_E}{2}\left\|E-E_d\right\|_{L^2\left(\left[0,T\right]\times\R^3;\R^3\right)}^2\\
&\phantom{=\;}+\frac{b_H}{2}\left\|H-H_d\right\|_{L^2\left(\left[0,T\right]\times\R^3;\R^3\right)}^2,
\end{align*}
where $b_\alpha,b_E,b_H>0$ are parameters and $\f_d,$ $E_d$, $H_d$ are desired states. Since this new objective function already grants coercivity in $\f$, $E$, and $H$ with respect to the $L^2$-norm, at first sight it seems that the artificial constraint \cref{eq:constrener} can be abolished. However, without this constraint, we can not pass to the limit in the term of \cref{eq:Maxwellweak1} with $j^\inte$ during an analogue proof of \cref{thm:ExistenceMinimizer}, since for this an $L_\kin^1$-estimate on $\f$ is necessary, cf. \cref{lma:jintconstr}. Thus, imposing \cref{eq:constrfinfty,eq:constrener} is still necessary. Analogues of \cref{thm:ExistenceMinimizer,thm:ExistenceMinimizerPs,thm:OCs,thm:passlimit} can be proved, and in \cref{thm:passlimit} the convergences of $\f_{s_k}$, $E_{s_k}$, and $H_{s_k}$ are also strong in $L^2$ because of the tracking terms in the new objective function.

We could also consider the case that we additionally try to control the system by inserting particles from outside, that is, adding some $\g\geq 0$ to the right-hand side of \cref{eq:WholeBoun} and treating them as controls as well. Then we add some norm of the $\g$ to the objective function as a penalization term. There occur two problems: First, since \cref{eq:constrfinfty} is still necessary and since we have to include $L^\infty$-norms of the $\g$ there on the right-hand side, the set of functions satisfying this new constraint is no longer convex. We can bypass this problem by imposing $L^\infty$-bounds on the $\g$ a priori, for example by imposing box constraints. Second, we have to add the $L_\kin^1$-norms of the $\g$ to the right-hand side of \cref{eq:constrener}. To be then able to pass to the limit in \cref{eq:constrener}, we need that the space the $\g$ lie in is compactly embedded in $L_\kin^1\left(\gamma_{T}^-,d\gamm\right)$ -- this is analogue to the consideration of $\mathcal U$ as the control space instead of simply $L^2$. That compact embedding is for example guaranteed by the restriction $\g\in H^1\left(\gamma_T^-\cap\left\{\left|v\right|\leq r\right\}\right)$ and $\g=0$ for $\left|v\right|>r$ with $r>0$ fixed. Another possibility is to impose an a priori bound on the $L_\kin^1$-norms of the $\g$, for example by imposing box constraints as above and a bound on the support of the $\g$ with respect to $v$, and then adding this a priori bound to the right-hand side of \cref{eq:constrener} instead of the $L_\kin^1$-norms of the $\g$.

In \cref{thm:passlimit}, a suitable sequence of optimal points of \cref{prob:Ps} converges to an optimal point of \cref{prob:P}, at least weakly, some components even strongly. However, we do not know if \textit{all} minimizers of \cref{prob:P} can be \enquote{obtained} in this way. In \cite{Lio85}, usually an approximate problem with an adaptive objective function is considered, in order to derive first order optimality conditions for \textit{any} given, fixed minimizer of \cref{prob:P}. Here, this means adding norms of $\f-\f_*$, $\f_+-\f_{*,+}$, $E-E_*$, $H-H_*$, and $u-u_*$ to $\mathcal J$. With an analogue of \cref{thm:passlimit}, one can then show that $\left(y_s,u_s\right)$ converges strongly to $\left(y_*,u_*\right)$ in a suitable norm, and this holds for the \textit{full} limit $s\to\infty$. However, this method is not constructive since one has to know $\left(y_*,u_*\right)$ a priori to consider the approximate problem, and thus in our case not reasonable; in general it is reasonable if one can pass to the limit in the first order optimality conditions.

\nocite{*}
\bibliography{ocrvma}
\bibliographystyle{plain}
\end{document}